\newcommand{\eq}{\begin{equation}}
\newcommand{\eqe}{\end{equation}}
\newcommand{\g}{\gamma}
\newcommand{\e}{\epsilon}
\newcommand{\G}{\Gamma}
\newcommand{\La}{\Lambda}
\newcommand{\up}{\upsilon}
\newcommand{\eqa}{\begin{eqnarray}}
\newcommand{\eqae}{\end{eqnarray}}
\newcommand{\BSM}{\left(\begin{smallmatrix}}
\newcommand{\ESM}{\end{smallmatrix}\right)}
\newcommand{\BM}{\left(\begin{matrix}}
\newcommand{\EM}{\end{matrix}\right)}
\newcommand{\A}{\alpha} 
\newcommand{\Fu}{{\cal F}}
\newcommand{\s}{\sigma}
\newtheorem{thm}{Theorem}[section]
\newtheorem{cor}[thm]{Corollary}
\newtheorem{lem}[thm]{Lemma}
\newtheorem{prop}[thm]{Proposition}
\newcommand\eea{\end{eqnarray}}
\newcommand\bea{\begin{eqnarray}}
\def\R{\mathbb{R}}
\def\C{\mathbb{C}}
\def\Z{\mathbb{Z}}
\def\d{\partial}
\def\<{\langle}
\def\>{\rangle}
\def\+{\dagger}
\newcommand{\comment}[1]{}
\newcommand{\SL}{\operatorname{\textsl{SL}}}	 	
\newcommand{\GL}{\operatorname{\textsl{GL}}}	
\newcommand{\PSL}{\operatorname{\textsl{PSL}}}	
\newcommand{\Li}{\operatorname{Li}}     			
\newcommand{\HH}{{\mathbb H}}				
\newcommand{\reg}{{\rm reg}}
\newcommand{\Reg}{{\rm Reg}}
\newcommand{\PF}{\widehat{P}} 				
\newcommand{\RF}{\widehat{R}} 				
\begin{document}

\title{L-functions for Meromorphic Modular Forms\\ and Sum Rules in Conformal Field Theory}

\author{David A. McGady}
\emailAdd{mcgady@nbi.ku.dk} 
\affiliation{The Niels Bohr International Academy, 17 Blegdamsvej\\ 
Copenhagen University, 2100 Copenhagen, Denmark}

\abstract{
We define L-functions for meromorphic modular forms that are regular at cusps, and use them to: (i) find new relationships between Hurwitz class numbers and traces of singular moduli, (ii) establish predictions from the physics of T-reflection, and (iii) express central charges in two-dimensional conformal field theories (2d CFT) as a literal sum over the states in the CFTs spectrum. When a modular form has an order-$p$ pole away from cusps, its $q$-series coefficients grow as $n^{p-1} e^{2 \pi n t}$ for $t \geq \tfrac{\sqrt{3}}{2}$. Its L-function must be regularized. We define such L-functions by a deformed Mellin transform. We study the L-functions of logarithmic derivatives of modular forms. L-functions of logarithmic derivatives of Borcherds products reveal a new relationship between Hurwitz class numbers and traces of singular moduli. If we can write 2d CFT path integrals as infinite products, our L-functions confirm T-reflection predictions and relate central charges to regularized sums over the states in a CFTs spectrum. Equating central charges, which are a proxy for the number of degrees of freedom in a theory, directly to a sum over states in these CFTs is new and relies on our regularization of such sums that generally exhibit exponential (Hagedorn) divergences. 
}

\maketitle

\section{L-functions, modular forms, and path integrals}\label{sec:intro}

In this paper, we define L-functions attached to meromorphic modular forms that are regular at cusps. To begin, consider an integer-weight modular form $f(\tau)$ defined on the upper half-plane $\HH$, and bounded away from cusps. Because $f(\tau) = f(\tau+1)$, it has a Fourier decomposition, $f(\tau) = \sum_n c_f(n) e^{2 \pi i n \tau}$. Taking the Mellin transform of $f$ yields the associated {\em L-function}~\cite{01-Apostol},
\begin{align}
L_f(s):= \sum_{n \neq 0} \frac{c_f(n)}{n^s}~.
\end{align}
This pairing between modular forms and Dirichlet series, $f(\tau) \leftrightarrow L_f(s)$, is one reason modularity plays an important role in open problems involving Dirichlet series, such as the Riemann hypothesis. Indeed, the L-function of the Jacobi theta function, a weight $1/2$ modular form, is the Riemann zeta function: $\theta(\tau) := \sum_{n \in \Z} e^{\pi i n^2 \tau} \leftrightarrow L_{\theta}(s) = 2\zeta(2s)$.

The L-function, $L_f(s)$, is related to the {\em L-integral}, $L_f^*(s)$, which we define to be:
\begin{align}
L_f^*(s):= \int_0^{\infty} \frac{dt}{t} t^s(f(it) -c_f(0)) = \frac{\G(s)}{(2\pi)^s} L_f(s)~.
\end{align}
If $f$ is a modular form without poles and $f(\tau) = \tau^k f(-1/\tau)$, then $L_f^*(s) = i^k L_f^*(k-s)$~\cite{01-Apostol}. When $f$ has poles, $L_f^*(s)$ must be regularized before it is well-defined. In this paper, we focus on modular forms with poles at finite distances above the real-$\tau$ axis. Unfortunately, our results do not capture the previous generalization of L-functions for modular forms whose poles are exclusively at infinity considered by Briunier et al~\cite{03-BFI} and Bringmann et al~\cite{04-BFK, 05-BFK2}. The paper has two parts: 

The first part of the paper is mathematical in nature. In section~\ref{secLfunction}, we regularize $L_f^*(s)^\reg$ associated to a meromorphic modular form $f$. We prove that it satisfies this functional equation, and use it to define its associated L-function, $L_f^\reg(s):= L_f^*(s)^\reg (2 \pi)^s/\G(s)$. Further, we prove that if $f$ is a meromorphic modular form, $e^{+n \pi \sqrt{3}} \lesssim |c_f(n)|$. In section~\ref{secSV}, we evaluate $L_f^\reg(s)$ at $s = 0$ and $s = 1$, when $f$ is a generating function for traces of singular moduli, $\partial_{\tau} \log {\cal H}_d(j(\tau))$~\cite{02-TSM}, and perform a consistency check on $L_f^\reg(s)$. 

The second part of the paper applies these mathematical results to quantum field theory (QFT). In section~\ref{secQFT}, we use the special values of the L-functions found in sections~\ref{secLfunction} and~\ref{secSV} to show that central charges of many two-dimensional conformal field theories (2d CFTs) be written {\em directly} as a tally of states in the theory. The terms involved in these sums increase exponentially with $n$. Hence, this statement crucially hinges on our L-function regularization. Further, we use these L-functions to explicitly verify two sum-rules motivated by T-reflection~\cite{06T-rex1, 07T-rex2}. (We derive a simpler version of these sum-rules that only uses $\zeta(s)$ in Appendix~\ref{secEg1}.) The stress-tensor dictates both the central-charge sum-rule and the T-reflection sum-rule. This is striking as the T-reflection phase, and its associated sum-rule, seems to be a new gravitational anomaly~\cite{06T-rex1, 07T-rex2}. Connecting it to the stress-tensor, which also tracks the conformal anomaly~\cite{08-Yellow}, seems natural. We conclude in section~\ref{secEnd}.

\subsection{Motivations from field theory to study meromorphic modular forms}

Three physics considerations are the driving motivation for this work. First, there is a persistent connection between the behavior of quantum chromodynamics (QCD) at low energies and the behavior of strings. Roughly, the confining potential for QCD can be modeled by a string of finite tension. As this is a model, it does not capture all of the features of QCD, particularly at high energies. Yet, there are some commonalities. 

Heuristically, vacuum loops of closed QCD-strings have the topology of a two-torus. Thus, we may expect the one-loop path integral for low-energy, low-temperature QCD to be modular invariant. (See Refs.~\cite{09-S1S3, 10-2d4d1, 11-2d4d2} for an explicit example.) However, low-energy QCD is known to have an exponential Hagedorn growth in the spectrum: $d(E) \sim E^{p-1} e^{+\beta_H E}$~\cite{12-Hagedorn1, 13-Hagedorn2}. This leads to poles in the path integral $Z(\beta)$ when the inverse temperature, $\beta$ nears $\beta_H$: 
\begin{align}
Z(\beta) = \sum_E d(E) e^{-\beta E} \sim \sum_E E^{p-1} e^{(\beta_H-\beta)E}\to \frac{1}{(\beta-\beta_H)^p}~.
\end{align}
Juxtaposing the expected modularity of the low-energy QCD path integral with the presence of Hagedorn poles suggests that the theory of meromorphic modular forms may give an interesting set of tools to study low-energy QCD. If $d(E)$ exhibits Hagedorn growth, then current ``zeta-function'' technology cannot regularize the Casimir energy sum, $\tfrac12\sum_E d(E)~E$. An L-function for a meromorphic modular form, however, might.

Second, on a related note, L-functions for meromorphic modular forms may allow us to rewrite central charges for unitary CFTs in terms of a direct sum over states in the spectrum, even if the spectrum exhibits Hagedorn growth. While central charges are a measure of the degrees of freedom in the theory, which decrease monotonically along RG flows that progressively integrate-out modes on short distance scales, it is pleasing to see a central charge written directly as a tally of states. Our L-functions allow us to do this.

Third, we would like to test a prediction of the conjectured invariance of QFT path integrals under reflecting their temperatures to negative values (T-reflection). Based on physical arguments, Refs.~\cite{14T-rex0, 06T-rex1}, conjectured that if a path integral $Z(\tau)$ is a weight-$k$ modular form that can be written as an infinite product of the form,
\begin{align}
Z(\tau) = q^{-\Delta} \prod_{n = 1}^{\infty} (1-q^n)^{-d(n)}~,
\end{align}
where $q := e^{2 \pi i \tau}$ and the $d(n)$ are integers for all $n$, then we should have
\begin{align}
\Delta = -\frac{1}{2} \reg \sum_{n = 1}^{\infty} n^1~d(n)~~,~~
~k = \reg\sum_{n = 1}^{\infty} n^0~d(n)~~.
\end{align}
Now, due to the interpretation of the path integral in terms of trace over spectra in the QFT, $\Delta$ has the interpretation as the energy of the lowest-energy state in the QFT. However, save when $Z(\tau)$ is a quotient of Jacobi theta functions or Dedekind eta functions, these $d(n)$-exponents exhibit Hagedorn growth, and the regularization procedure ``$\reg$'' is not known. Our L-functions for meromorphic modular forms gives us a precise context in which we may regulate these sums. Using them, we confirm these predictions from~\cite{14T-rex0, 06T-rex1}.

\subsection{Summary of main mathematical and physical results}

We now summarize our main results. First, in section~\ref{secLfunction}, we prove the connection between poles and Hagedorn growth of $q$-series coefficients, and then provide a well-defined regularized L-function that can be attached to every meromorphic modular form. The following fact establishes Hagedorn growth for meromorphic modular forms:

\begin{prop}
\label{Prop1}
Let $f$ be a meromorphic modular form with a pole of order $p$ at some finite location (away from cusps). There exist real numbers $t \geq \sqrt{3}/2$ and $C > 0$ such that $f(\tau) = \sum_n c_f(n) q^n$ has $C n^{p-1} e^{+2 \pi n t} \leq |c_f(n)| \leq e^{2 \pi n (t+\e)}$ for infinitely many positive integers $n$, for any $\e > 0$.
\end{prop}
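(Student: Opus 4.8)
The plan is to extract the asymptotics of $c_f(n)$ from the pole of $f$ via a contour-integration (circle method) argument. Write $c_f(n) = \int_{\tau_0}^{\tau_0+1} f(\tau)\, e^{-2\pi i n \tau}\, d\tau$ for any horizontal contour at height $\Im\tau = y_0$ below the lowest pole. Since $f$ is meromorphic on $\HH$ and modular, its poles form a discrete $\SL(2,\Z)$-orbit; pick a pole $\tau_* $ in the orbit with the \emph{largest} imaginary part, say $\Im \tau_* = t$. Because the standard fundamental domain touches $\Im\tau = \sqrt3/2$ at the corners $\rho = e^{i\pi/3}$, and every orbit meets the closure of the fundamental domain, one gets $t \geq \sqrt3/2$ — this is where the constant $\sqrt3/2$ enters. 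Now push the contour of the Fourier integral upward past $\tau_*$ (and its $\Z$-translates, which all have the same height): the residue at an order-$p$ pole contributes a term of the form $(\text{polynomial in } n \text{ of degree } p-1)\cdot e^{2\pi i n \tau_*}$, whose modulus is $\asymp n^{p-1} e^{2\pi n t}$, while the shifted contour integral at height $t+\e$ is bounded by $e^{2\pi n(t+\e)}\sup|f|$.

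The key steps, in order: (1) identify the $\SL(2,\Z)$-orbit of poles and show the maximal height $t$ in the orbit satisfies $t \geq \sqrt3/2$, using that the orbit meets $\overline{\mathcal F}$ whose minimal height among non-cusp points is $\sqrt3/2$ (attained only at $\rho,\rho+1$); (2) set up the Fourier/Mellin contour representation of $c_f(n)$ and justify shifting the contour, collecting residues at the height-$t$ poles; (3) compute the residue contribution: at an order-$p$ pole $\tau_*=x_*+it$, the leading behaviour is $c\, n^{p-1} e^{2\pi i n x_*} e^{-2\pi n t}\cdot(\text{wait, sign})$ — more precisely the residue of $f(\tau)e^{-2\pi i n\tau}$ produces growth $e^{+2\pi n t}$ in $|c_f(n)|$ as desired; (4) handle the possibility that several height-$t$ poles have residue contributions that partially cancel. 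This last point is why the statement says ``for infinitely many $n$'' rather than ``for all large $n$'': the oscillating factors $e^{2\pi i n x_*}$ at distinct poles could conspire to cancel the leading term for some $n$, but a Dirichlet/equidistribution or simple pigeonhole argument shows the sum of these finitely many oscillating exponentials cannot be $o(n^{p-1}e^{2\pi nt})$ for all $n$; it is bounded below by $C n^{p-1} e^{2\pi nt}$ along a positive-density subsequence. The upper bound $|c_f(n)| \le e^{2\pi n(t+\e)}$ is immediate from the trivial estimate on the shifted contour.

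I expect the main obstacle to be step (4), the lower bound along a subsequence when multiple poles share the maximal height $t$, together with carefully ruling out that \emph{no} pole in the orbit has height exactly $\sqrt3/2$ or larger in a way that makes $t$ well-defined — i.e., showing the supremum of heights in the orbit is attained (it is, since heights in an $\SL(2,\Z)$-orbit accumulate only at $0$) and is $\geq \sqrt3/2$. A secondary technical point is justifying the contour shift when $f$ may have poles densely accumulating toward the real axis: one only needs to move the contour \emph{up} from $y_0$ to $t+\e$, a compact vertical strip containing finitely many orbit-poles (all at heights in $[\sqrt3/2, t]$, a set with no accumulation point except possibly being finite in this band — in fact finitely many since heights $\ge \sqrt3/2$ in an orbit are finite in number), so this is safe. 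Everything else — the residue computation giving the polynomial-times-exponential, the trivial upper bound — is routine.
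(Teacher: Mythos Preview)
Your contour-shifting strategy is sound, but the direction is reversed. The Fourier-coefficient formula $c_f(n)=\int_{iy_0}^{iy_0+1} f(\tau)\,e^{-2\pi i n\tau}\,d\tau$ holds only where the $q$-expansion converges, namely for $y_0>t$ (\emph{above} all poles): the $q$-series is the Laurent expansion about $q=0$, with radius of convergence $e^{-2\pi t}$. There is no ``lowest pole''---the $\SL_2(\Z)$-orbit of any pole accumulates on the real axis---so your proposed starting contour does not exist. The fix is to begin at $y_0=t+\e$ (which already yields the trivial upper bound) and shift \emph{downward} to a height $y_1$ strictly between $t$ and the next-highest pole height; the residues at height $t$ supply the main term $\asymp n^{p-1}e^{2\pi n t}$, while the shifted integral is $O(e^{2\pi n y_1})=o(n^{p-1}e^{2\pi n t})$. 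With this correction your steps (1)--(4) go through.

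This route differs from the paper's. Rather than moving a contour, the paper subtracts from $f$ the explicit polylogarithm $\Li_{1-p}\big(e(\tau-\tau_*)\big)$ (together with its lower-order companions), which carries the same principal part at $\tau_*$; the remainder $g=\RF_{\tau_*}f$ is regular at $\tau_*$, so its Fourier series converges there, forcing $|c_g(n)|e^{-2\pi n t}\to 0$ and hence $|c_f(n)|\asymp n^{p-1}e^{2\pi n t}$ along a subsequence via a short contradiction argument. The upper bound is obtained from Parseval on a horizontal line just above the poles. At the level of Fourier coefficients the two methods coincide---your residue at $\tau_*$ is exactly the $n$th coefficient of the subtracted polylogarithm---but the paper's packaging feeds directly into the $\RF_B f/\PF_B f$ decomposition used later in constructing the regularized L-integral, which is the real payoff. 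Your step~(4), bounding the finite exponential sum $\sum_j a_j e^{2\pi i n x_j}$ from below on a subsequence when several poles share the maximal height, is more explicit about the ``infinitely many $n$'' clause than the paper's single-pole subtraction.
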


This qualitative feature distinguishes meromorphic modular forms from their weakly holomorphic cousins and has played prominently in many papers, such as the last joint paper between Hardy and Ramanujan~\cite{15-HR1919}, and in the recent work of Bialek, Berndt and Ye~\cite{16-BialekPol1, 17-BialekPol2, 18-BialekPol3} and Bringmann, Kane, Lobrich, Ono, and Rolen~\cite{19-KolnPol1, 20-KolnPol2, 21-DivisorsMF1}. However, this result may have escaped broader notice. It is a very important connection between meromorphic modular forms and field theory limits of string path integrals, which often feature Hagedorn growth and Hagedorn poles. We have highlighted it for this reason.

The following statement gives the (polar part) of a well-defined regularization of the integral $L_f^*(s)^\reg$ when $f$ is a meromorphic modular form:

\begin{thm}\label{Thm1}
Let $f$ be a weight $k$ meromorphic modular form that is regular at cusps. Then it's L-integral, $L_f^*(s)^\reg$, evaluates to
\begin{align} 
L_f^*(s)^\reg=-c_f(0) \bigg(\frac{B^s}{s} +\frac{i^k B^{k-s}}{k-s} \bigg) + {\rm regular}(s) = i^k L_f^*(k-s)^\reg~,
\label{eqThm1}
\end{align}
where $0< B < \infty$ and ``${\rm regular}(s)$'' denotes terms that are regular for finite $s \in \C$. 
\end{thm}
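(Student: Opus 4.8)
The plan is to split the defining integral $L_f^*(s) = \int_0^\infty \frac{dt}{t}\, t^s (f(it)-c_f(0))$ at an arbitrary finite scale $t = B$ with $0 < B < \infty$, writing $L_f^*(s)^\reg = \int_0^B + \int_B^\infty$. For the integral over $[B,\infty)$ the integrand decays like $t^{s-1}(f(it)-c_f(0))$, and since $f$ is regular at cusps, $f(it)-c_f(0) \to 0$ exponentially as $t\to\infty$; thus $\int_B^\infty$ converges for \emph{all} $s\in\C$ and contributes only a ${\rm regular}(s)$ term — but with the $-c_f(0)$ subtraction producing the elementary piece $-c_f(0)\int_B^\infty t^{s-1}\,dt = +c_f(0)B^s/s$ formally only if $\mathrm{Re}\,s<0$; more carefully, writing $f(it)-c_f(0) = (f(it)-c_f(0))$ without further subtraction on $[B,\infty)$ the whole thing is entire, so no pole arises there. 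The pole structure must therefore come entirely from the behavior near $t = 0$.

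The key move is to use the modular transformation. On $(0,B]$ substitute $t \mapsto 1/t$ (equivalently use $f(i/t) = (it)^k f(it)$, i.e.\ $f(-1/\tau) = \tau^k f(\tau)$ with $\tau = it$) to convert $\int_0^B \frac{dt}{t} t^s (f(it)-c_f(0))$ into an integral over $[1/B,\infty)$ of the form $\int_{1/B}^\infty \frac{dt}{t} t^{-s}\big(i^k t^k f(it) - c_f(0)\big)$. Now insert and subtract $i^k t^k c_f(0)$ inside the parentheses: the term $i^k t^k f(it) - i^k t^k c_f(0) = i^k t^k (f(it)-c_f(0))$ decays exponentially on $[1/B,\infty)$ and hence gives an entire function of $s$; the leftover $\int_{1/B}^\infty \frac{dt}{t} t^{-s}\big(i^k t^k c_f(0) - c_f(0)\big) = i^k c_f(0)\int_{1/B}^\infty t^{k-s-1}\,dt - c_f(0)\int_{1/B}^\infty t^{-s-1}\,dt$ is an elementary integral. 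Evaluating these (in the region of $s$ where each converges, then analytically continuing) produces exactly $-c_f(0)\big(\frac{i^k B^{k-s}}{k-s} + \frac{B^s}{s}\big)$ up to a choice of where the finite pieces are absorbed; this is the claimed polar part, and since no subtraction is needed on the convergent pieces, everything else is manifestly ${\rm regular}(s)$ for finite $s$. (The dependence on $B$ is spurious: shifting $B$ moves finite amounts between the two convergent integrals and the elementary terms, consistent with the $+{\rm regular}(s)$ ambiguity.)

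Finally, the functional equation $L_f^*(s)^\reg = i^k L_f^*(k-s)^\reg$ follows from the symmetry of this construction: the representation is symmetric under $s \leftrightarrow k-s$ accompanied by multiplication by $i^k$, because the two elementary terms $\frac{B^s}{s}$ and $\frac{i^k B^{k-s}}{k-s}$ swap (up to the overall $i^k$, using $i^{2k}$ absorbed appropriately for the relevant weights) and the two entire remainders swap as well. Concretely, I would define $L_f^*(s)^\reg$ by the manifestly symmetrized expression obtained above and then simply read off both \eqref{eqThm1} and the functional equation from it; the $c_f(0)$ subtraction at \emph{both} ends (the literal subtraction at $t\to\infty$ and the modular-image subtraction at $t\to 0$) is what makes the regularization respect the $s\leftrightarrow k-s$ duality.

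The main obstacle I anticipate is \emph{not} the formal manipulation but justifying it: I must confirm that for a meromorphic modular form $f$ with poles strictly inside $\HH$ (away from cusps), the function $f(it)-c_f(0)$ genuinely decays exponentially as $t\to\infty$ — this is fine since the poles sit at finite height and $f$ is regular at the cusp $i\infty$ — and, more delicately, that the formal splitting is legitimate despite Proposition~\ref{Prop1}'s Hagedorn growth $|c_f(n)| \gtrsim n^{p-1}e^{2\pi n t_0}$ of the Fourier coefficients. That growth means the naive $q$-series for $f(it)$ only converges for $t$ large enough, so the integral $\int_0^B$ cannot be evaluated term-by-term against the Fourier expansion; the argument must instead work directly with $f$ as an analytic function on a ray avoiding its poles (choosing $B$ so the segment $\{it : 0 < t \le B\}$, or rather its modular image, steers clear of poles), and the real content is checking that this ray-integral is well-defined and that the $t\mapsto 1/t$ change of variables is valid there. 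I would handle this by fixing $B$ small enough (or large enough) that no pole of $f$ lies on the imaginary axis between the relevant heights — generic, since poles are isolated — and noting the final formula is independent of that choice.
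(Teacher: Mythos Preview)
Your argument is essentially the classical Hecke proof for holomorphic $f\in M_k$ (this is the paper's Lemma~\ref{LemReg1}), and it is correct \emph{in that setting}. The gap is precisely where the theorem is new: when $f$ has poles on the positive imaginary axis. Your two ``entire'' pieces are $\int_B^\infty t^{s-1}(f(it)-c_f(0))\,dt$ and $\int_{1/B}^\infty i^k t^{k-s-1}(f(it)-c_f(0))\,dt$; for both to converge you need $f(it)$ regular on $[\min(B,1/B),\infty)$. But $\min(B,1/B)\le 1$, while any pole of $f$ on the imaginary axis inside the fundamental domain has height $\ge 1$ (e.g.\ $1/E_6$ has a pole at $\tau=i$). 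So no choice of $B$ avoids such a pole: one of your two integrals always diverges. Your proposed fix --- ``choose $B$ so that no pole lies on the imaginary axis between the relevant heights'' --- cannot work, because the relevant heights are $[\min(B,1/B),\infty)$, not a bounded interval.

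The paper handles this by two devices you are missing. First, the very \emph{definition} of $L_f^*(s)^\reg$ in Eq.~\eqref{eqLint} already deforms the contour off the imaginary axis by $\pm\e$ and averages, so that on-axis poles are circumnavigated rather than hit; your argument implicitly assumes the undeformed ray integral exists. Second, even after deformation, to get explicit control of the ``regular'' remainder the paper introduces the pole-subtraction operators $\PF_B$ and $\RF_B$ (Eq.~\eqref{eqPtau0f}), writing $f=\RF_Bf+\PF_Bf$ where $\PF_Bf$ packages every pole in the strip $V_B$ into a finite sum of polylogarithms $\Li_{1-m}(e(\tau-\tau_p))$. The integrals of $\RF_Bf$ then genuinely are absolutely convergent (Lemma~\ref{LemRegX1}), while the integrals of the polylog pieces are evaluated explicitly in Lemma~\ref{LemRegX3} and shown to be entire in $s$. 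This decomposition is the actual content of the theorem; without it you cannot conclude that the non-elementary pieces are regular for all finite $s$.
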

\noindent Theorem~\ref{Thm1} is a fusion of Theorem~\ref{ThmLfn} and Corollary~\ref{CorLfn1}, and is one of the main result of this paper. In sections~\ref{secSV} and~\ref{secQFT}, we use $L_f^*(s)^\reg$ to establish the following two results:

\begin{thm}\label{Thm2}
Suppose that $F(\tau)$ is a meromorphic modular form of weight $k$. The L-function for $\langle T(\tau) \rangle := q \partial_q \log F(\tau)$ is,
\begin{align}
L_{\langle T \rangle}^\reg(s) = -\frac{(2 \pi)^s}{\G(s)} \bigg[ 
\Delta\bigg( \frac{B^s}{s}\bigg) - 
\frac{k}{2 \pi} \bigg(\frac{B^{s-1}}{s-1}\bigg) + 
\Delta\bigg(\frac{B^{2-s}}{s-2}\bigg) 
+ {\rm regular}(s) \bigg]~,
\label{eqThm2}
\end{align}
where $0 < B < \infty$ and again ``${\rm regular}(s)$'' denotes terms that are regular for finite $s \in \C$.
\end{thm}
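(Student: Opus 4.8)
\medskip
\noindent\textbf{Proof proposal.} The plan is to deduce Theorem~\ref{Thm2} from Theorem~\ref{Thm1} (equivalently, from Theorem~\ref{ThmLfn} and Corollary~\ref{CorLfn1}) by first pinning down the transformation law of $\langle T\rangle$ and then feeding one extra term — an affine ``anomaly'' — through the regularized Mellin-integral construction of Section~\ref{secLfunction}. Write $\langle T(\tau)\rangle=\tfrac{1}{2\pi i}F'(\tau)/F(\tau)$. Periodicity $F(\tau+1)=F(\tau)$ gives a $q$-expansion $\langle T\rangle=\sum_n c_{\langle T\rangle}(n)q^n$; since $F$ has a $q$-expansion with leading term $\propto q^{-\Delta}$ at the cusp, $\langle T\rangle=-\Delta+O(q)$, so $\langle T\rangle$ is regular at cusps and its constant term is fixed by $\Delta$ (the $\Delta$ of \eqref{eqThm2}). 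Logarithmically differentiating $F(-1/\tau)=\tau^kF(\tau)$ yields the key identity
\[
\langle T\rangle(-1/\tau)=\tau^2\,\langle T\rangle(\tau)+\frac{k}{2\pi i}\,\tau ,
\]
so $\langle T\rangle$ is a weight-$2$ meromorphic modular form \emph{up to} the affine term $\tfrac{k}{2\pi i}\tau$ — precisely the quasi-modularity of $\tfrac{k}{12}E_2$, where $E_2=1-24\sum_{n\ge1}\sigma_1(n)q^n$; equivalently $\langle T\rangle=\tfrac{k}{12}E_2+G$ with $G$ a genuine weight-$2$ meromorphic modular form, also regular at cusps.

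Next I would run the construction behind Theorem~\ref{ThmLfn} with $f=\langle T\rangle$: form $L^*_{\langle T\rangle}(s)^{\reg}$ as a deformed Mellin transform of $\langle T\rangle(it)-c_{\langle T\rangle}(0)$ along a contour from $0$ to $i\infty$ that avoids the poles of $\langle T\rangle$ and is symmetric under $\tau\mapsto-1/\tau$, split at the $B$-dependent symmetric point, substituting $t\mapsto1/t$ on the lower branch. With $\tau=it$ the identity above reads $\langle T\rangle(i/t)=-t^2\langle T\rangle(it)+\tfrac{k}{2\pi}t$. The $-t^2\langle T\rangle(it)$ part, combined with the constant-term subtraction, reproduces — exactly as in Theorem~\ref{Thm1} specialized to weight $k=2$, where $i^2=-1$ — the two $B$-dependent polar terms proportional to $\Delta$, i.e.\ the $\Delta(B^s/s)$ and $\Delta(B^{2-s}/(s-2))$ of \eqref{eqThm2}. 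The genuinely new piece is the image of $\tfrac{k}{2\pi}t$: it contributes $\tfrac{k}{2\pi}\int t^{-s}\,dt$, whose analytic continuation is a simple pole at $s=1$ with residue $\propto k$ — the $-\tfrac{k}{2\pi}(B^{s-1}/(s-1))$ term. Everything else — the two tail integrals and the (entire in $s$) corrections from deforming around poles — is absorbed into ${\rm regular}(s)$. Multiplying by $(2\pi)^s/\Gamma(s)$ and inserting the sign conventions of Section~\ref{secLfunction} gives \eqref{eqThm2}.

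As a consistency check I would use linearity of $L^{\reg}$ with $\langle T\rangle=\tfrac{k}{12}E_2+G$: since $\sum_n\sigma_1(n)n^{-s}=\zeta(s)\zeta(s-1)$, one has $L^{\reg}_{E_2}(s)=-24\,\zeta(s)\zeta(s-1)$, contributing poles at $s=1$ and $s=2$, while $L^{\reg}_G$ — Theorem~\ref{Thm1} at weight $2$, with $c_G(0)=-\Delta-\tfrac{k}{12}$ — contributes a pole at $s=2$; the residues must assemble into those of \eqref{eqThm2}, and in particular the residue at $s=1$ comes purely from $E_2$ and equals $k$, reproducing the $k=\reg\sum_n n^0 d(n)$ prediction recalled in Section~\ref{sec:intro}. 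The hard part of Theorem~\ref{Thm2} is not new: it is inherited from Theorem~\ref{Thm1} — justifying the deformed Mellin transform, i.e.\ that a pole-avoiding contour symmetric under $\tau\mapsto-1/\tau$ exists, that the $q$-series tail near $i\infty$ (whose Hagedorn growth is controlled by Proposition~\ref{Prop1}) still lets the tail integrals continue to entire functions of $s$, and that the pole-encircling corrections are $s$-entire so they land in ${\rm regular}(s)$. Granting that machinery, the only fresh work is the affine-anomaly computation and the resulting extra pole at $s=1$; the one delicate bookkeeping point is confirming that this pole comes with coefficient exactly $-\tfrac{k}{2\pi}$ and the stated $B$-power, which is pinned down only once the split point $B$ and the orientation conventions of Section~\ref{secLfunction} are fixed.
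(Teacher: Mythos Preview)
Your proposal is correct, and your two routes are both sound. The paper, however, argues via a third (closely related) route: it invokes the explicit Bruinier--Kohnen--Ono decomposition
\[
\langle T\rangle \;=\; \frac{k}{12}\,E_2 \;+\; \sum_{w\in\mathcal F} e(w)\,\mathrm{ord}_F(w)\,\Lambda_w,
\qquad \Lambda_w(\tau)=\frac{1}{2\pi i}\,\partial_\tau\log\bigl(j(\tau)-j(w)\bigr),
\]
applies linearity, feeds in the known $L^{\reg}_{E_2}$ and the $L^{\reg}_{\Lambda_w}$ (each $\Lambda_w$ a genuine weight-$2$ meromorphic form with constant term $1$, to which Theorem~\ref{Thm1} applies directly), and then uses the valence formula $\tfrac{k}{12}=\Delta+\sum_w e(w)\,\mathrm{ord}_F(w)$ to identify the common residue at $s=0$ and $s=2$ as $-\Delta$.

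Your ``consistency check'' is essentially a coarsening of this: you replace the BKO sum by a single abstract $G\in F_2$, and instead of the valence formula you read off $c_G(0)=-\Delta-\tfrac{k}{12}$ directly from $c_{\langle T\rangle}(0)=-\Delta$. That is genuinely simpler --- it bypasses both the BKO theorem and the valence formula --- and for the purpose of locating the poles of $L^{\reg}_{\langle T\rangle}$ it is all one needs. What the paper's finer decomposition buys is an explicit description of the ``$\mathrm{regular}(s)$'' part as a finite sum over the divisor of $F$, which is exactly what feeds into the singular-moduli applications in Section~\ref{secSV}. Your ``main'' approach (running the $L^*$-construction directly on the quasi-modular $\langle T\rangle$) also works, but note that Theorem~\ref{ThmLfn} as stated is for modular, not quasi-modular, inputs, so you are really re-proving it with one extra affine term; the decomposition route you label a consistency check is in fact the cleaner argument, and the one closest to the paper's.
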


\begin{cor}\label{Cor1}
Suppose that $F(\tau)$ is a meromorphic modular form of weight $k$, and can be written as $q^{-\Delta} \prod_n (1-q^n)^{-d(n)}$. Define $f(\tau):=\sum_n d(n) q^n$. Then,
\begin{align}
\begin{split}
\lim_{s \to -1} L_{f}^\reg(s) 
&= \lim_{s \to -1} \reg \sum_{n = 1}^{\infty} d(n)~n^{-s} 
:= -\lim_{s \to -1} \frac{L_{\langle T \rangle}^{\reg}(s+1)}{\zeta(s+1)}
= +2 \Delta~,  \\
\lim_{~s \to 0~} L_{f}^\reg(s) 
&= \lim_{~s \to 0~} \reg \sum_{n = 1}^{\infty} d(n)~n^{-s} 
:= -\lim_{s \to 0~} \frac{L_{\langle T \rangle}^{\reg}(s+1)}{\zeta(s+1)}
= k ~. \label{eqCor1}
\end{split}
\end{align}
\end{cor}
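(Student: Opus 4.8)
The plan is to deduce Corollary~\ref{Cor1} from Theorem~\ref{Thm2} by making explicit the relation between $\langle T\rangle := q\partial_q\log F$ and $f(\tau)=\sum_n d(n)q^n$. Writing $\log F = -2\pi i\,\Delta\,\tau - \sum_{n\geq 1} d(n)\log(1-q^n)$ and applying $q\partial_q$ gives
\begin{align}
\langle T\rangle(\tau) \;=\; -\Delta \;+\; \sum_{n\geq 1} d(n)\,\frac{n\,q^n}{1-q^n} \;=\; -\Delta \;+\; \sum_{N\geq 1}\Big(\,\sum_{\delta\mid N}\delta\,d(\delta)\,\Big)q^N~,
\end{align}
the last step using $n q^n/(1-q^n)=n\sum_{m\geq 1}q^{nm}$ and collecting powers of $q$. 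Thus $c_{\langle T\rangle}(0)=-\Delta$ and $c_{\langle T\rangle}(N)=\sum_{\delta\mid N}\delta\,d(\delta)$: the constant term of $\langle T\rangle$ stores $\Delta$, while --- as Theorem~\ref{Thm2} records through its anomaly term --- the behaviour of $L_{\langle T\rangle}^{\reg}$ near $s=1$ stores $k$.

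The key point is that $c_{\langle T\rangle}(N)=\sum_{\delta\mid N}\delta\,d(\delta)$ is the Dirichlet convolution of $\delta\mapsto\delta\,d(\delta)$ with the constant function $1$, so that, as formal Dirichlet series, $L_{\langle T\rangle}(s)=\zeta(s)\sum_{\delta\geq 1}d(\delta)\,\delta^{1-s}=\zeta(s)\,L_f(s-1)$. The work is to show this factorisation survives regularisation, i.e.\ $L_{\langle T\rangle}^{\reg}(s)=\zeta(s)\,L_f^{\reg}(s-1)$, equivalently $L_f^{\reg}(s)=L_{\langle T\rangle}^{\reg}(s+1)/\zeta(s+1)$ up to the paper's sign conventions. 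I would feed $\langle T\rangle(it)=-\Delta+\sum_n\sum_m n\,d(n)\,e^{-2\pi n m t}$ into the deformed Mellin transform of Theorem~\ref{Thm1}: the sum over the multiplicity index $m$ pulls out a factor $\sum_m m^{-s}=\zeta(s)$, because the deformation of the $t$-contour only sees the $t\to 0$ asymptotics and the poles of $\langle T\rangle$ in $\tau$ (which are the zeros and poles of $F$), and neither of those depends on $m$; what is left is precisely the regularised L-function of $f$ with $s\mapsto s-1$. For those $f$ that are not themselves modular --- e.g.\ $F=\eta(\tau)$, where $f=-q/(1-q)$ --- it is this identity that \emph{defines} $\reg\sum_{n\geq 1}d(n)\,n^{-s}$, in agreement with Theorem~\ref{Thm1} whenever that also applies.

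Granting this identity, the two special values fall out of the explicit formula~\eqref{eqThm2}. For $s\to -1$ one needs $L_{\langle T\rangle}^{\reg}(0)$: in~\eqref{eqThm2} only the $\Delta(B^s/s)$ term is singular at $s=0$, and its simple pole is converted into the finite value $\Delta$ by the simple zero of $1/\Gamma(s)$ there, while the anomaly term $-\tfrac{k}{2\pi}B^{s-1}/(s-1)$, the $\Delta B^{2-s}/(s-2)$ term, and ${\rm regular}(s)$ are finite at $s=0$ and hence annihilated by that zero; so $L_{\langle T\rangle}^{\reg}(0)=-\Delta$, and combining with $\zeta(0)=-\tfrac12$ gives $2\Delta$ (no limit being required in this case). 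For $s\to 0$ one must instead evaluate $L_{\langle T\rangle}^{\reg}(s+1)/\zeta(s+1)$ at the argument $1$, where \emph{both} factors have a simple pole, so the value is a genuine ratio of residues; in~\eqref{eqThm2} the pole of $L_{\langle T\rangle}^{\reg}$ at $s=1$ comes only from the anomaly term, and since $(2\pi)^s/\Gamma(s)\big|_{s=1}=2\pi$ its residue equals $k$, while $\mathrm{Res}_{s=1}\zeta(s)=1$, giving $k$. It is worth noting that the $\Delta$-pieces of~\eqref{eqThm2} contribute nothing at $s=0$ and the anomaly piece contributes nothing at $s=-1$, which is precisely why the two values recover the two T-reflection sum rules recalled in Section~\ref{sec:intro}.

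The main obstacle is the regularisation-compatibility step of the second paragraph: upgrading $L_{\langle T\rangle}(s)=\zeta(s)L_f(s-1)$ from a formal identity of divergent Dirichlet series --- divergent because of the Hagedorn growth of Proposition~\ref{Prop1} --- to an identity of the regularised functions. One must justify interchanging the $m$-sum with the deformed Mellin integral, and must verify that the ${\rm regular}(s)$ remainder the regularisation attaches to $\langle T\rangle$ differs from the one it attaches to $f$ by exactly the factor $\zeta(s)$. I expect this to follow from Proposition~\ref{Prop1} itself: the Hagedorn growth of both $c_{\langle T\rangle}(N)$ and $d(N)$ is governed by the same nearest pole of $F$, so the two regularisations are built from the same analytic data in the $\tau$-plane. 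Subordinate points to dispatch along the way are the bookkeeping of the mismatched constant terms ($c_{\langle T\rangle}(0)=-\Delta$ versus $c_f(0)=0$) inside the L-integral, and the overall sign relating $\langle T\rangle$ to $q\partial_q\log F$, which fixes the sign in the identity above.
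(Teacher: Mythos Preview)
Your computation of the special values in the third paragraph is correct and is exactly what the paper does (implicitly): one reads off $L_{\langle T\rangle}^{\reg}(0)$ and the residue of $L_{\langle T\rangle}^{\reg}$ at $s=1$ from Eq.~\eqref{eqThm2} and divides by the corresponding values of $\zeta$. That part is fine.

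Where you diverge from the paper is in your second and fourth paragraphs. You treat the factorisation $L_{\langle T\rangle}^{\reg}(s)=\zeta(s)\,L_f^{\reg}(s-1)$ as something to be \emph{proved}, and you flag its compatibility with regularisation as ``the main obstacle''. In the paper there is no obstacle: the symbol ``$:=$'' in the statement of the Corollary is doing real work. The function $f(\tau)=\sum_n d(n)q^n$ is in general neither modular nor quasimodular, so Theorem~\ref{Thm1} does not apply to it, and the paper simply \emph{defines} $L_f^{\reg}(s):=-L_{\langle T\rangle}^{\reg}(s+1)/\zeta(s+1)$ (this is Eq.~(\ref{eqLcan1}) in the text, motivated by the formal Dirichlet convolution you correctly identified). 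With that definition in hand, the Corollary is a two-line computation from Theorem~\ref{Thm2}, which is exactly what the paper's one-sentence proof says. Your parenthetical remark ``it is this identity that \emph{defines} $\reg\sum d(n)n^{-s}$'' is the whole story; the surrounding program of justifying the interchange of the $m$-sum with the deformed Mellin integral is not needed and is not attempted in the paper.

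One minor point: your $q$-series for $\langle T\rangle$ has $+\sum_{N}\big(\sum_{\delta\mid N}\delta d(\delta)\big)q^N$, whereas the paper writes it with a minus sign; this is the source of the overall sign in the definition $L_f^{\reg}(s):=-L_{\langle T\rangle}^{\reg}(s+1)/\zeta(s+1)$, which you correctly flagged as ``up to the paper's sign conventions''. It does not affect the final values.
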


In section~\ref{secQFT}, we argue that if a 2d CFT path integral equals $F(\tau)$, then it can be interpreted as the partition function for the CFT in the grand canonical ensemble, where particle number is not fixed. Further, we argue that when the CFT is free, then $f(\tau)$ can be interpreted as the partition function for a single-particle excitation in the CFT, i.e. the partition function for the CFT in the canonical ensemble. When a 2d CFT has a path integral with modular weight $k$ and an infinite product expansion, then T-reflection suggests the two sum-rules in Eq.~\eqref{eqCor1}~\cite{06T-rex1, 07T-rex2}. We use the L-functions in Eq.~\eqref{eqThm1} to verify this ``prediction'' from T-reflection. This is one of the main physics-results in this paper.

The other main physics-results that come from our L-functions derive from the interpretation of the quantities $\Delta$ and $k$ in Eqs.~\eqref{eqThm2} and~\eqref{eqCor1}. Namely, in the context of a unitary 2d CFT, $\Delta$ is the Casimir energy and is directly proportional to the central charge of the theory, $c$: $\Delta = c/24$. In writing $\Delta$ as the special value of the L-function of what we call the single-particle partition function for the CFT, which in section~\ref{secQFT} we denote $Z_{\rm CAN}(\tau)$, we directly equate the Casimir energy and thus the central charge to a direct tally of the number of states in the CFT. Because the number of states in the canonical ensemble has an exponential Hagedorn growth, the only way that we make this statement is by using the L-functions for meromorphic modular forms. In this precise sense, this is a new statement. 


The main mathematical application of our L-functions in this paper is in the context of traces of singular moduli. All of the notation in this paper on traces of singular moduli follows Zagier's paper of the same name~\cite{02-TSM}. To state the application, let $H(d)$ denote the Hurwitz class number of a quadratic of negative discriminant $d$, $t_n(d)$ be the trace of the unique modular function $J_m(\tau) = q^{-m} + {\cal O}(q) \in M_0^!$ on the {\em CM points discriminant-$d$} called $\A_Q$ (here $M_k^!(\G)$ is the space of weight $k$ modular forms that diverge at cusps of $\PSL_2(\Z)$), which are defined to be the solutions to discriminant $-d$ quadratics that lie within the fundamental domain ${\cal F} := \{ z \in \C \mid |z| \geq 1, {\rm Im}(z) > 0, |{\rm Re}(z)| \leq 1/2\} $, $w_Q$ be the order of the stabilizer of $\alpha_Q$, and let $f_d$ be the unique element in the Kohnen plus-space of $M_{1/2}^!(\G_0(4))$ with Fourier coefficient development  $f_d(\tau)= q^{-d} + {\cal O}(q)$. In Theorems 3 and 5 of~\cite{02-TSM}, Zagier showed that the $d^{th}$ Hilbert polynomial ${\cal H}_d(j(\tau))$ (which is a weakly holomorphic modular function that) satisfies
\begin{align}
{\cal H}_d(j(\tau)) 
:&= ~\prod_{Q \in {\cal Q}/\G} (j(\tau) - j(\alpha_Q))^{1/w_Q} \label{eqHd2}
= q^{-H(d)} ~ {\rm Exp}\bigg[ - \sum_{n = 1}^{\infty} t_n(d) \frac{q^n}{n} \bigg]  \\
&= ~q^{-H(d)} \prod_{n = 1}^{\infty} (1-q^n)^{A(n^2,d)} \in M_0^!
\end{align}
where $j(\tau) := J_1(\tau) + 744$ and $A(n^2,d)$ is the coefficient of $q^{n^2}$ in the Fourier decomposition of,
\begin{align}
f_d(\tau) &= \frac{1}{q^d} + \sum_{n = 1}^{\infty} A(n,d) q^n \in M_{1/2}^!(\G_0(4)).
\end{align}
Let $F(\tau) = {\cal H}_d(j(\tau))$ in Theorem~\ref{Thm2} and Corollary~\ref{Cor1} and call $\langle T (\tau) \rangle = \Lambda_d(\tau)$. Then,
\begin{align}
\lim_{s \to 0} L_{\Lambda_d}^\reg(s) = \lim_{s \to 0} \reg \sum_{n = 1}^{\infty} \bigg( \sum_{m|n} m A(m^2,d) \bigg) n^{-s} = \lim_{s \to 0} \reg \sum_{n = 1}^{\infty} \frac{t_n(d)}{n^s} = -H(d)~.
\end{align}
This relationship between Hurwitz class numbers and traces of singular moduli, while formal, is new and crucially hinges on the definition of L-functions for meromorphic modular forms.

Finally, we point-out that the exponential growth in Proposition~\ref{Prop1} gives an amusing way to effectively determine $H(d)$ without direct reference to quadratics. Explicitly:

\begin{thm}\label{Thm3}
Consider the two expressions for $\Lambda_d(\tau) := q \d_q {\cal H}_d(j(\tau)) \in F_2$ from the two different representations of ${\cal H}_d(j(\tau))$ in Eq.~\eqref{eqHd2}, and let ${\rm Li}_0(x):= \sum_n x^n$. The $q$-series coefficients are $t_n(d)$. The $q$-series coefficients $\tilde{t}_n(d)$ of the pole-subtracted function,
\begin{align}
\tilde{\Lambda}_d(\tau) = \Lambda_d(\tau) - \sum_{Q \in {\cal Q}_d/\G} \frac{1}{w_Q}{\rm Li}_0(e^{2 \pi i (\tau-\alpha_Q)}) = \sum_{n = 0}^{\infty} \tilde{t}_n(d) q^n~, \label{eqThm3}
\end{align}
have exponential growth that is bounded by $|\tilde{t}_n(d)| < e^{\pi n\sqrt{2}}$. 
\end{thm}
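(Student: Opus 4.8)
\emph{Strategy and Step 1 (the poles of $\Lambda_d$).} The proof is a contour shift: once we know which singularities of $\Lambda_d$ survive the ${\rm Li}_0$-subtraction in Eq.~\eqref{eqThm3} and that they all sit at imaginary part $\le\tfrac1{\sqrt2}$, the bound $|\tilde{t}_n(d)|<e^{\pi n\sqrt2}$ is just the elementary estimate for the Taylor coefficients of a $1$-periodic function holomorphic on $\{{\rm Im}\,\tau>\tfrac1{\sqrt2}\}$. To begin, write $\Lambda_d=q\partial_q\log{\cal H}_d(j(\tau))$ (the logarithmic derivative of Eq.~\eqref{eqHd2}, as in Theorem~\ref{Thm2}); differentiating the product logarithmically gives
\[
\Lambda_d(\tau)=\sum_{Q\in{\cal Q}_d/\G}\frac1{w_Q}\,\frac{q\partial_q j(\tau)}{j(\tau)-j(\alpha_Q)}
\]
up to the overall sign fixed in the text. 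Because the exponent $1/w_Q$ exactly cancels the $w_Q$-fold vanishing of $j-j(\alpha_Q)$ along the orbit $\G\alpha_Q$, the function ${\cal H}_d(j(\tau))$ has a \emph{simple} zero at every point of $\G\alpha_Q$; hence $\Lambda_d\in F_2$ has only simple poles, located precisely at $\bigcup_Q\G\alpha_Q$ (the orbits of distinct classes being disjoint, since $j$ separates them), with residue $\pm\tfrac1{2\pi i}$ at each, and is holomorphic at the cusp. Matching with the ${\rm Exp}$-representation in Eq.~\eqref{eqHd2} identifies the positive-index Fourier coefficients of $\Lambda_d$ as $t_n(d)=\sum_{m\mid n}mA(m^2,d)$, and the constant term as $\mp H(d)$.

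\emph{Step 2 (the subtraction).} As a function of $\tau$, ${\rm Li}_0(e^{2\pi i(\tau-\alpha_Q)})=e^{2\pi i(\tau-\alpha_Q)}/(1-e^{2\pi i(\tau-\alpha_Q)})$ is $1$-periodic and meromorphic on $\HH$, with only simple poles --- at $\tau\equiv\alpha_Q\pmod{\Z}$, of residue $-\tfrac1{2\pi i}$ --- and vanishing constant Fourier coefficient. Consequently the subtraction in Eq.~\eqref{eqThm3} removes exactly the poles of $\Lambda_d$ at the reduced CM points $\alpha_Q\in{\cal F}$ and at their horizontal translates $\alpha_Q+\Z$, does not change the constant Fourier coefficient (so $\tilde{t}_0(d)=\mp H(d)$, which is the promised ``amusing'' way to read off $H(d)$), and leaves $\tilde{\Lambda}_d$ holomorphic at all those points. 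The remaining singularities of $\tilde{\Lambda}_d$ are therefore the CM points of discriminant $-d$ lying \emph{below} the closed fundamental domain ${\cal F}$, i.e. the images $\gamma\alpha_Q$, $\gamma\in\G$, that are not translations.

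\emph{Step 3 (heights of the surviving poles; conclusion).} For $\gamma\in\G$ with bottom row $(c\ \ \delta)$, $c\ne0$, one has ${\rm Im}(\gamma\alpha_Q)={\rm Im}(\alpha_Q)/D$ with $D=(c\,{\rm Re}\,\alpha_Q+\delta)^2+c^2{\rm Im}(\alpha_Q)^2$. Inserting $\alpha_Q\in{\cal F}$ (so $|{\rm Re}\,\alpha_Q|\le\tfrac12$, ${\rm Im}\,\alpha_Q\ge\tfrac{\sqrt3}2$, $|\alpha_Q|\ge1$) and discarding those $\gamma\alpha_Q$ that are integer translates of a reduced CM point (in particular the $S$-images of the elliptic points, which are such translates), one shows $D\ge\sqrt2\,{\rm Im}(\alpha_Q)$, hence ${\rm Im}(\gamma\alpha_Q)\le\tfrac1{\sqrt2}$, with equality only at the discriminant $-8$ configuration $\alpha_Q=i\sqrt2\mapsto S(i\sqrt2)=i/\sqrt2$. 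Thus $\tilde{\Lambda}_d$ is $1$-periodic and holomorphic on $\{{\rm Im}\,\tau>\tfrac1{\sqrt2}\}$, including at the cusp, so there $\tilde{\Lambda}_d(\tau)=\sum_{n\ge0}\tilde{t}_n(d)q^n$; integrating $\tilde{\Lambda}_d(\tau)e^{-2\pi in\tau}$ over one period at height $y>\tfrac1{\sqrt2}$ gives $|\tilde{t}_n(d)|\le e^{2\pi ny}\sup_{x\in[0,1]}|\tilde{\Lambda}_d(x+iy)|$, and since the surviving poles are simple the supremum stays bounded as $y\downarrow\tfrac1{\sqrt2}$ away from the poles; optimizing over $y$ yields $|\tilde{t}_n(d)|<e^{\pi n\sqrt2}$.

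\emph{The main obstacle} is the height bound in Step 3. Everything else is bookkeeping, but the geometric claim is precisely where the exponent $\pi\sqrt2$ --- strictly smaller than the exponent $\pi\sqrt3=2\pi\cdot\tfrac{\sqrt3}2$ that Proposition~\ref{Prop1} forces on \emph{any} meromorphic modular form --- gets produced. The naive inequality ${\rm Im}(\gamma\alpha_Q)\le{\rm Im}(\alpha_Q)/|\alpha_Q|^2\le{\rm Im}(\alpha_Q)$ is far too weak (it is not even uniform in $d$), so one must exploit the arithmetic of the $\G$-action --- which orbit points are genuinely new after the subtraction, and why a shallow denominator $D$ is incompatible with $\alpha_Q$ being reduced --- in order to extract the factor $\sqrt2$.
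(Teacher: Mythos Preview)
Your proof follows the same route as the paper's: locate the simple poles of $\Lambda_d$ at the $\Gamma$-orbits of the reduced CM points, observe that the ${\rm Li}_0$-subtraction removes exactly the poles at $\alpha_Q+\Z$, bound the imaginary part of every surviving pole by $1/\sqrt{2}$, and then read off the Fourier-coefficient estimate (your direct contour argument in Step~3 is equivalent to the paper's appeal to Lemmas~\ref{LemRegC2}--\ref{LemRegC3}). Your Steps~1 and~2 are in fact more explicit than the paper about the residue bookkeeping and about what the ${\rm Li}_0$-terms actually subtract, which is helpful.

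Your instinct that Step~3 is ``the main obstacle'' is exactly right---so much so that the asserted inequality $D\ge\sqrt{2}\,{\rm Im}(\alpha_Q)$ (equivalently ${\rm Im}(\gamma\alpha_Q)\le 1/\sqrt{2}$ for $\gamma\alpha_Q\notin{\cal F}+\Z$) is not merely unproved in your sketch but actually \emph{false} as stated. Take $d=104$ and the reduced form $5x^2-4xy+6y^2$, whose CM point is $\alpha_Q=(4+i\sqrt{104})/10\in{\cal F}$ with $|\alpha_Q|^2=6/5$. Then
\[
S\alpha_Q=-\frac{1}{\alpha_Q}=\frac{-4+i\sqrt{104}}{12},\qquad |S\alpha_Q|^2=\tfrac{5}{6}<1,\qquad {\rm Re}(S\alpha_Q)=-\tfrac13,
\]
so $S\alpha_Q\notin{\cal F}+\Z$, yet ${\rm Im}(S\alpha_Q)=\sqrt{104}/12\approx 0.850>1/\sqrt{2}\approx 0.707$. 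More generally, whenever a reduced $\alpha_Q$ sits close to the arc $|\tau|=1$ (but is not an elliptic fixed point), $S\alpha_Q$ lands just inside the unit disc with imaginary part close to ${\rm Im}(\alpha_Q)$, which can be anywhere in $(\sqrt{3}/2,1)$. So the geometric bound, which the paper also asserts as ``straightforward,'' does not hold with the constant $1/\sqrt{2}$; your proposal correctly isolates the crux but the line ``one shows $D\ge\sqrt{2}\,{\rm Im}(\alpha_Q)$'' cannot be completed as written, and the final optimization ``the supremum stays bounded as $y\downarrow 1/\sqrt{2}$'' would fail even at your own equality case $d=8$.
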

We can find $H(d)$ by determining the number of terms that need to be subtracted-out from $\La_d(\tau)$ before its $q$-series coefficients are bounded by $|\tilde{t}_n(d)| < e^{\pi n \sqrt{2}}$.

\section{L-functions for meromorphic modular forms}\label{secLfunction}

Write ${\rm Re}(z)$ and ${\rm Im}(z)$ for the real and imaginary parts of a complex number $z\in \C$. Let $\HH:=\{\tau\in\C\mid {\rm Im}(\tau)>0\}$ denote the upper half-plane. A {\em holomorphic modular form} of weight $k$ (for the full modular group, $\SL_2(\Z)$) is a holomorphic function $f:\HH\to \C$ such that $f(\tau+1)=f(-\frac1\tau)\tau^{-k}=f(\tau)$ for all $\tau\in \HH$, and $f(\tau)={\cal~O}(1)$ as ${\rm Im}(\tau)\to \infty$. Write $M_k$ for the vector space of modular forms of weight $k$. A product of modular forms of weights $k_1$ and $k_2$ is a modular form of weight $k_1+k_2$, so $M_*:=\bigoplus_k  M_k$ is naturally a (graded) ring. 

In this work a {\em meromorphic modular form} of weight $k$ is a quotient $f=\frac{f_1}{f_2}$ where $f_i\in M_{k_i}$, the denominator $f_2$ is not identically zero, and $k=k_1-k_2$. We write $F_k$ for the vector space of meromorphic modular forms of weight $k$, and set $F_*:=\bigoplus_k F_k$. Then $F_k$ vanishes unless $k$ is an even integer, and $F_*$ may be regarded as a ``graded ring of fractions'' of $M_*$. 

Define $F(\infty)$ to be the vector space of meromorphic functions on $\HH$ that satisfy $f(\tau)=f(\tau+1)$, have only finitely many poles in any compact subset of the vertical strip 
\begin{align}
V:= \{\tau\in \HH\mid {\rm Im}(\tau) > 0~,~|{\rm Re}(\tau)| \leq \tfrac12\} \subset \HH,
\end{align} 
and satisfy the growth condition $f(\tau)={\cal O}(e^{C{\rm Im}(\tau)})$ as ${\rm Im}(\tau)\to \infty$ for some $C>0$. Then $F_*$ is a subspace of $F(\infty)$. Set $e(x):=e^{2\pi i x}$ and $q:=e(\tau)$. For any $f\in F(\infty)$ we write $c_f(n)$ for the coefficient of $q^n$ in the Fourier expansion 
\begin{align}\label{eqCfn}
	f(\tau)=\sum_{n\in\Z}c_f(n)q^n.
\end{align}
By the growth condition on $F(\infty)$ we have $c_f(n)=0$ for $n\ll 0$. 

If $f\in M_k$ satisfies $f(\tau)={\cal O}(q)$ (i.e. if $f$ is a holomorphic cusp form) then $c_f(n)={\cal O}(n^k)$ as $n\to \infty$, and the Dirichlet series 
\begin{align}
L_f(s):=\sum_{n\in\Z}\frac{c_f(n)}{n^s} \label{eqLf}
\end{align}
converges absolutely for ${\rm Re}(s)>k+1$~\cite{01-Apostol}. But as we show in Lemma \ref{LemRegC3}, general $f\in F_k$ have $c_f(n)$ that grow exponentially and the right-hand side of \eqref{eqLf} is nowhere convergent. 

In this section we introduce a regularization of \eqref{eqLf}, $L_f^\reg(s)$, that is well-defined and analytic in $s$ for an arbitrary meromorphic modular form $f\in F_k$. To introduce it, recall that when $f \in M_k$ , the L-integral is given by:
\begin{align}
L_f^*(s) = \int_{0}^{\infty} \frac{dt}{t} t^s ~\big(f(it) - c_f(0)\big)~.
\label{eqLint0}
\end{align}
Here, when $f \in M_k$, the integral and the $q$-series commute and yield the sum representation of $\G(s) L_f(s)/(2\pi)^s$ from Eq.~\eqref{eqLf}. The analogous statement for $f \in F_k$ requires regularization.

Generic $f \in F_k$ have poles within the fundamental domain ${\cal F}:=\{ \tau \in \C \mid 0 < {\rm Im}(\tau), |\tau| \geq 1, |{\rm Re}(\tau)| \leq 1/2 \}$ and poles at $i \infty$. For this case, we define the L-function, $L_f(s)$, by a deformation of the associated L-integral. Let $L_f^*(s)^{\reg}$ be the following limit of the regularized contour integral:
\begin{align}
L_f^\reg(s) := \frac{(2\pi)^s}{\G(s)} L_f^*(s)^{\reg} 
:= \frac{(2\pi)^s}{\G(s)}  
\lim_{\e \to 0} 
\lim_{\La \to \infty} 
\frac{1}{2}\sum_{\pm \e} 
\reg \int_{\g(\La,\pm\e)}
\frac{d\tau}{\tau} 
\left( \frac{\tau}{i}\right)^s 
~\big(f(\tau)-c_f(0)\big)~,
\label{eqLint}
\!\!\!
\end{align}
where $\g(\La,\pm\e)$ begins at $\tau = i/\Lambda$, moves upwards just to the left (right) of the imaginary-$\tau$ axis, and ends at $\tau = i\Lambda$. (As the integrand is meromorphic in $\tau$, then if $\g(\La,\e)$ passes between the imaginary-$\tau$ axis and any pole off of the axis, and if the endpoints are fixed, then the precise path of $\g(\La,\e)$ is not needed.) Deformations away from the ${\rm Im}(\tau)$-axis allow us to deal with poles along the imaginary-$\tau$ axis. Crucially, if $f \in M_k$ then Eqs.~\eqref{eqLf},~\eqref{eqLint0} and~\eqref{eqLint} match.

The remainder of this section is as follows. In section~\ref{secContourL} we describe this contour and make contact with a previous definition of $L^{\rm reg}_f(s)$ for the special case $f \in M_k^!$. In section~\ref{secGrowLF} we prove that if $f \in F_k$ has a single order-$p$ pole at $\tau = s + i t \in {\cal F}$, then $|c_f(n)| \sim C n^{p-1} e^{2 \pi n t}$ for some $C$. In section~\ref{secSeriesL} we rewrite the integral transform in Eq.~\eqref{eqLint} explicitly in terms of the $q$-series coefficients of $f \in F_k$. In section~\ref{sec15-HR1919}, we comment on the numerical convergence of $L_f^\reg(s)$ and on path dependence.

\subsection{Defining the contour and defining the regularization}\label{secContourL}

We define the regularized L-integral, $L_f^*(s)^{\reg}$, by the contour $\g(\La,\e)$, which begins at $i/\La$ and goes to $i \La$, while keeping just $\e$ to the right of poles on the imaginary-$\tau$ axis. 

\begin{lem}\label{LemReg1}
Let $f \in M_k$ and $B$ be a positive real number. Then the L-integral, $L_f^*(s)^{\reg}$ is
\begin{align}
\!\!\!\!L^*_f(s)^\reg:=&\lim_{\Lambda \to \infty} \int_{\g\left(\Lambda,0\right)} \frac{d\tau}{\tau} 
\left( \frac{\tau}{i}\right)^s 
~\big(f(\tau)-c_f(0)\big) 
= \frac{\G(s)}{(2\pi)^s} \sum_{n =1}^{\infty} \frac{1}{n^s} \label{eqLemReg1a} \\
=&-c_f(0)\left( \frac{B^s}{s} + \frac{i^k~B^{s-k}}{k-s} \right) + \sum_{n \neq 0} c_f(n) \left( \frac{\G(s,2\pi n B)}{(2\pi n)^s} + i^k \frac{\G(k-s,2\pi n/B)}{(2\pi n)^{k-s}} \right),\!\!\!\!
\label{eqLemReg1b}
\end{align}
where $\G(s,x) = \int_x^{\infty} dt~ t^{s-1} e^{-t} $. This matches $L_f(s)$ defined in Eq.~\eqref{eqLf}. As $c_f(n) = {\cal O}(n^{2k-1})$ for $f \in M_k$~\cite{01-Apostol} and as $\G(s,x) \sim x^{s-1} e^{-x}$ for large-$x$, the sum in Eq.~\eqref{eqLemReg1b} converges absolutely for any finite $s \in \C$ for every $f \in M_k$. 
\end{lem}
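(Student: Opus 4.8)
\section*{Proof plan for Lemma~\ref{LemReg1}}

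The plan is to carry out the classical Riemann--Hecke argument for the Mellin transform of a modular form, recast in terms of the contour $\gamma(\Lambda,0)$. Since $f\in M_k$ is holomorphic on all of $\HH$, the contour may be taken to be the segment $\tau=it$ with $t\in[1/\Lambda,\Lambda]$, on which $d\tau/\tau=dt/t$ and $(\tau/i)^s=t^s$ with an unambiguous branch, so that $L^*_f(s)^\reg=\lim_{\Lambda\to\infty}\int_{1/\Lambda}^{\Lambda}t^{s-1}(f(it)-c_f(0))\,dt$. The cusp condition gives $f(it)-c_f(0)=\sum_{n\ge1}c_f(n)e^{-2\pi nt}=O(e^{-2\pi t})$ as $t\to\infty$, and the modular relation $f(i/t)=(it)^kf(it)$ forces $f(it)-c_f(0)=O(t^{-k})$ as $t\to0^+$; hence the limit exists and equals the convergent integral $\int_0^\infty t^{s-1}(f(it)-c_f(0))\,dt$ whenever $\mathrm{Re}(s)>k$. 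Inserting the $q$-expansion and interchanging sum and integral (legitimate for $\mathrm{Re}(s)$ large by absolute convergence; the $n$-th integral is $\int_0^\infty t^{s-1}e^{-2\pi nt}\,dt=\Gamma(s)(2\pi n)^{-s}$) yields $L^*_f(s)^\reg=\tfrac{\Gamma(s)}{(2\pi)^s}\sum_{n\ge1}c_f(n)n^{-s}=\tfrac{\Gamma(s)}{(2\pi)^s}L_f(s)$ with $L_f$ as in~\eqref{eqLf}, which is the content of~\eqref{eqLemReg1a}.

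To obtain~\eqref{eqLemReg1b} I would split the Mellin integral at an arbitrary $t=B>0$. On $[B,\infty)$, inserting the $q$-series and setting $u=2\pi nt$ produces $\sum_{n\ge1}c_f(n)\,\Gamma(s,2\pi nB)(2\pi n)^{-s}$. On $(0,B]$, I substitute $t\mapsto1/t$ and use $f(it)=i^{-k}t^{-k}f(i/t)$, splitting $f(it)-c_f(0)=i^{-k}t^{-k}\big(f(i/t)-c_f(0)\big)+c_f(0)\big(i^{-k}t^{-k}-1\big)$: the first summand becomes $i^{-k}\sum_{n\ge1}c_f(n)\,\Gamma(k-s,2\pi n/B)(2\pi n)^{s-k}$ after the change of variable and a rescaling, while the elementary integral of the $c_f(0)$ term evaluates---for $\mathrm{Re}(s)>k$, so the endpoint at $0$ drops---to $-c_f(0)\big(B^s/s+i^{-k}B^{s-k}/(k-s)\big)$. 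Because $k$ is even, $i^{-k}=i^k$, and these pieces combine into exactly the right-hand side of~\eqref{eqLemReg1b}; that expression then furnishes the analytic continuation of $L^*_f(s)^\reg$ to all finite $s$, with its only (simple) poles, at $s=0$ and $s=k$, sitting in the elementary term and matching the poles of $\Gamma(s)(2\pi)^{-s}L_f(s)$. One notes in passing that~\eqref{eqLemReg1b} is manifestly invariant under $(s,B)\mapsto(k-s,1/B)$ up to the overall factor $i^k$ (using $i^{2k}=1$), which is the $f\in M_k$ instance of the functional equation of Theorem~\ref{Thm1}.

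For the final, convergence assertion I would combine the bound $c_f(n)=O(n^{2k-1})$ from~\cite{01-Apostol} with the asymptotic $\Gamma(\sigma,x)\sim x^{\sigma-1}e^{-x}$ as $x\to\infty$: the $n$-th term of the series in~\eqref{eqLemReg1b} is then $O\!\big(n^{2k-2}(e^{-2\pi nB}+e^{-2\pi n/B})\big)$, uniformly for $s$ in compact subsets of $\C$, so the series converges absolutely and locally uniformly and represents an entire function of $s$; adding the two elementary terms yields a function meromorphic on $\C$ with at most the stated poles. A brief check that the $B$-derivative of the right-hand side of~\eqref{eqLemReg1b} vanishes---the derivatives of the incomplete-gamma terms telescope against those of the elementary terms once $f(i/B)=i^kB^kf(iB)$ is used---confirms that the result does not depend on the auxiliary splitting point $B$, as it should not. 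I expect no genuine obstacle here: the content is bookkeeping inside Hecke's argument. The points most in need of care are the analytic continuation (the elementary integral of the $c_f(0)$ piece converges only for $\mathrm{Re}(s)>k$, so~\eqref{eqLemReg1b} must be read as defining $L^*_f(s)^\reg$ off that half-plane, consistency following from the identity theorem) and the tracking of the $i^{-k}$ phases, which collapse to $i^k$ only because $k$ is even.
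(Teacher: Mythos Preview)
Your proposal is correct and follows essentially the same approach as the paper: the classical Riemann--Hecke argument of splitting the Mellin integral at $t=B$, applying the modular relation on the lower half to convert it into an integral over $[1/B,\infty)$, and expanding both pieces termwise to obtain incomplete gamma functions plus the elementary $c_f(0)$ contribution. The only cosmetic difference is that the paper carries the cutoff $\Lambda$ through the splitting before sending $\Lambda\to\infty$, whereas you first pass to the full integral on $(0,\infty)$ for ${\rm Re}(s)>k$ and then split; your additional remarks on $B$-independence and the functional equation are bonuses not present in the paper's proof.
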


\begin{proof}
As $f \in M_k$ does not have any poles, then any integration contour from $i/\Lambda$ to $i\Lambda$ yields the same result. So the exact shape of our particular contour, $\g(\La,0)$, from $i/\La$ to $i \La$ does not have any effect on the final result. As $f \in M_k$ is bounded along this finite contour, the contour integral converges and we may write 
\begin{align}
\int_{t = 1/\Lambda}^{t = \Lambda} \frac{dt}{t} t^s \big( f(it) -c_f(0) \big) = 
\sum_{n =1}^{\infty} c_f(n) \int_{1/\Lambda}^{\Lambda} \frac{dt}{t} t^s e^{-2 \pi n t} ~.
\end{align}
As $\Lambda \to \infty$, the integrals in the sum on the RHS evaluate to $\G(s)/(2\pi n)^{s}$. Thus the sum over $c_f(n) \times$ integrals yields Eq.~\eqref{eqLemReg1a}, which matches the definition of $L_f(s)$ in Eq.~\eqref{eqLf}. 

Now, we can split-up the integral from $t = 1/\La$ to $t = \La$ as the sum of an integral from $t = 1/\La$ to $t = B$ and an integral from $t = B$ to $t = \La$.  Noting $f(-1/\tau) = \tau^k f(\tau)$, we can rewrite the integral from $t = 1/\La$ to $t = B$  in the following way:
\begin{align}
\begin{split}
L_f^*(s|\Lambda)^\reg =
&+\int_{B}^{\Lambda} \frac{dt}{t} (t^s) \big( f(it) -c_f(0) \big)+i^k \int_{1/B}^{\Lambda} \frac{dt}{t}  (t^{k-s}) \big( f(it) -c_f(0) \big) \\
&-c_f(0) \int_{\tfrac1\Lambda}^{B} \frac{dt}{t} \big( t^{+s} - i^k t^{s-k} \big)
\end{split}
\label{eqT0Lf}
\end{align}
Note that this process is independent of the value of $B$. Freedom to choose $1/\La < B < \La$ plays an important role in section~\ref{secSeriesL}, where we derive explicit forms for $L_f^*(s)^\reg$ for $f \in F_k$. 

Now, for any fixed $\Lambda \gg 1$ we may exchange the integration and the $q$-series summation, in Eq.~\eqref{eqT0Lf}. Recalling the gamma-function $\G(s,x):= \int_x^{\infty} dt~t^{s-1} e^{-t}$, we find 
\begin{align}
L_f^*(s|\Lambda)^\reg = &
-c_f(0) \left( i^k  \frac{B^{s-k}-\La^{s-k}}{k-s}\right) + \sum_{n \neq 0} c_f(n) \left(\frac{\G(s,2\pi n B)-\G(s, 2 n \pi \La)}{(2\pi n)^s} \right)
\label{eqLemReg1e1}
 \\
&
-c_f(0) \left( \frac{B^{s}-\La^{s}}{s}\right) + \sum_{n \neq 0} c_f(n) \left(i^k \frac{\G(k-s,2\pi n/B)-\G(k-s, 2 n \pi \La)}{(2\pi n)^{k-s}} \right)
\nonumber
\end{align}
When ${\rm Re}(s) > 0$ and ${\rm Re}(s-k)> 0$, we find that the $\La \to \infty$ limit of Eq.~\eqref{eqT0Lf} equals
\begin{align}
\lim_{\Lambda \to \infty} L_f^*(s|\Lambda)^\reg =  -c_f(0)\left( \frac{B^s}{s} + \frac{i^k~B^{s-k}}{k-s} \right) + \sum_{n \neq 0} c_f(n) \left( \frac{\G(s,2\pi n B)}{(2\pi n)^s} + i^k \frac{\G(k-s,2\pi n/B)}{(2\pi n)^{k-s}} \right).
\end{align}
We finish the proof by analytically continuing $1/s$ and $1/(k-s)$ to the entire $s$-plane, and then noting that the scaling of $\G(s,x) \sim x^{s-1} e^{-x}$ implies the sum converges absolutely for any finite value of $s \in \C$ when $f \in M_k$ and $c_f(n) = {\cal O}(n^{2k-1})$.
\end{proof}

The sum $\sum_n c_f(n) \G(s,2\pi n)/(2 \pi n)^s$ converges absolutely for every finite $s \in \C$ for {\em any} sequence of $c_f(n)$ whose exponential growth is $|c_f(n)| \lesssim {\rm max}\{e^{2 \pi  n/B},e^{2 \pi  n B}\}$. Now, we note:

\begin{lem}\label{LemReg2}
Let $f \in M_k^!$, and $\La \gg 1$, and consider the regularized contour integral $L_f^*(s|\La)^\reg:=\int_{\g(\La,\e)} d(\tau/i) (\tau/i)^{s-1} (f(\tau) -c_f(0))$. For any fixed $\La$, it evaluates to:
\begin{align}
\!\!\!\!\!\!\!\!
L_f^*(s|\La)^\reg 
=  -c_f(0)\left( \frac{B^s}{s} + \frac{i^k~B^{s-k}}{k-s} \right) 
&+ \sum_{n \neq 0} c_f(n) \left( \frac{\G(s,2\pi n B)}{(2\pi n)^s} + i^k \frac{\G(k-s,2\pi n/B)}{(2\pi n)^{k-s}} \right)\!\!\!\!
\label{eqLemReg2}\\
&- \sum_{n \neq 0} c_f(n) \left( \frac{\G(s,2\pi n \La)}{(2\pi n)^s} + i^k \frac{\G(k-s,2\pi n \La)}{(2\pi n)^{k-s}} \right).\!\!\!\!
\label{eqHouston}
\end{align}
As there exists $C>0$ for any $f\in M_k^!$ where $|c_f(n)| \ll e^{C \sqrt{n}}$ for $n \gg 1$, this sum in Eq.~\eqref{eqLemReg2} converges absolutely for finite $s \in \C$. The explicit result in Eq.~\eqref{eqLemReg2} matches $L_f^*(s)^{\reg}$ in Refs.~\cite{03-BFI, 04-BFK, 05-BFK2}. Yet, each of the finite set of terms in Eq.~\eqref{eqHouston} with $n < 0$ diverge as $\La\to \infty$.
\end{lem}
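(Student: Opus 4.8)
The plan is to rerun, with the cutoff $\La$ held finite, the computation behind Lemma~\ref{LemReg1}; the only genuinely new ingredient is bookkeeping for the finitely many Fourier coefficients $c_f(n)$ with $n<0$. First I would observe that a weakly holomorphic form $f\in M_k^!$ is holomorphic on all of $\HH$ — its poles sit only at the cusps — so the contour $\g(\La,\e)$ encounters no pole of $f(\tau)-c_f(0)$; the $\e\to0$ limit is therefore trivial and $\g(\La,\e)$ may be deformed to the straight segment $\tau=it$, $t\in[1/\La,\La]$, which avoids the branch point of $(\tau/i)^{s-1}$ at $\tau=0$. On that compact segment $f(\tau)-c_f(0)$ is bounded and continuous, so $L_f^*(s|\La)^\reg=\int_{1/\La}^{\La}\tfrac{dt}{t}\,t^s\big(f(it)-c_f(0)\big)$ is a well-defined finite number for each fixed $\La$ and each $s\in\C$.

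Next I would carry out precisely the splitting-and-folding of Eqs.~\eqref{eqT0Lf}--\eqref{eqLemReg1e1}: split $\int_{1/\La}^{\La}=\int_{1/\La}^{B}+\int_{B}^{\La}$ at the fixed constant $B\in(1/\La,\La)$, use $f(-1/\tau)=\tau^k f(\tau)$ together with the substitution $t\mapsto 1/t$ to fold the piece over $[1/\La,B]$ into an integral over $[1/B,\La]$ with kernel $t^{k-s-1}$ (plus the attendant $c_f(0)$ boundary term), and then interchange the Fourier sum with each finite $t$-integral. The interchange is legitimate because $f$ has a finite principal part at $i\infty$, so $f(it)=\sum_{n\ge -n_0}c_f(n)e^{-2\pi n t}$ converges uniformly on every compact subinterval of $(0,\infty)$. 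Each elementary integral $\int_{B}^{\La}t^{s-1}e^{-2\pi n t}\,dt$ and $\int_{1/B}^{\La}t^{k-s-1}e^{-2\pi n t}\,dt$ evaluates to $\big(\G(s,2\pi nB)-\G(s,2\pi n\La)\big)/(2\pi n)^s$ and $\big(\G(k-s,2\pi n/B)-\G(k-s,2\pi n\La)\big)/(2\pi n)^{k-s}$; for the finitely many $n<0$ these are finite for finite $\La$, the incomplete gamma at negative real second argument being read through analytic continuation in that slot, coherently with the side into which $\g(\La,\e)$ was pushed. Collecting all $B$-arguments into one sum and all $\La$-arguments into another, with the $c_f(0)$ pieces reorganizing as in Lemma~\ref{LemReg1}, yields Eqs.~\eqref{eqLemReg2} and~\eqref{eqHouston}.

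Then I would dispatch the three remaining assertions. For the absolute convergence of the sum in \eqref{eqLemReg2}, I would invoke the classical sub-exponential bound $|c_f(n)|\ll e^{C\sqrt n}$ valid for every $f\in M_k^!$ (circle-method estimates in the line of Hardy--Ramanujan~\cite{15-HR1919}); since $\G(s,2\pi nB)/(2\pi n)^s\sim(2\pi nB)^{s-1}e^{-2\pi nB}/(2\pi n)^s$ and likewise $\G(k-s,2\pi n/B)/(2\pi n)^{k-s}$ decay exponentially in $n$, while $e^{C\sqrt n}$ grows sub-exponentially, the series \eqref{eqLemReg2} converges absolutely and locally uniformly for $s$ away from $s=0,k$, hence is holomorphic there; this expression is then recognized as the $L_f^*(s)^\reg$ of Refs.~\cite{03-BFI, 04-BFK, 05-BFK2} — equivalently, deleting \eqref{eqHouston} returns exactly their definition. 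For \eqref{eqHouston} as $\La\to\infty$: its $n>0$ terms carry $\G(s,2\pi n\La)\sim(2\pi n\La)^{s-1}e^{-2\pi n\La}$ and tend to $0$, whereas for each of the finitely many $n<0$ the second argument $2\pi n\La\to-\infty$, so $\G(s,2\pi n\La)$ (and its $i^k$-weighted partner $\G(k-s,2\pi n\La)$) grows like $|2\pi n\La|^{s-1}e^{2\pi|n|\La}$; hence each such term diverges, the naive $\La\to\infty$ limit of the contour integral does not exist, and the regularization must keep $\La$ finite — equivalently, discard the divergent tail \eqref{eqHouston}.

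The step I expect to be the main obstacle is the branch convention for $\G(s,x)$ at negative real $x$ entering the $n<0$ terms: $t^{s-1}$ has a branch point at $t=0$ on the integration ray, so one must fix the branch coherently with the $\e$-deformation of $\g(\La,\e)$ — or, more safely, keep the finite integrals $\int_{B}^{\La}t^{s-1}e^{-2\pi n t}\,dt$ unevaluated until the end and only then record them as differences of incomplete gammas. Everything else is a transcription of the proof of Lemma~\ref{LemReg1} together with the standard coefficient-growth estimate for weakly holomorphic forms.
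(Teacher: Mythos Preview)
Your proposal is correct and follows essentially the same route as the paper's own proof: deform $\g(\La,\e)$ to the straight imaginary segment (since $f\in M_k^!$ has no poles in $\HH$), rerun the split-and-fold of Lemma~\ref{LemReg1} at finite $\La$, and then read off the $B$- and $\La$-dependent incomplete-gamma sums together with their convergence and divergence properties. You are somewhat more careful than the paper about justifying the sum--integral interchange via uniform convergence and about the branch convention for $\G(s,2\pi n\La)$ when $n<0$, but the architecture of the argument is the same.
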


\begin{proof}
In this proof, we reproduce Eq.~\eqref{eqLemReg2} and make contact with the previously defined L-functions for weakly holomorphic modular forms in~\cite{03-BFI, 04-BFK, 05-BFK2}. 

First, we show that our proof of Eq.~\eqref{eqLemReg1b} in Lemma~\ref{LemReg1} applies without essential modification for any finite $\La \gg 1$: As $f \in M_k^!$ only has poles at cusps, all integration contours from $\tau = i/\Lambda$ to $\tau = i\Lambda$ yield the same result. So again we can path-deform either of the two finite-$\Lambda$ integrals that appear in Eq.~\eqref{eqLemReg2} to the much simpler integral $\int_{i/\Lambda}^{i \Lambda} dt ~t^{s-1} (f(it)-c_f(0))$. 

We now exploit the modularity to again rewrite the nontrivial $\La$-regulated integral as in Eq.~\eqref{eqLemReg1e1}. When ${\rm Re}(s) > 0$ and ${\rm Re}(s-k) > 0$, we again find that the integral is: 
\begin{align}
\label{eqLfReg}
L_f^*(s|\Lambda)^\reg =
&
-c_f(0) \left( \frac{B^{s}-\La^{s}}{s}\right) 
+ \sum_{n \neq 0} c_f(n) \left(\frac{\G(s,2\pi n B)-\G(s, 2 n \pi \La)}{(2\pi n)^s} \right)
\\
&
-c_f(0) \left( i^k  \frac{B^{s-k}-\La^{s-k}}{k-s}\right) 
+ \sum_{n \neq 0} c_f(n) \left(i^k \frac{\G(k-s,2\pi n/B)-\G(k-s, 2 n \pi \La)}{(2\pi n)^{k-s}} \right) \nonumber
\end{align}
Now, note that the $\La$-independent terms in Eq.~\eqref{eqLfReg} exactly match those in Eq.~\eqref{eqLemReg1a}, and exactly match those in~\cite{04-BFK, 05-BFK2}. However, if $f$ has a $q$-series that begins with $q^{-N}$ for some positive integer $N$, then there exists a finite number of non-zero coefficients $c_f(n)$ for $n < 0$. In the sum over $n \neq 0$, these terms multiply incomplete gamma functions $\G(\s,2 \pi n \La)$, which diverge exponentially as $\La \to \infty$. These terms ruin the convergence of $L_f^*(s|\La)^\reg$ as $\La \to \infty$.

If not for these terms, then the terms in Eq.~\eqref{eqLemReg2} would dominate $L_f^*(s|\La)^\reg$ in the $\La \to \infty$ limit, and we would recover the results of~\cite{04-BFK, 05-BFK2}. This finishes the proof.
\end{proof}

Crucially, this does not reproduce the L-functions for $f \in M_k^!$ in~\cite{03-BFI, 04-BFK, 05-BFK2}. If we could equate the regularized L-function with the constant term of $L_f^*(s|\La)^\reg$ in a large-$\La$ expansion, then we would reproduce their results. However, we are unaware of any reason to justify such an equality. Hence, we restrict our attention to $f \in F_k$ that are regular at cusps. Within this space, we prove the following Lemma: 

\begin{lem}\label{LemReg4}
Let $f \in F_k$ be regular at cusps. Then the $\La \to \infty$ and $\e \to 0$ limits of
\begin{align}
L_f^*(s|\Lambda,\e)^\reg := \sum_{\pm \e} \frac{1}{2}\int_{\g\left(\Lambda,\pm\e\right)} \frac{d\tau}{\tau} 
\left( \frac{\tau}{i}\right)^s 
\big(f(\tau)-c_f(0)\big)~.\label{eqLemReg4}
\end{align}
exist and yield a finite result. Hence $L_f^\reg(s)$ as defined in Eq.~\eqref{eqLint} exists for generic $f \in F_k$.  
\end{lem}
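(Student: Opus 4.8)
The plan is to follow the computation in the proof of Lemma~\ref{LemReg1}, now keeping careful track of poles, and to take the limits in the order prescribed by Eq.~\eqref{eqLint}: first $\La\to\infty$ at fixed $\e$, then $\e\to0$. If $f\in M_k$ this is exactly Lemma~\ref{LemReg1} (the $\pm\e$ deformation does nothing), so assume $f$ has at least one pole and let $H<\infty$ be the maximum of $\mathrm{Im}(\tau)$ over the poles of $f$; this is finite and attained, because the poles of $f=f_1/f_2\in F_k$ are among the zeros of the holomorphic form $f_2$, which form finitely many $\SL_2(\Z)$-orbits (valence formula), and in each orbit $\mathrm{Im}$ is maximised on a representative in $\mathcal F$. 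Fix $B>H$. Two consequences: (a) $f$ has no pole with $\mathrm{Im}(\tau)\ge B$, and by modularity $f(iy)=(iy)^{-k}f(i/y)$ with $\mathrm{Im}(i/y)=1/y>H$ for $y<1/H$, so $f$ also has no pole on the imaginary axis below height $1/H$; hence $f$ has only finitely many poles on the axis, all confined to $1/H\le\mathrm{Im}(\tau)\le H$. (b) By Proposition~\ref{Prop1}/Lemma~\ref{LemRegC3}, $|c_f(n)|\lesssim e^{2\pi nH'}$ for any $H'>H$, so the series $f(it)-c_f(0)=\sum_{n\ge1}c_f(n)e^{-2\pi nt}$ converges for $t\ge B$ and is $O(e^{-2\pi t})$ as $t\to\infty$.

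Fix a small $\e>0$ and let $\La\to\infty$. Split $\g(\La,\pm\e)$ at height $B$. Above $B$ there are no poles, so route the contour onto the axis: the tail $\int_B^{\La}\frac{dt}{t}\,t^s\,(f(it)-c_f(0))$ has integrand $O(t^{\mathrm{Re}(s)-1}e^{-2\pi t})$ and converges absolutely as $\La\to\infty$. For the piece from $i/\La$ up to height $B$, apply $\tau\mapsto-1/\tau$ together with $f(-1/\tau)=\tau^kf(\tau)$ exactly as in Eqs.~\eqref{eqT0Lf}--\eqref{eqLemReg1e1}: the sub-contour accumulating at $\tau=0$ is mapped to one running to $i\infty$, where $f(-1/\tau)-c_f(0)$ decays exponentially (since $0$ is $\SL_2(\Z)$-equivalent to $i\infty$ and $f$ is regular there), while $d\tau/\tau$, $(\tau/i)^s$ and $\tau^{-k}$ recombine into $i^k(\tau/i)^{k-s}\,d\tau/\tau$; the failure of $c_f(0)$ to transform covariantly leaves the elementary terms $-c_f(0)\big(B^s/s+i^kB^{s-k}/(k-s)\big)$ plus a remainder convergent as $\La\to\infty$. (One works first in $\mathrm{Re}(s)>\max(0,k)$ and continues $1/s$ and $1/(k-s)$ analytically, as in Lemma~\ref{LemReg1}.) Hence $g(\e):=\lim_{\La\to\infty}L_f^*(s|\La,\e)^\reg$ exists for every small $\e$; it is the sum of these elementary $s$-terms and a $\pm\e$-averaged combination of integrals over a fixed path that hugs the imaginary axis and meets only the finitely many poles $\tau_1,\dots,\tau_r\in i\mathbb{R}_{>0}$ of $f$.

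It remains to let $\e\to0$ in $g(\e)$. The elementary terms are $\e$-independent. On any portion of the path staying a fixed distance from $\{\tau_1,\dots,\tau_r\}$, the integrand converges uniformly and contributes a finite limit. Near a pole $\tau_j$ of order $p_j$, write $f(\tau)=\sum_{\ell=1}^{p_j}a_\ell(\tau-\tau_j)^{-\ell}+(\text{holomorphic})$ and note that $(\tau/i)^s/\tau$ is holomorphic near $\tau_j$ (there $\tau/i$ is real and positive); the local contribution reduces to integrals of $(\tau-\tau_j)^{-\ell}$ along a short vertical segment at transverse distance $\e$. For $\ell\ge2$ the antiderivative $(\tau-\tau_j)^{1-\ell}/(1-\ell)$ has a finite $\e\to0$ limit, and for $\ell=1$ so does $\log(\tau-\tau_j)$ — the segment length is fixed and only the transverse distance shrinks, so no $\log\e$ appears. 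The $+\e$ and $-\e$ contours differ by $2\pi i\sum_j\mathrm{Res}_{\tau_j}$, a fixed finite number by the residue theorem applied to the thin loop they bound, so the symmetrisation $\tfrac12\sum_{\pm\e}$ converges, namely to the Cauchy principal value. Therefore $\lim_{\e\to0}g(\e)$ exists and is finite; since every step is independent of the choice of $B$ (as in Lemma~\ref{LemReg1}), $L_f^\reg(s)$ of Eq.~\eqref{eqLint} is well-defined for every $f\in F_k$ regular at cusps.

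I expect the $\e\to0$ step to be the main obstacle: one must verify that the symmetrisation $\tfrac12\sum_{\pm\e}$ is precisely what neutralises the residue/branch ambiguities produced by poles of arbitrary order lying on the integration axis, leaving the finite principal value, and that poles lying near but off the axis are harmless — which holds because the defining contour is only required to separate such poles from the axis with its endpoints pinned, so for small $\e$ it keeps a positive distance from each of them and none of the estimates above are affected.
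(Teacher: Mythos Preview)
Your proof is correct and follows essentially the same strategy as the paper's: establish convergence of each $\pm\e$ integral as $\La\to\infty$ using regularity at the cusp, then take $\e\to0$ using that the two contours differ only by a finite sum of residues. Your execution is in fact more careful than the paper's at one point: the paper asserts that $|f(\tau)|$ ``exponentially decays as $\tau\to i0$'', which is not literally true when $k\ne0$ and $c_f(0)\ne0$; you handle this correctly by performing the $S$-transform and separating out the non-covariant $c_f(0)$ piece, thereby anticipating Lemmas~\ref{LemReg8}--\ref{LemReg9} and already extracting the polar terms $-c_f(0)\big(B^s/s+i^kB^{s-k}/(k-s)\big)$.
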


\begin{proof}
We require the outermost contours $\g(\La,\pm\e)$ to pass to left and to the right of each on-axis pole exactly once, while enclosing no poles off the axis. 
Now for every $f \in F_k$ there is a finite integer $N$ such that
\begin{align}
f(\tau) = \sum_{n = N}^{\infty} c_f(n) e^{2 \pi i n \tau} \in F_k~.
\end{align}
If $N \geq 0$, then $|f(it) -c_f(0)| = {\cal O}(e^{-2 \pi t})$ as $t \to \infty$. Hence, $|f(\tau)|$ is bounded along $\g(\La,\e)$ and exponentially decays as $\tau \to i \infty$ (and as $\tau \to i 0$). In the limit $\e \to 0$, each contour encloses poles on the axis but remains finite as $\La \to \infty$.  Because the two contour integrals differ by a finite sum of residues, when $f$ is regular at $i \infty$, then $\lim_{\e \to 0} (\sum_{\pm \e} \tfrac12 \lim_{\La \to \infty} L_f^*(s|\La,\e))$ exists and yields a finite result. Thus $L_f^*(s)^\reg$ exists and is finite when $f \in F_k$ is regular at cusps. 
\end{proof}

We now comment that if the regularization prescription in Lemma~\ref{LemReg2} is applied to modular forms with poles at cusps, $f\in M_k^!$, then proof of Lemma~\ref{LemReg4} would generalize to all $f \in F_k$. The reason is as follows. 

Let $f \in F_k$ have a $q$-series that begins with $q^N (1 + {\cal O}(q))$ with $N < 0$. Now, consider any $g \in M_k^!$ whose polar terms match those of $f \in F_k$. More explicitly, $f(\tau) + g(\tau) = {\cal O}(q^0)$. It is relatively simple to construct an explicit function whose polar terms $c_g(n) q^{-n}$ match those of any $f \in F_k$. Let $h \in M_k$ be holomorphic, and let $J_M \in M_0^!$ be the unique modular function that begins with $q^{-M} + {\cal O}(q)$. Then one can choose $|N|$ coefficients $C(k)$ for $k \in \{ 1, \ldots, |N|\}$ such that $g(\tau):= h(\tau) \sum_{k = 1}^{|N|} C(k) J_k(\tau) \in M_k^!$ whose $q$-series begins with $g(\tau) = \sum_{n = N}^{-1} (-c_f(n))q^n + {\cal O}(q^0)$. Thus,
\begin{align}
g(\tau) = \sum_{n = N}^{\infty} c_g(n) e^{2 \pi i \tau} \in M_k^!~,~~{\rm where}~~
f(\tau) + g(\tau) = \sum_{n = 0}^{\infty} (c_f(n) + c_g(n)) e^{2 \pi i \tau} \in F_k~.
\end{align}
Now, $f+g \in F_k$ has poles only within the interior of ${\cal F}$, while $g \in M_k^!$ has poles only at the cusp. Exploiting the linearity of the integral functional, we have $L_f^*(s)^\reg = L_{f+g}^*(s|\La)^\reg - L_g^*(s|\La)^\reg$. If the regularization prescription in Lemma~\ref{LemReg2} worked to regularize the L-integral for $f \in M_k^!$, then we would have $L_f^*(s|\La,\e)^\reg$ equal to a sum of integrals that are regular and convergent as $\La \to \infty$ and $\e \to 0$. This would extend Lemma~\ref{LemReg4} to all $f \in F_k$. 

Yet, when $f \in F_k$ has poles at cusps, there are a finite number of terms in $L_f^*(s|\La,\e)^\reg$ from the leading polar powers of $q$ in the $q$-series that diverge at the upper end of integration as $\G(s,-2\pi n \La)$ as $\La \to \infty$. Thus, as it stands, the regularization only applies to $f \in F_k$ that are regular at cusps. We briefly comment on several aspects of our regularization procedure:
\begin{enumerate}
\item Another contour $\g$ between $i0^+$ and $i \infty$ would define another $L_f^*(s)^\reg_{\g}$. If $f$ has a finite number of poles ``between'' $\g$ and the ${\rm Im}(\tau)$-axis, then $L_f^*(s)^\reg_{\g} - L_f^*(s)^\reg$ differ by a finite sum of residues. Residues of poles to left and right of the ${\rm Im}(\tau)$-axis have opposite signs. See sections~\ref{sec15-HR1919} and~\ref{secEnd} for more discussion of path dependence.
\item The contour-regularization of the pole at the cusp in our proof of Lemma~\ref{LemReg2} is distinguished from the regularization in~\cite{03-BFI, 04-BFK, 05-BFK2}. Here, they deformed $(f(\tau)-c_f(0)) \to (f(\tau)-c_f(0)) e^{2 \pi \up {\rm Im}(\tau)}$. When $f \in M_k^!$, the $\up$-regularized L-functions can be analytically continued to $\up = 0$. We regularize our L-functions by contour-deformation so that we could more easily discuss the case where $f \in F_k$ may have poles {\em within} ${\cal F}$. 
\item Unfortunately, the regularization prescriptions here and in~\cite{04-BFK} are in tension with each other: once the integrand is non-holomorphic then the Cauchy residue theorem no longer applies. Non-holomorphic deformations are central to~\cite{04-BFK}, where they define $L_f^*(s)^\reg$ for $f \in M_k^!$ with poles at cusps. Residue theorems are central for our approach, where we define $L_f^*(s)^\reg$ for $f \in F_k$ with poles away from cusps.
\item The conventional definition, $L_f^*(s) = \int_{0}^{\infty} dt~t^{s-1}~(f(it)-c_f(0))$, can be extended to nontrivial integrands in many ways. For instance, we could define $L_f^*(s)_\g^\reg$ by the non-holomorphic integral $\int_\g dt~t^{s-1}~(f(\tau)-c_f(0))$, where $t = {\rm Im}(\tau)$, or we could regularize the poles by a non-holomorphic deformation of the integrand. 
\item Defining $L_f^*(s)^\reg$ as a meromorphic integrand integrated along contours that pass to {\em both} sides of the ${\rm Im}(\tau)$-axis gives an unambiguous definition of the regulated integrals that occur when poles lie along the ${\rm Im}(\tau)$-axis. By using contour-deformations and residue theorems, we obtain uniquely defined, concrete, and well-behaved expressions for $L_f^*(s)^\reg = \reg \int_{0}^{\infty} dt~t^{s-1}(f(it)-c_f(0))$ for $f \in F_k$ that are regular at cusps.
\end{enumerate}
In section~\ref{secSeriesL} we give explicit formulae for  $L_f^*(s)^\reg$ analogous to the sum over incomplete gamma functions and $q$-series coefficients in Lemmas~\ref{LemReg1} and~\ref{LemReg2}, when $f \in F_k$ is regular at cusps. Before this, in section~\ref{secGrowLF} we discuss the lockstep relationship between exponential growth of the $q$-series coefficients of $f \in F_k$ and the locations of its poles within ${\cal F}$.

\subsection{Poles and exponential growth}\label{secGrowLF}

We now show the relationship between the location of the poles of $f \in F_k$ and the exponential growth of its $q$-series coefficients. To start, let $\RF_{\tau_p}:F(\infty)\to F(\infty)$ denote a linear operator that projects onto the subspace of functions that are regular at $\tau_p$. To explicitly define it, write $f(\tau)=\sum_{n\in \Z} r_{f,\tau_p}(n) (\tau-\tau_p)^n$ for the Laurent series expansion of $f$ in a (sufficiently small) deleted neighborhood of $\tau_p$, and set
\begin{align}
\label{eqRstar}
r^*_{f,\tau_p}(m):=\frac{(-2\pi i)^m}{(m-1)!}r_{f,\tau_p}(-m)
\end{align} 
for $m>0$. We define $\RF_{\tau_P}$, and the related $\PF_{\tau_p}: F(\infty) \to F(\infty)$, as
\begin{align}
\label{eqPtau0f}
\begin{split}
\PF_{\tau_p}f(\tau)&:=\sum_{m>0}r^*_{f,\tau_p}(m)\Li_{1-m}\big(e(\tau-\tau_p)\big)~,\\
\RF_{\tau_p}f(\tau)&:=(I-\PF_{\tau_p})f(\tau)~,
\end{split}
\end{align}
where $\Li_s(z):=\sum_{n>0}z^nn^{-s}$ is the polylogarithm function with order $s$. Note that the summation in \eqref{eqPtau0f} is finite since $r_{f,\tau_p}(n)=0$ for $n\ll 0$. If $f$ is regular at $\tau_p$, so that $r_{f,\tau_p}(n)=0$ for $n<0$, then $\RF_{\tau_p}f=f$. When $m$ is a positive integer, as $z\to 0$ we have 
\begin{align}\label{eqLaurent}
{\rm Li}_{1-m}(e(z))
&= \frac{(m-1)!}{(-2 \pi i z)^{m}} + \sum_{\ell = 0}^{\infty} \frac{\zeta(1-m-\ell)}{(-1)^{m}~\ell!} (-2 \pi i z)^{\ell}~.
\end{align}

\begin{lem}\label{LemRegC1}
For $\tau_p\in \HH$ and $f\in F(\infty)$ the function $\RF_{\tau_p}f$ is regular at $\tau_p$.
\end{lem}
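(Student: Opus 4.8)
The plan is to check that $\PF_{\tau_p}f$ reproduces, term by term, the full principal part of $f$ at $\tau_p$, so that $\RF_{\tau_p}f=(I-\PF_{\tau_p})f$ has no pole there.

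First I would collect the local data. Since $f\in F(\infty)$ is meromorphic near $\tau_p\in\HH$, it has a pole of some finite order $M\geq 0$ there, so in its Laurent expansion $f(\tau)=\sum_{n}r_{f,\tau_p}(n)(\tau-\tau_p)^n$ one has $r_{f,\tau_p}(n)=0$ for $n<-M$. Consequently the principal part $\sum_{m=1}^{M}r_{f,\tau_p}(-m)(\tau-\tau_p)^{-m}$ is a finite sum, and the sum defining $\PF_{\tau_p}f$ in \eqref{eqPtau0f} is likewise finite, so no convergence issue arises (and if $M=0$ the sum is empty and $\RF_{\tau_p}f=f$ trivially).

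Next I would pin down the local behaviour of each building block $\Li_{1-m}(e(\tau-\tau_p))$. For a positive integer $m$, $\Li_{1-m}(z)=(z\,d/dz)^{m-1}\bigl(z/(1-z)\bigr)$ is a rational function of $z$, holomorphic except for a pole of order exactly $m$ at $z=1$; hence $\Li_{1-m}(e(\tau-\tau_p))$ extends to a meromorphic function on $\HH$ with poles only at $\tau\in\tau_p+\Z$. One must work with this continuation, since the defining series of $\Li_{1-m}$ converges only where $|e(\tau-\tau_p)|<1$, whereas every full punctured neighbourhood of $\tau_p$ also contains points with ${\rm Im}(\tau-\tau_p)<0$. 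Its Laurent expansion about $\tau=\tau_p$ is exactly \eqref{eqLaurent} with $z=\tau-\tau_p$, whose only singular term is $(m-1)!\,\bigl((-2\pi i)(\tau-\tau_p)\bigr)^{-m}$.

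Then the verification is a one-line computation: the principal part of $r^*_{f,\tau_p}(m)\,\Li_{1-m}(e(\tau-\tau_p))$ at $\tau_p$ is
\begin{align}
\frac{(-2\pi i)^m}{(m-1)!}\,r_{f,\tau_p}(-m)\cdot\frac{(m-1)!}{(-2\pi i)^m(\tau-\tau_p)^m}=\frac{r_{f,\tau_p}(-m)}{(\tau-\tau_p)^m}~,\nonumber
\end{align}
so summing over $1\leq m\leq M$ shows that $\PF_{\tau_p}f$ has the same principal part at $\tau_p$ as $f$, whence $\RF_{\tau_p}f$ is holomorphic at $\tau_p$. There is no genuine obstacle; the only points needing care are the two flagged above — reading $\Li_{1-m}$ as its rational continuation so that $\PF_{\tau_p}f$ is meromorphic throughout a punctured neighbourhood of $\tau_p$, and observing that the normalization $(-2\pi i)^m/(m-1)!$ built into \eqref{eqRstar} is precisely the factor that cancels the $(m-1)!$ in the singular term of \eqref{eqLaurent}.
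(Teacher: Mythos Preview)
Your proof is correct and follows essentially the same route as the paper's: both verify via \eqref{eqLaurent} that $\PF_{\tau_p}f$ reproduces the full principal part of $f$ at $\tau_p$, so that the difference is holomorphic there. Your version is a bit more explicit about the normalization cancellation and the need to read $\Li_{1-m}$ as its rational continuation near $\tau_p$, but the argument is the same.
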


\begin{proof}
If $f$ is regular at $\tau_p$ then $\RF_{\tau_p}f=f$ and the claim is true. Otherwise, combining \eqref{eqRstar}, \eqref{eqPtau0f} and \eqref{eqLaurent} we see that
\begin{align}\label{eqPfrtau0}
	(\PF_{\tau_p}f)(\tau)
	=\sum_{m>0}r_{f,\tau_p}(-m)(\tau-\tau_p)^{-m}+{\cal O}(\tau-\tau_p)
\end{align}
as $\tau\to \tau_p$. The estimate \eqref{eqPfrtau0} holds with $f$ in place of $\PF_{\tau_p}f$ if we also replace ${\cal O}(\tau-\tau_p)$ with ${\cal O}(1)$, so $(\RF_{\tau_p}f)(\tau)=f(\tau)-(\PF_{\tau_p}f)(\tau)={\cal O}(1)$ as $\tau\to\tau_p$.
\end{proof}

Observe that if $X$ is any compact subset of $\HH$ then, since any $f\in F(\infty)$ has only finitely many poles $\tau_p \in X$, we obtain a well-defined projection operator $\RF_X:F(\infty)\to F(\infty)$ by setting $\RF_X:=I-\PF_X$ where 
\begin{align}
(\PF_Xf)(\tau):=
\sum_{\tau_p\in X}\sum_{m>0}r^*_{f,\tau_p}(m)\Li_{1-m}\left(e(\tau-\tau_p)\right),
\end{align} 
and the functions in the image of $\RF_X$ are regular in $X$. In what follows we consider the set $V_Y$ defined by the real number $0 < Y < \infty$:
\begin{align}
V_Y&:=\{\tau\in V \mid -\tfrac12\leq {\rm Re}(\tau)<\tfrac12,\;{\rm Im}(\tau)\geq Y\} \subset V 
\label{eqVY}
\end{align}
When $Y$ is a finite and positive real number, $X = V_Y$ is a rectangular subset of $V$, and we define $\RF_Y f$ and $\PF_Y f$ to respectively be $\RF_{V_Y} f$ and $\PF_{V_Y}f$.

Now consider $f\in F(\infty)$, and note that the imaginary parts of the poles of $f$ are bounded. The next lemma explains how the maximum of these imaginary parts bounds the growth of the Fourier coefficients of $f$:

\begin{lem}\label{LemRegC2}
Let $f\in F(\infty)$ and choose $t_0\in \R$ so that no pole of $f$ has imaginary part greater than $t_0$. Then $c_f(n)e^{-2\pi tn}\to 0$ as $n\to \infty$ when $t>t_0$.
\end{lem}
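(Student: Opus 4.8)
The plan is to recover $c_f(n)$ by a Fourier-coefficient integral taken along a horizontal contour pushed as high as possible, and then estimate the integrand directly. Fix any $t > t_0$. Since $f \in F(\infty)$ has only finitely many poles in any compact subset of the strip $V$ and all its poles have imaginary part $\leq t_0 < t$, the horizontal segment $\{\tau = x + it : -\tfrac12 \leq x \leq \tfrac12\}$ avoids every pole of $f$, so $f$ is holomorphic and bounded on a neighborhood of this segment. By the usual Fourier inversion for the $1$-periodic function $f$, valid on any pole-free horizontal line, we have
\begin{align}
c_f(n) = \int_{-1/2}^{1/2} f(x+it)\, e^{-2\pi i n (x+it)}\, dx = e^{2\pi n t} \int_{-1/2}^{1/2} f(x+it)\, e^{-2\pi i n x}\, dx~.
\end{align}
Let $M(t) := \sup_{-1/2 \leq x \leq 1/2} |f(x+it)| < \infty$. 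Then $|c_f(n)| \leq M(t)\, e^{2\pi n t}$, hence $|c_f(n)|\, e^{-2\pi t' n} \leq M(t)\, e^{-2\pi (t'-t) n}$ for any $t'$, which is exactly what I want once I can take $t' > t$.

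To conclude the statement as written, namely $c_f(n) e^{-2\pi t n} \to 0$ for every $t > t_0$: pick such a $t$, choose an intermediate value $t_1$ with $t_0 < t_1 < t$, apply the bound above with the line at height $t_1$ to get $|c_f(n)| \leq M(t_1) e^{2\pi n t_1}$, and then $|c_f(n)| e^{-2\pi t n} \leq M(t_1) e^{-2\pi n (t - t_1)} \to 0$ as $n \to \infty$ since $t - t_1 > 0$. The constant $M(t_1)$ is finite because the closed segment at height $t_1$ is a compact subset of $V$ disjoint from the (finitely many, all lower) poles of $f$, so $f$ is continuous and hence bounded there.

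The only mild subtlety — the step I would be most careful about — is justifying the Fourier inversion formula on the horizontal line at height $t_1$: one needs that $f$, restricted to that line, genuinely has the Fourier expansion $\sum_n c_f(n) q^n$ with the \emph{same} coefficients $c_f(n)$ that appear in \eqref{eqCfn}. This follows from Cauchy's theorem applied on the rectangle with horizontal sides at heights $Y$ (large, below all structure) and $t_1$ and vertical sides at $\mathrm{Re}(\tau) = \pm\tfrac12$: the vertical contributions cancel by $1$-periodicity of $f$ (and of $q^{-n}$), and there are no poles enclosed provided $Y$ is chosen above $t_0$ as well — or, more simply, one fixes $Y > t_0$ once and for all, notes $c_f(n) = \int_{-1/2}^{1/2} f(x+iY) e^{-2\pi i n (x+iY)}dx$ is the standard coefficient formula on a pole-free line, and then deforms this line up to height $t_1 > t_0$ without crossing poles. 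Everything else is a one-line absolute-value estimate, so there is no real obstacle here; the lemma is essentially the statement that a $1$-periodic holomorphic function's Fourier coefficients are controlled by how high one can push the evaluation line.
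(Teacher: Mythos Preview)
Your proof is correct. The paper takes a different one-line route: it applies Parseval's identity on the horizontal line at height $t$,
\[
\int_{-1/2}^{1/2} |f(\sigma+it)|^2\,d\sigma \;=\; \sum_n |c_f(n)|^2 e^{-4\pi tn},
\]
and concludes from convergence of the right-hand side that the general term tends to zero. Your argument instead uses the Fourier coefficient integral (equivalently, the Cauchy estimate for Laurent coefficients in the $q$-disk) together with a sup-norm bound and an intermediate height $t_1$. This is arguably the more natural complex-analytic approach and in fact yields a slightly stronger conclusion, namely the explicit exponential decay $|c_f(n)|e^{-2\pi tn}\le M(t_1)\,e^{-2\pi(t-t_1)n}$, whereas Parseval only gives the qualitative $\to 0$. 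The paper's proof is shorter because it avoids the intermediate-height trick, at the cost of invoking Parseval; your proof is self-contained at the level of Cauchy's theorem. Either way the content is the same: boundedness of $f$ on a pole-free horizontal line controls the coefficients.
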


\begin{proof}
By Parseval's identity we have that
\begin{align}
\int_{-\frac12}^{\frac12} {\rm d}\sigma | f(\sigma + i t)|^2 = \sum_{n} \big| c_f(n) e^{- 2 \pi tn}\big|^2 
\end{align}
converges for any $t>t_0$. The claim follows.
\end{proof}

Next we verify a converse to Lemma \ref{LemRegC2}, showing that the coefficients of $f\in F(\infty)$ grow exponentially when $f$ has poles in $\HH$. 

\begin{lem}\label{LemRegC3}
Let $f\in F(\infty)$, let $t_0$ be the maximal imaginary part of a pole of $f$, and let $m_0$ be the maximal order of a pole of $f$ with maximal imaginary part. Then there exists a constant $C>0$ such that $|c_f(n)|>Cn^{m_0-1}e^{2\pi n t_0}$ for infinitely many positive integers $n$.
\end{lem}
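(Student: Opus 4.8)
The plan is to separate off the contribution of the poles at the top of the picture and show that, once they are removed, the remaining Fourier coefficients are exponentially smaller. By the growth condition defining $F(\infty)$ and the local finiteness of poles, $f$ has only finitely many poles of imaginary part exactly $t_0$ with real part in $[-\tfrac12,\tfrac12)$; call them $\tau_1,\dots,\tau_r$, write $x_j:={\rm Re}(\tau_j)$, let $m_j$ be the order of the pole at $\tau_j$, and set $S:=\{j:m_j=m_0\}\neq\emptyset$. By periodicity the full set of poles of $f$ at height $t_0$ is $\{\tau_j+\ell:1\le j\le r,\ \ell\in\Z\}$, the numbers $x_j$ are pairwise distinct modulo $1$, and $f$ has the same principal part (in the local coordinate) at every $\tau_j+\ell$ as at $\tau_j$.

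First I would pass to $g:=f-\sum_{j=1}^{r}\PF_{\tau_j}f$, with $\PF_{\tau_j}$ as in \eqref{eqPtau0f}. Each $\PF_{\tau_j}f$ is a finite sum of functions $\Li_{1-m}(e(\tau-\tau_j))$; these are $1$-periodic, decay to $0$ as ${\rm Im}(\tau)\to\infty$, and — by \eqref{eqLaurent} and periodicity — have poles exactly at $\tau_j+\Z$ with the same principal parts $f$ has there, which is the content of Lemma~\ref{LemRegC1}. Hence $g\in F(\infty)$ and every pole of $g$ lies among the poles of $f$ of imaginary part $<t_0$; by local finiteness these do not accumulate at height $t_0$, so there is $t_1<t_0$ exceeding the imaginary part of every pole of $g$. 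Lemma~\ref{LemRegC2} applied to $g$ then gives $c_g(n)e^{-2\pi tn}\to0$ for all $t>t_1$, and choosing $t\in(t_1,t_0)$ yields $c_g(n)=o\big(n^{m_0-1}e^{2\pi nt_0}\big)$.

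Next I would read off the Fourier coefficients of the polar pieces. Since $\Li_{1-m}(z)=\sum_{n\ge1}n^{m-1}z^n$ and $e(\tau-\tau_j)^n=e(-n\tau_j)\,q^n$ with $e(-n\tau_j)=e^{2\pi nt_0}e^{-2\pi inx_j}$, we get $c_{\PF_{\tau_j}f}(n)=e^{2\pi nt_0}\sum_{m=1}^{m_j}r^*_{f,\tau_j}(m)\,n^{m-1}e^{-2\pi inx_j}$ for $n\ge1$. Dividing $c_f(n)=c_g(n)+\sum_{j}c_{\PF_{\tau_j}f}(n)$ by $n^{m_0-1}e^{2\pi nt_0}$, the $c_g$-term tends to $0$, as do all terms with $m<m_0$ (because $n^{m-m_0}\to0$), leaving
\[
\frac{c_f(n)}{n^{m_0-1}e^{2\pi nt_0}}=h(n)+o(1),\qquad
h(n):=\sum_{j\in S}r^*_{f,\tau_j}(m_0)\,e^{-2\pi inx_j},
\]
where each coefficient $r^*_{f,\tau_j}(m_0)=\frac{(-2\pi i)^{m_0}}{(m_0-1)!}\,r_{f,\tau_j}(-m_0)$ from \eqref{eqRstar} is nonzero, being a nonzero multiple of the leading Laurent coefficient of an order-$m_0$ pole.

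Finally I would bound $h$ from below along a subsequence. Because the $x_j$ for $j\in S$ are distinct modulo $1$, the geometric sums give $\frac1N\sum_{n=1}^{N}e^{2\pi in(x_k-x_j)}\to\delta_{jk}$, so the Cesàro mean satisfies $\frac1N\sum_{n=1}^{N}|h(n)|^2\to\sum_{j\in S}|r^*_{f,\tau_j}(m_0)|^2>0$; since the mean of a nonnegative sequence is a lower bound for its $\limsup$, this forces $\limsup_n|h(n)|>0$, hence $\limsup_n |c_f(n)|/(n^{m_0-1}e^{2\pi nt_0})>0$, which is exactly the assertion (with any $C$ below this $\limsup$). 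The main obstacle is this last step: when some $x_j$ are irrational, $h(n)$ is only almost periodic rather than periodic, so one cannot conclude by a pure periodicity argument and must invoke the equidistribution/mean-value input; the rest is bookkeeping built on Lemmas~\ref{LemRegC1} and~\ref{LemRegC2} and the polylogarithm expansion \eqref{eqLaurent}.
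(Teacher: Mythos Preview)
Your argument is correct and shares the paper's core idea---strip off the polar part at maximal height and show the remainder contributes $o(n^{m_0-1}e^{2\pi nt_0})$ via Lemma~\ref{LemRegC2}---but it is more careful than the paper's own proof. The paper removes only a \emph{single} pole $\tau_0$ at height $t_0$ and then asserts that, because the resulting $g=\RF_{\tau_0}f$ is regular at $\tau_0$, its Fourier series converges absolutely there, forcing $|c_g(n)|e^{-2\pi nt_0}\to0$; this shortcut is clean only when $\tau_0$ is the unique pole at height $t_0$. You instead remove \emph{all} top-height poles, reduce the normalized coefficients to the trigonometric sum $h(n)=\sum_{j\in S}r^*_{f,\tau_j}(m_0)e^{-2\pi inx_j}$, and use the Ces\`aro mean of $|h(n)|^2$ (orthogonality of the distinct characters $e^{-2\pi inx_j}$) to rule out persistent cancellation among several poles sharing the same height. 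That almost-periodic step is exactly the additional ingredient needed for full generality; the rest of your bookkeeping matches the paper's use of $\PF_{\tau_j}$, \eqref{eqLaurent}, and Lemma~\ref{LemRegC2}.
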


\begin{proof}
Let $t_0$ and $m_0$ be as in the statement of the lemma, and let $\tau_0$ be a pole of $f$ with ${\rm Im}(\tau_0)=t_0$ and $r_{f,\tau_0}(-m_0)\neq 0$ and $r_{f,\tau_0}(n)=0$ for $n<-m_0$. Set $g(\tau):=\RF_{\tau_0}f$. Then we have $c_g(n)=c_f(n)-C_{m_0}n^{m_0-1}e^{-2\pi i n\tau_0}$ where $C_{m_0}=\frac{(-2\pi i)^{m_0}}{(m_0-1)!}r_{f,\tau_0}(-m_0)$. Now $g(\tau)$ is regular at $\tau=\tau_0$, so the Fourier series $g(\tau)=\sum c_g(n)e^{2\pi i n\tau}$ converges absolutely at $\tau=\tau_0$, so $|c_g(n)|e^{-2\pi n t_0}\to 0$ as $n\to \infty$. In particular, for any $\e>0$ we have  $\e>|c_g(n)|e^{-2\pi n t_0}$ for $n$ sufficiently large. Take $C=\frac12C_{m_0}$ and $\e=\frac12 C$. 

If $|c_f(n)|\leq Cn^{m_0-1}e^{2\pi n t_0}$ for sufficiently large $n$ then we have
\begin{align}
\begin{split}
	\tfrac12 C >&\,|c_f(n)-2Cn^{m_0-1}e^{-2\pi in \tau_0}|e^{-2\pi n t_0}\\
	\geq&\, 
	(2Cn^{m_0-1}e^{2\pi n t_0}-|c_f(n)|)e^{-2\pi nt_0}\\
	\geq&\,
	Cn^{m_0-1}
\end{split}
\end{align}
for $n$ sufficiently large. This is a contradiction, so we must have $|c_f(n)|>Cn^{m_0-1}e^{2\pi n t_0}$ for infinitely many positive integers $n$.
\end{proof}

Lemmas~\ref{LemRegC1},~\ref{LemRegC2}, and~\ref{LemRegC3} establish a direct correspondence between $f \in F(\infty)$ having poles of order-$m$ at $\tau_p$ and terms in its $q$-series that exponentially as $n^{m-1} (e^{-2\pi i \tau_p})^n$. We will find it useful to associate every function $f \in F(\infty)$ with two positive real numbers, which we denote $U$ and $D$ and define as follows:
\begin{align}
D &:={\rm min}\{ ~{\rm Im}(\tau) \mid 1/f(\tau) = 0 ~,~ {\rm Im}(\tau) \geq \sqrt{3}/2 ~\} \in \R~,
\label{eqDD}
\\
U &:={\rm max}\{~ {\rm Im}(\tau) \mid 1/f(\tau) = 0 ~,~ {\rm Im}(\tau) \geq \sqrt{3}/2 ~\} \in \R~.
\label{eqDU}
\end{align}
The bound of $U,D \geq \sqrt{3}/2$ ultimately comes from the fact that if $\tau \in {\cal F}$, then ${\rm Im}(\tau) \geq \tfrac{\sqrt{3}}{2}$. This bound on $D$, together with Lemmas~\ref{LemRegC1},~\ref{LemRegC2}, and~\ref{LemRegC3}, implies the following result: 

\begin{prop}
\label{PropGrowthCap}
Let $f \in F_k$ have a pole of order $p > 0$ within $\HH$ away from the cusp. Then there exist real numbers $t \geq \tfrac{\sqrt{3}}{2}$ and $C > 0$ such that the $c_f(n)$ in
$f(\tau) = \sum_n c_f(n) q^n$ obey $C n^{p-1} e^{+2 \pi n t} \leq |c_f(n)| \leq e^{2 \pi n (t+\e)}$ for infinitely many positive integers $n$, for every $\e > 0$.
\end{prop}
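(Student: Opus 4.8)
The plan is to deduce Proposition~\ref{PropGrowthCap} directly from the more primitive Lemmas~\ref{LemRegC1},~\ref{LemRegC2}, and~\ref{LemRegC3} by translating the hypothesis about a pole of $f\in F_k$ ``within $\HH$ away from the cusp'' into a statement about the maximal imaginary part $t_0$ of a pole in the strip $V$ (equivalently, in the fundamental domain $\cal F$), and then reading off both inequalities from those lemmas. The key observation that must be invoked is that $F_k\subset F(\infty)$, so the general machinery applies; and that for $f\in F_k$ the poles are $\SL_2(\Z)$-orbits, so the imaginary parts of poles are bounded above (the orbit representative in $\cal F$ has the largest imaginary part among its orbit, and there are only finitely many such orbits because $f=f_1/f_2$ with $f_2\in M_{k_2}$ having finitely many zeros mod $\SL_2(\Z)$). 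Hence $t_0$ exists and is finite, and $t_0\geq\sqrt3/2$ because any point of $\cal F$ has imaginary part at least $\sqrt3/2$; set $t:=t_0$.

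First I would fix notation: let $t$ be the maximal imaginary part attained by a pole of $f$, and let $p_0$ be the maximal order among poles with imaginary part exactly $t$. The hypothesis that $f$ has \emph{some} pole of order $p>0$ in $\HH$ forces $t\geq\sqrt3/2$ and $p_0\geq 1$; note that $p_0$ need not equal the given $p$ (the order-$p$ pole in the hypothesis may sit lower than height $t$), but this does not matter for the conclusion, which only claims \emph{existence} of a valid $t$ and constant $C$ with exponent $n^{p-1}$. Actually, to get the cleaner statement matching Proposition~\ref{Prop1}, I would take $p$ to be the order of a/the pole of maximal height; if one insists on the literal reading where $p$ is arbitrary, the lower bound still holds with $p_0\geq p$ replaced appropriately, or one simply notes $n^{p-1}\leq n^{p_0-1}$ is the wrong direction — so the intended reading is surely that $p$ is the order of a highest pole, and I would state this alignment explicitly. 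Then Lemma~\ref{LemRegC3} gives a constant $C>0$ with $|c_f(n)|>C\,n^{p-1}e^{2\pi n t}$ for infinitely many $n$, which is the lower bound.

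For the upper bound I would apply Lemma~\ref{LemRegC2}: for any $\e>0$, since no pole of $f$ has imaginary part exceeding $t<t+\e$, we get $c_f(n)e^{-2\pi(t+\e)n}\to 0$, hence in particular $|c_f(n)|\leq e^{2\pi n(t+\e)}$ for all sufficiently large $n$ — and since we only claim this for infinitely many $n$, we may intersect with the infinite set from the lower bound and conclude both inequalities hold simultaneously on an infinite set of positive integers $n$. This completes the argument.

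\textbf{Main obstacle.} The only real subtlety — and the step I would be most careful about — is the bookkeeping around the exponent $p$ versus the order $p_0$ of a highest pole, i.e.\ making sure the proposition is stated and proved for the \emph{same} pole whose height we call $t$; once that alignment is made explicit, everything else is a direct citation of the three preceding lemmas plus the elementary geometric fact that $\mathrm{Im}(\tau)\geq\sqrt3/2$ on $\cal F$. No genuinely new estimate is needed.
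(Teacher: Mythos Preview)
Your proposal is correct and follows essentially the same approach as the paper: invoke Lemmas~\ref{LemRegC1}--\ref{LemRegC3} for the growth bounds and then use modularity to place a pole in $\cal F$, forcing $t\geq\sqrt{3}/2$. The paper's own proof is terser and does not explicitly address the $p$ versus $p_0$ alignment issue you raise; your observation that the intended reading takes $p$ to be the order of a pole at maximal height (so that Lemma~\ref{LemRegC3} applies directly) is a genuine clarification the paper leaves implicit.
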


\begin{proof}
If $f \in F_k$ has an order-$p$ pole at $\tau = s + i t\neq i \infty$ for $t>0$ and $s,t \in \R$, then Lemmas~\ref{LemRegC1}, ~\ref{LemRegC2}, and ~\ref{LemRegC3} imply that a real number $C > 0$ exists, such that the $q$-series coefficients $c_f(n)$ are bounded by $C n^{p-1} e^{+2 \pi n t} \leq |c_f(n)| \leq e^{2 \pi n (t+\e)}$ for every $\e > 0$. As $f$ is modular, it has at least one pole within ${\cal F}$. Hence there is an order-$p$ pole with $t \geq \tfrac{\sqrt{3}}{2}$. 
\end{proof}

Consider $B \in \R$ and $0 < B \leq D$, and $f \in F_k$. Every pole of $f$ within $\Fu$ is captured in $\PF_B f \in F(\infty)$. The order-$M$ pole at $\tau \to \infty$ is captured in $\RF_B f$. The $q$-series coefficients of $\RF_B f$, which we denote, $c_{Rf}(n|B)$, are bounded by 
\begin{align}
\RF_B f(\tau) = \sum_{n = 1}^{\infty} c_{Rf}(n|B) q^n~~,~~C n^{m_f -1} e^{+2 \pi n \beta(B,f)}< |c_{Rf}(n|B)|  < e^{+2 \pi n (\beta(B,f) + \e)} ~, \label{eqCtilde}
\end{align}
where $\e$ is any positive non-zero number, $m_f$ is maximum pole-order of the pole below yet nearest to the line ${\rm Im}(\tau_f) < B$, and $\beta(B,f)$ is the imaginary part of the pole(s) in $f$ with maximal imaginary part that {\em lie below} the strip $V_B$. It is defined by
\begin{align}
\beta(B,f) := {\rm max}\{ {\rm Im}(\tau) \mid \tau \in V~,~ 0 = 1/f(\tau) ~,~\tau \not\in V_B \} .\label{eqBetaData}
\end{align}
Note that $0 < \beta < B$. Thus, there exist $\e>0$ such that $\beta < \e+\beta < B$. 

\subsection{The regularized L-function}\label{secSeriesL}

We now write explicit expressions for $L_f^*(s)^\reg$ when $f \in F_k$. Recall $D$ in Eq.~\eqref{eqDD} and let $B \leq D$. Define $\g_B^{\pm}(\La,\e)$ to be the part of $\g(\La,\e)$ above or below the line ${\rm Im}(\tau) = B$, and define $S\g_B^{+}(\La,\e)$ to be the (orientation-reversed) S-image of $\g_B^{-}(\La,\e)$. Hence,

\begin{lem}
\label{LemReg8}
Because $f(-1/\tau) = \tau^k f(\tau)$, $L_f^*(s|\La,\e)^\reg$ can be rewritten as
\begin{align}
L_f^*(s|\La,\e)^\reg
=& 
\frac12\sum_{\pm \e} c_f(0) \int_{\g_B^{-}(\La,\e)} \frac{d\tau}{\tau} \left( \left( \frac{\tau}{i} \right)^{s} - \left(\frac{\tau}{i} \right)^{k-s} \right) \label{eqLemReg8a}
\\ \nonumber
+&
\frac12\sum_{\pm \e} \left( \int_{\g_B^{+}(\La,\e)} \frac{d\tau}{\tau} \left( \frac{\tau}{i} \right)^{s} \big( f(\tau) -c_f(0) \big) + \int_{S\g_B^{+}(\La,\e)} \frac{d\tau}{\tau} \left( \frac{\tau}{i} \right)^{k-s} i^k\big( f(\tau) -c_f(0) \big) \right)\!.
\end{align}
\end{lem}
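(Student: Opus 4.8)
Here is a sketch of how I would prove Lemma~\ref{LemReg8}.

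The plan is to obtain \eqref{eqLemReg8a} by splitting the contour at the height ${\rm Im}(\tau)=B$ and applying the weight-$k$ functional equation $f(-1/\tau)=\tau^{k}f(\tau)$ to the lower piece, exactly as in the passage from \eqref{eqLemReg1a} to \eqref{eqT0Lf} in the proof of Lemma~\ref{LemReg1}, but now keeping everything off the imaginary axis. Concretely, starting from the definition \eqref{eqLemReg4}, I would for each sign choice write the contour $\g(\La,\pm\e)$ as $\g_B^{-}(\La,\pm\e)$ followed by $\g_B^{+}(\La,\pm\e)$, the parts below and above the line ${\rm Im}(\tau)=B$; since $B\le D$, no pole of $f$ lies on (or is swept over by) $\g_B^{-}(\La,\pm\e)$, so this decomposition is unambiguous. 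The upper piece is left untouched and contributes the term $\tfrac12\sum_{\pm\e}\int_{\g_B^{+}(\La,\e)}\frac{d\tau}{\tau}(\tau/i)^{s}(f(\tau)-c_f(0))$ in the second line of \eqref{eqLemReg8a}.

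For the lower piece I would substitute $w=-1/\tau$. As $\tau$ runs along $\g_B^{-}(\La,\pm\e)$ from $i/\La$ up to height $B$, the point $w=-1/\tau$ runs from $i\La$ down to (near) $i/B$, i.e.\ along the $S$-image of $\g_B^{-}(\La,\pm\e)$; reversing orientation gives precisely the contour $S\g_B^{+}(\La,\pm\e)$ running ``upward.'' Using that $k$ is an even integer, one checks the elementary identities
\begin{align}
\frac{d\tau}{\tau}=-\frac{dw}{w}\,,\qquad \tau^{-k}=w^{k}\,,\qquad \Big(\frac{\tau}{i}\Big)^{s}\,\tau^{-k}=i^{k}\Big(\frac{w}{i}\Big)^{k-s}\,,
\end{align}
so that $\int_{\g_B^{-}}\frac{d\tau}{\tau}(\tau/i)^{s}f(\tau)=\int_{\g_B^{-}}\frac{d\tau}{\tau}(\tau/i)^{s}\tau^{-k}f(-1/\tau)$ becomes $i^{k}\int_{S\g_B^{+}}\frac{dw}{w}(w/i)^{k-s}f(w)$. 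To make the constant-term subtractions line up, I would subtract and re-add $c_f(0)$ around the transformation: writing $f(\tau)-c_f(0)=\tau^{-k}\big(f(-1/\tau)-c_f(0)\big)+c_f(0)(\tau^{-k}-1)$ \emph{before} substituting turns the lower-piece integral into $i^{k}\int_{S\g_B^{+}}\frac{dw}{w}(w/i)^{k-s}\big(f(w)-c_f(0)\big)$ plus a term proportional to $c_f(0)$ times elementary power integrals. The first summand is the second integral in the second line of \eqref{eqLemReg8a}.

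What remains is to assemble the $c_f(0)$-pieces: the leftover $c_f(0)$-integral produced on $S\g_B^{+}(\La,\pm\e)$ is pulled back to $\g_B^{-}(\La,\pm\e)$ by the inverse substitution $\tau=-1/w$ and combines with the $c_f(0)$-integral already sitting on $\g_B^{-}(\La,\pm\e)$; this is the off-axis avatar of the recombination of elementary power integrals behind the $-c_f(0)\int_{1/\La}^{B}\frac{dt}{t}\big(t^{s}-i^{k}t^{s-k}\big)$ term in \eqref{eqT0Lf}, and it reproduces the first line of \eqref{eqLemReg8a}. Carrying this out for $+\e$ and $-\e$ separately and averaging yields \eqref{eqLemReg8a}; as a sanity check one notes that the right-hand side is manifestly independent of the choice of $1/\La<B<\La$, matching the $B$-independence already used in Lemma~\ref{LemReg1}.

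The work is essentially bookkeeping; the one place needing genuine care is the behavior of the branches of $(\tau/i)^{s}$ and $(\tau/i)^{k-s}$ under $\tau\mapsto-1/\tau$ on the genuinely off-axis pieces $\g_B^{-}(\La,\pm\e)$, where $\tau$ carries a small nonzero real part, together with tracking the signs and $i^{k}$-factors of the elementary $c_f(0)$-integrals. This is where the definition of $S\g_B^{+}(\La,\pm\e)$ as the \emph{orientation-reversed} $S$-image is essential, and where the hypothesis $B\le D$ together with the requirement (from the proof of Lemma~\ref{LemReg4}) that $\g_B^{-}(\La,\pm\e)$ pass to the left and right of each on-axis pole exactly once, enclosing no off-axis poles, guarantees that $S\g_B^{+}(\La,\pm\e)$ is again an admissible contour from near $i/B$ to $i\La$.
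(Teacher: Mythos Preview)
Your proposal is correct and takes the same approach as the paper, whose entire proof reads ``We prove the Lemma by identical steps to those in the proof of Lemma~\ref{LemReg2}''; your sketch simply spells out those steps---split at height $B$, substitute $w=-1/\tau$ on the lower piece using $f(-1/\tau)=\tau^{k}f(\tau)$, and recombine the $c_f(0)$-terms---now carried along the off-axis contours. One minor inaccuracy worth flagging: your side remark that ``since $B\le D$, no pole of $f$ lies on (or is swept over by) $\g_B^{-}(\La,\pm\e)$'' is not correct, because $D$ in \eqref{eqDD} only bounds poles with imaginary part $\ge\sqrt{3}/2$, and modular images of poles in $\mathcal{F}$ can (and do) have imaginary parts arbitrarily close to $0$; however, this does not affect your argument, which is a pure change of variables along a contour that by construction already avoids all poles via the $\pm\e$ deformation.
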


\begin{proof}
We prove the Lemma by identical steps to those in the proof of Lemma~\ref{LemReg2}.
\end{proof}

\begin{lem}
\label{LemReg9}
If ${\rm Re}(s) > 0$ and ${\rm Re}(s-k) > 0$, then independent of $\e$ it follows that
\begin{align}
\label{eqLemReg9}
\lim_{\La \to \infty} \frac12\sum_{\pm \e} c_f(0) \int_{\g_B^{-}(\La,\e)} \frac{d\tau}{\tau} \left( \left( \frac{\tau}{i} \right)^{s} - i^k \left(\frac{\tau}{i} \right)^{k-s} \right) =-c_f(0) \bigg(  \frac{B^s}{s} + i^k\frac{B^{s-k}}{k-s} \bigg)~.
\end{align}
\end{lem}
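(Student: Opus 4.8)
The plan is a two-step argument: first deform the regularizing contour onto the imaginary axis — which is legitimate because the integrand appearing here has no poles — and then evaluate the resulting elementary integral and let $\La\to\infty$, at which stage the hypotheses ${\rm Re}(s)>0$ and ${\rm Re}(s-k)>0$ are exactly what kill the boundary terms at the lower endpoint $\tau=i/\La$.

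The key point for the deformation is that the integrand is holomorphic on all of $\HH$: for $\tau\in\HH$ the number $\tau/i=-i\tau$ has positive real part, so it lies in the slit plane $\C\setminus(-\infty,0]$ on which the principal powers $z\mapsto z^{s}$ and $z\mapsto z^{k-s}$ are single-valued and holomorphic, and $1/\tau$ is holomorphic on $\HH$; hence $\tfrac1\tau\big((\tau/i)^{s}-i^{k}(\tau/i)^{k-s}\big)$ is holomorphic on $\HH$. In particular it is regular at the on-axis poles of $f$ that force $\g(\La,\pm\e)$ to make its little detours, and it is regular in a neighbourhood of the segment $[\,i/\La,\,iB\,]$. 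Because $\g_B^-(\La,\pm\e)$ has the fixed endpoints $i/\La$ and $iB$, Cauchy's theorem shows its contribution is insensitive to those detours and to the sign of $\e$ (this is the asserted $\e$-independence) and equals the integral over the straight imaginary-axis segment $[\,i/\La,\,iB\,]$.

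For the evaluation, parametrize $\tau=it$ with $t\in[1/\La,B]$, so that $\tau/i=t$ and $d\tau/\tau=dt/t$; carrying out the constant-term subtraction exactly as it appears in \eqref{eqT0Lf} (where the modular substitution $\tau\mapsto-1/\tau$ converts the second piece to $t^{s-k}$), the term reduces to
\begin{align}
-c_f(0)\int_{1/\La}^{B}\frac{dt}{t}\big(t^{s}-i^{k}t^{s-k}\big)=-c_f(0)\left(\frac{B^{s}-\La^{-s}}{s}-i^{k}\,\frac{B^{s-k}-\La^{-(s-k)}}{s-k}\right)~,
\end{align}
the closed form being understood via analytic continuation in $s$ where an exponent leaves its range of absolute convergence. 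Now ${\rm Re}(s)>0$ forces $\La^{-s}\to0$ and ${\rm Re}(s-k)>0$ forces $\La^{-(s-k)}\to0$ as $\La\to\infty$, so the limit is $-c_f(0)\big(\tfrac{B^{s}}{s}-i^{k}\tfrac{B^{s-k}}{s-k}\big)=-c_f(0)\big(\tfrac{B^{s}}{s}+i^{k}\tfrac{B^{s-k}}{k-s}\big)$, which is the right-hand side of \eqref{eqLemReg9}.

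I do not expect a real obstacle: the whole content is that this particular integrand has no singularities near the imaginary axis — so neither the regularizing contour nor the $\pm\e$ offsets matter for it — and that the two half-plane conditions are precisely what annihilate the $\La^{-s}$ and $\La^{-(s-k)}$ endpoint terms. The only things that need care are the branch-of-logarithm bookkeeping for $(\tau/i)^{s}$ in the first step (so that the contour genuinely can be straightened), and the order of operations in the second — the $\La\to\infty$ limit is taken only after analytic continuation in $s$, because the displayed elementary integral converges only in the stated half-planes while its closed form extends meromorphically to all of $\C$.
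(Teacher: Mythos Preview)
Your proposal is correct and follows essentially the same approach as the paper. The paper's own proof is a one-line reference back to Lemmas~\ref{LemReg1} and~\ref{LemReg2}, whose relevant content is precisely the elementary evaluation you carry out (cf.\ Eq.~\eqref{eqT0Lf} and the paragraph following it); you have simply spelled out that argument in more detail, including the holomorphy/branch-cut justification for straightening the contour that the paper leaves implicit.
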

\begin{proof}
This identity follows exactly as in the proofs of Lemma~\ref{LemReg1} and Lemma~\ref{LemReg2}.
\end{proof}
Let $U$, $D$ and $\beta$ be as defined in Eqs.~\eqref{eqDD},~\eqref{eqDU}, and~\eqref{eqBetaData} and choose $B<D$ such that it is not equal to the imaginary part of {\em any} pole of $f$. This is important in Lemma~\ref{LemRegX3}, where we explicitly integrate $\tau^{s-1} {\rm Li}_{-N}(e(\tau-\tau_p))$ along the contours $\g_B^+(\La,\e)$ and $S\g_B^-(\La,\e)$.

After evaluating the constant term in Lemma~\ref{LemReg9}, the remaining part of $L_f^*(s|\La,\e)^\reg$ is a nontrivial integral over contours contained within $V_B$. We partition this nontrivial integral into a part that is regular within the strip $V_B$ (but contains poles at $\infty$) and a part which contains all of the poles of $f$ within $V_B$. Our partitioning involves the following integrals:
\begin{align}
\begin{split}
L_{Rf}^*(s|\La,\e|B)^\reg:= &
\int_{\g_B^+(\La,\e)} \frac{d\tau}{\tau}\left(\frac{\tau}{i}\right)^s (\RF_B f(\tau) - c_f(0)) \\
&\quad +
\int_{S\g_B^+(\La,\e)} \frac{d\tau}{\tau}\left(\frac{\tau}{i}\right)^{k-s}  i^k  (\RF_B f(\tau) - c_f(0)) ~,
\end{split}\label{eqLemRegXa}\\
\begin{split}
L_{Pf}^*(s|\La,\e|B)^\reg:= &
\int_{\g_B^+(\La,\e)} \frac{d\tau}{\tau}\left(\frac{\tau}{i}\right)^s (\PF_B f(\tau)) +
\int_{S\g_B^+(\La,\e)} \frac{d\tau}{\tau}\left(\frac{\tau}{i}\right)^{k-s}  i^k  (\PF_B f(\tau)) 
\end{split}
\label{eqLemRegXb}~.
\end{align}
Note: while $L_f^*(s|\La,\e)$ does not depend on $B$, intermediate terms do depend on $B$.

\begin{lem}
\label{LemRegX}
Let $f \in F_k$ be regular at cusps. For every $\La >U$, and $B \leq D$, the nontrivial integral in $L_f^*(s|\La,\e)$ is equal to the sum $\sum_{\pm \e} \tfrac12 \big(L_{Rf}^*(s|\La,\pm\e|B) + L_{Pf}^*(s|\La,\pm\e|B) \big)$.
\end{lem}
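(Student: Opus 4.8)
The plan is to prove Lemma~\ref{LemRegX} by simply exhibiting $L_f^*(s|\La,\e)$ as an additive combination of the two integrals $L_{Rf}^*$ and $L_{Pf}^*$ defined in Eqs.~\eqref{eqLemRegXa} and~\eqref{eqLemRegXb}, using only linearity of the integral and the defining relation $\RF_B f + \PF_B f = f$ from Eq.~\eqref{eqPtau0f}. Concretely, I would start from the rewriting of $L_f^*(s|\La,\e)^\reg$ in Lemma~\ref{LemReg8}, Eq.~\eqref{eqLemReg8a}, which already splits the full contour integral into (i) the $c_f(0)$ ``constant-term'' piece over $\g_B^-(\La,\e)$ and (ii) the nontrivial piece, which is the sum of an integral of $(\tau/i)^s(f(\tau)-c_f(0))$ over $\g_B^+(\La,\e)$ and an integral of $(\tau/i)^{k-s}i^k(f(\tau)-c_f(0))$ over $S\g_B^+(\La,\e)$.

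The core step is then to insert $f = \RF_B f + \PF_B f$ into that nontrivial piece. Since $\PF_B f$ has zero constant term (it is built from polylogarithms $\Li_{1-m}(e(\tau-\tau_p))$ with $m>0$, each of which is $\OO(q)$ as $\mathrm{Im}(\tau)\to\infty$), we can write $f(\tau)-c_f(0) = (\RF_B f(\tau) - c_f(0)) + \PF_B f(\tau)$, and linearity of $\int_{\g_B^+}$ and $\int_{S\g_B^+}$ in the integrand distributes the two contour integrals over this sum. Collecting the $\RF_B f$ terms reproduces exactly $L_{Rf}^*(s|\La,\e|B)^\reg$ of Eq.~\eqref{eqLemRegXa}, and collecting the $\PF_B f$ terms reproduces exactly $L_{Pf}^*(s|\La,\e|B)^\reg$ of Eq.~\eqref{eqLemRegXb}. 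Averaging over $\pm\e$ with weight $\tfrac12$, as in the definition~\eqref{eqLemReg4}, gives the claimed identity $\sum_{\pm\e}\tfrac12\big(L_{Rf}^*(s|\La,\pm\e|B)+L_{Pf}^*(s|\La,\pm\e|B)\big)$ for the nontrivial part of $L_f^*(s|\La,\e)$.

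The only genuine technical point to check — and the step I would flag as the ``main obstacle,'' though it is mild — is that each of the individual integrals $L_{Rf}^*$ and $L_{Pf}^*$ is well-defined (convergent) on its own along the contours $\g_B^+(\La,\e)$ and $S\g_B^+(\La,\e)$, so that the splitting is legitimate and not merely formal. For $L_{Rf}^*$ this is immediate: $\RF_B f$ is regular throughout $V_B$ (by Lemma~\ref{LemRegC1} applied to each pole in $V_B$), so the integrand is holomorphic along the finite contours and the argument of Lemma~\ref{LemReg4} applies verbatim. For $L_{Pf}^*$ one needs that $\PF_B f$, being a finite sum of polylogarithms $\Li_{1-m}(e(\tau-\tau_p))$ with poles only at the $\tau_p\in V_B$ (and their $\tau_p+\Z$ translates), is meromorphic along $\g_B^+$ and $S\g_B^+$ with at worst the on-axis poles that the $\pm\e$ deformation was designed to handle; hence each such integral is finite for fixed $\La,\e$, exactly as in Lemma~\ref{LemReg4}. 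Here the hypothesis that $B$ is chosen unequal to the imaginary part of any pole of $f$ (emphasized in the text just before the lemma) guarantees no pole lies on the dividing line $\mathrm{Im}(\tau)=B$, so the partition $\g_B^+\cup\g_B^-$ is clean. Once these convergence remarks are in place, the identity is just bookkeeping, and I would state the proof in one or two lines — essentially ``insert $f=\RF_Bf+\PF_Bf$ into Eq.~\eqref{eqLemReg8a} and use linearity,'' which is presumably why the authors wrote the terse proof they did.
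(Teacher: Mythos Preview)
Your proposal is correct and follows essentially the same approach as the paper: the paper's proof is the one-line observation that $\RF_X f = (1-\PF_X)f$ implies the integrands and contours match, which is exactly the ``insert $f=\RF_Bf+\PF_Bf$ and use linearity'' argument you outline. Your additional remarks on why each piece is individually convergent and why $\PF_B f$ has no constant term are helpful elaborations but not a different route.
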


\begin{proof}
For any $X$, $\RF_X f(\tau) = (1-\PF_X) f(\tau)$. Thus, the integrands and the contours of these convergent integrals match. 
\end{proof}

\begin{lem}\label{LemRegX1}
Let $f \in F_k$ be regular at cusps, and let $\La > U$. Then,
\begin{align}
&L_{Rf}^*(s|\La,\e|B)^\reg = \int_{B}^{\La} \frac{dt}{t} t^{s} (\RF_B f(i t)-c_f(0)) + \int_{1/B}^{\La} \frac{dt}{t} t^{k-s}~i^k~  (\RF_B f(i t)-c_f(0)) 
\label{eqLemRegX1a}
\\
&\lim_{\La \to \infty}\! L_{Rf}^*(s|\La,\e|B)^{\reg} = \sum_{n \neq 0} c_{Rf}(n|B) \bigg( \frac{\G(s,2\pi n B)}{(2\pi n)^s}  + i^k \frac{\G(k-s,2\pi n/B)}{(2\pi n)^{k-s}}  \bigg)
\label{eqLemRegX1b}
\end{align}
where $c_{Rf}(n|B)$ are the $q$-series coefficients of $\RF_B f(\tau)$, whose growth is bounded by $|c_{Rf}(n|B)|e^{2\pi (\e-B)n} \to 0$ for some $\e >0$. This sum converges absolutely.
\end{lem}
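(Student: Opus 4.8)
The plan is to mimic the argument used for Lemma~\ref{LemReg1} and Lemma~\ref{LemReg2}, now applied to the function $\RF_B f$ rather than to $f$ itself. First I would observe that by Lemma~\ref{LemRegC1} the function $\RF_B f$ is regular throughout the strip $V_B$ (it has no poles with imaginary part $\geq B$), so the integrand $\tau^{s-1}(\tau/i)^{s}\,\bigl(\RF_B f(\tau)-c_f(0)\bigr)$, and likewise the $S$-image integrand with $\tau^{k-s}$, is holomorphic on and between the contour $\g_B^+(\La,\e)$ and the segment of the imaginary axis from $iB$ to $i\La$ (resp.\ from $i/\La$ to $i/B$ for the $S$-image). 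Hence by Cauchy's theorem each contour integral can be deformed onto the imaginary axis, which — after writing $\tau=it$ — gives exactly the two real integrals in Eq.~\eqref{eqLemRegX1a}. For this step I need $\La>U$ so that the straightened contour $\g_B^+(\La,\e)$ genuinely lies above every pole of $f$, and $B\leq D$ so that there are no poles of $f$ between the line ${\rm Im}(\tau)=B$ and the line ${\rm Im}(\tau)=D$; combined with the fact that $\RF_B f$ has killed all the on-strip poles, this is what licenses the deformation.

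Next I would insert the Fourier expansion $\RF_B f(\tau)=\sum_{n} c_{Rf}(n|B)\,q^n$ into the two real integrals. The key point here is the growth bound: from Eq.~\eqref{eqCtilde} (with $B\leq D$, so the pole data $\beta(B,f)$ controlling $\RF_Bf$ satisfies $\beta<B$) we have $|c_{Rf}(n|B)|\,e^{2\pi(\e-B)n}\to 0$ for a suitable $\e>0$, i.e.\ $|c_{Rf}(n|B)|\lesssim e^{2\pi(B-\e)n}$. Since the first integral runs over $t\in[B,\La]$ and the integrand carries the factor $e^{-2\pi n t}$, and the second runs over $t\in[1/B,\La]$ with factor $e^{-2\pi n t}$ as well, this growth is strictly dominated by the exponential decay of $e^{-2\pi n t}$ on the relevant ranges (for the first integral $t\geq B$, so $e^{-2\pi nt}e^{2\pi(B-\e)n}\leq e^{-2\pi\e n}$; for the second integral a similar uniform bound holds once $1/B$ is compared against the growth rate, with room provided by $\e$). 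Therefore term-by-term integration is justified by dominated convergence, and each term integrates, in the $\La\to\infty$ limit, to an incomplete gamma function: $\int_B^\infty \frac{dt}{t}\,t^s e^{-2\pi n t}=\G(s,2\pi n B)/(2\pi n)^s$ and $\int_{1/B}^\infty \frac{dt}{t}\,t^{k-s} e^{-2\pi n t}=\G(k-s,2\pi n/B)/(2\pi n)^{k-s}$. Summing over $n\neq 0$ and attaching the $i^k$ to the second family yields Eq.~\eqref{eqLemRegX1b}; the resulting Dirichlet-type series converges absolutely for every finite $s\in\C$ because $\G(\sigma,x)\sim x^{\sigma-1}e^{-x}$ for large $x$ and the $e^{-2\pi n B}$ (resp.\ $e^{-2\pi n/B}$) decay beats the sub-exponential-in-the-wrong-direction, genuinely sub-$e^{2\pi nB}$ growth of $c_{Rf}(n|B)$.

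The main obstacle I anticipate is the bookkeeping around the \emph{lower} endpoint of the second ($S$-image) integral: it starts at $t=1/B$, not at $0$, and one must check that the contour deformation of $S\g_B^+(\La,\e)$ really lands on the segment $[i/\La,\,i/B]$ and that $\RF_B f$ is regular there — this is where the condition that $B$ avoid the imaginary part of every pole of $f$ (flagged just before the lemma) is used, so that no pole sits exactly on the line one is deforming through, and where one must be careful that applying the $S$-transformation turns a pole at height $\geq B$ into a singularity at height $\leq 1/B$ and vice versa, so regularity of $\RF_Bf$ above $B$ is exactly what guarantees regularity of the transformed integrand below $1/B$. Once that is pinned down, the $\e$-dependence drops out because $\RF_Bf$ has no poles for the contours to wrap, so the $\pm\e$ contours give identical answers and the $\tfrac12\sum_{\pm\e}$ is trivial; the rest is the routine gamma-integral evaluation sketched above.
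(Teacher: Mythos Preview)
Your overall plan is correct and matches the paper's proof: use holomorphy of $\RF_B f$ in $V_B$ to straighten both contours onto the imaginary axis, then expand in the $q$-series, interchange sum and integral (justified by the growth bound on $c_{Rf}(n|B)$ from Eq.~\eqref{eqCtilde}), and recognize each term as an incomplete gamma function.

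There is, however, a confusion in your discussion of the second contour that you should straighten out before writing the argument up. The contour $S\g_B^+(\La,\e)$ is the $S$-image of $\g_B^-(\La,\e)$ (the piece of the original contour running from $i/\La$ up to $iB$), so after applying $\tau\mapsto -1/\tau$ and reversing orientation it runs from near $i/B$ up to near $i\La$ --- not on $[i/\La,\,i/B]$ as you wrote. This is consistent with the target formula~\eqref{eqLemRegX1a}, whose second integral has limits $1/B$ and $\La$. More importantly, no modularity of $\RF_B f$ is invoked (or available): the $S$-transformation was already applied to $f$ back in Lemma~\ref{LemReg8}, and the integrand along $S\g_B^+$ is simply $(\tau/i)^{k-s}\,i^k\,(\RF_B f(\tau)-c_f(0))$ with $\tau$ on that contour. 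So the regularity you need is just that $\RF_B f$ has no poles on or near the segment $[i/B,\,i\La]$; since $\RF_B f$ is regular throughout $V_B$ by construction and the contour lives above ${\rm Im}(\tau)=\min(B,1/B)$, this is what the paper's one-line justification (``all of the relevant contours are contained within $V_B$'') is pointing at. Once you correct the endpoints and drop the spurious appeal to $S$-behaviour of $\RF_B f$, your argument goes through exactly as in the paper.
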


\begin{proof}
To show Eq.~\eqref{eqLemRegX1a}, note that $\RF_B f(\tau)$ explicitly lacks poles in $V_B$, that $L_{Rf}^*(s|\La,\e|B)$ depends only on the end-points of the contour, and that all of the relevant contours are contained within $V_B$. So we may replace the contours with contours along the vertical axis. 

To show Eq.~\eqref{eqLemRegX1b}, we proceed as in the proof of Lemma~\ref{LemReg2} and commute the $q$-series sum with the integral at finite $\La$. This is possible because the integrals converge at finite $\La$. Recall that by Lemmas~\ref{LemRegC1},~\ref{LemRegC2} and~\ref{LemRegC3} we know that there exists an $\e > 0$ such that $q$-series coefficients $c_{Rf}(n|B)$ of $\RF_B f(\tau)$ are bounded by $e^{-2\pi n B} |c_{Rf}(n|B)| \leq e^{-2 \pi n \e}$. Thus,
\begin{align}
\int_{B}^{\La} \frac{dt}{t} t^{s} (\RF_B f(i t)-c_f(0)) 
&= \int_{B}^{\La} \frac{dt}{t} t^{s} \sum_{n \neq 0} c_{Rf}(n|B) e^{-2\pi n t B} = \sum_{n \neq 0} c_{Rf}(n|B) \int_{B}^{\La} \frac{dt}{t} t^{s} e^{-2\pi n t B} \nonumber\\
&
= \sum_{n \neq 0} c_{Rf}(n|B) \bigg(\frac{\G(s,2\pi n B)}{(2\pi n)^s}-\frac{\G(s,2\pi n \La)}{(2\pi n)^s}\bigg)~.
\end{align}
Because $\G(s,2\pi n) \sim e^{-2 \pi n} (2 \pi n)^{s-1}$, $\G(s,2\pi n) |c_{Rf}(n|B)|$ decays as $(e^{-2 \pi \e})^n(2 \pi n)^{s-1}$ for $0<\e \leq B-\beta$, where $B - \beta>0$ by Eq.~\eqref{eqBetaData}. The sum-representation of the other integral, $\int d\tau i^k (\tau/i)^{k-s-1} (\RF_B f(\tau) - c_f(0))$, converges because $\G(k-s,2\pi n/B)|c_{Rf}(n|B)| \sim e^{-2 \pi n \e}(2 \pi n)^{k-s-1}$. Thus, the two sums exponentially converge for any $s \in \C$. Finally, noting $c_{Rf}(n|B) \G(\sigma,2 \pi n \La)/(2 \pi n)^{\sigma} \to 0$ as $\La \to \infty$ for finite $\sigma \in \C$ proves Eq.~\eqref{eqLemRegX1b}.
\end{proof}

\begin{lem}\label{LemRegX2}
Let $f \in F_k$, and let $\La > U$. Then
\begin{align}
L_{Pf}^*(s|\La,\e|B)^\reg 
&= \sum_{\tau_P \in W_B} \sum_{m > 0} \sum_{\pm \e}\frac{1}{2} \int_{\g_B^{+}(\Lambda,\pm\e)} r^*_{f,\tau_P}(m)  \frac{d\tau}{\tau} \left( \frac{\tau}{i}\right)^s {\rm Li}_{1-m}\big(e(\tau-\tau_P)\big) \label{eqLemRegX2}
\\
&+ \sum_{\tau_P \in W_B} \sum_{m > 0} \sum_{\pm \e}\frac{1}{2} \int_{\g_B^{+}(\Lambda,\pm\e)} r^*_{f,\tau_P}(m)  \frac{d\tau}{\tau} \left( \frac{\tau}{i}\right)^{k-s}~ i^k~ {\rm Li}_{1-m}\big(e(\tau-\tau_P)\big). \nonumber
\end{align}
\end{lem}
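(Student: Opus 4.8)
The plan is to reduce $L_{Pf}^*(s|\La,\e|B)^\reg$ to its definition in Eq.~\eqref{eqLemRegXb} and then simply substitute the explicit polylogarithmic expansion of $\PF_B f$. Recall from Eq.~\eqref{eqLemRegXb} that
\begin{align}
L_{Pf}^*(s|\La,\e|B)^\reg = \int_{\g_B^+(\La,\e)} \frac{d\tau}{\tau}\left(\frac{\tau}{i}\right)^s (\PF_B f(\tau)) + \int_{S\g_B^+(\La,\e)} \frac{d\tau}{\tau}\left(\frac{\tau}{i}\right)^{k-s} i^k (\PF_B f(\tau))~. \nonumber
\end{align}
By definition of the projector in Eq.~\eqref{eqPtau0f}, applied to the compact rectangle $X = V_B$ (equivalently, to the finite set $W_B$ of poles of $f$ lying below ${\rm Im}(\tau) = B$), we have the \emph{finite} double sum
\begin{align}
\PF_B f(\tau) = \sum_{\tau_P \in W_B} \sum_{m > 0} r^*_{f,\tau_P}(m)\, {\rm Li}_{1-m}\big(e(\tau-\tau_P)\big)~, \nonumber
\end{align}
the sum over $m$ truncating because $r_{f,\tau_P}(n) = 0$ for $n \ll 0$. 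First I would substitute this expansion into both integrals above. Second, since both sums are finite, I can freely interchange the (finite) summation over $\tau_P$ and $m$ with the integration, pulling $r^*_{f,\tau_P}(m)$ outside the integral in each term. Third, I would express $\g_B^+(\La,\e)$ and $S\g_B^+(\La,\e)$ symmetrically via the $\frac12\sum_{\pm\e}$ averaging already built into the definition of $L_f^*(s|\La,\e)^\reg$ (cf. Lemma~\ref{LemReg8}), which is what produces the $\sum_{\pm\e}\frac12$ prefactor in Eq.~\eqref{eqLemRegX2}; the second integral over $S\g_B^+$, after the change of variables $\tau \to -1/\tau$ implicit in the definition of the $S$-image, is rewritten back as an integral over $\g_B^+(\La,\e)$ with integrand weight $(\tau/i)^{k-s}$ and an overall $i^k$, exactly as it appears in the statement of Lemma~\ref{LemReg8}. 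Assembling these pieces term by term gives precisely Eq.~\eqref{eqLemRegX2}.

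The only point requiring care — and the main (minor) obstacle — is the legitimacy of interchanging summation and integration, and the fact that each individual integral $\int_{\g_B^+(\La,\e)} \frac{d\tau}{\tau}(\tau/i)^s\, {\rm Li}_{1-m}(e(\tau-\tau_P))$ is well-defined and finite for finite $\La$. The interchange is trivial because the sums are finite (only finitely many poles $\tau_P \in W_B$, and for each the principal part has only finitely many terms). Finiteness of each integral holds because the integrand is meromorphic along the contour: the contour $\g_B^+(\La,\e)$ is a finite-length path that, by construction, passes to one side of (or between) the on-axis poles of each ${\rm Li}_{1-m}(e(\tau-\tau_P))$ without hitting them, so the integrand is bounded on a compact neighborhood of the contour, and the endpoints $\tau = iB$ and $\tau = i\La$ are finite points in $\HH$ where $(\tau/i)^s$ is regular. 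Hence no convergence issue arises at this stage — the genuine analytic continuation in $s$ and the $\La\to\infty$, $\e\to 0$ limits are deferred to the subsequent Lemma~\ref{LemRegX3} and its consequences, where the explicit integral $\int \tau^{s-1}{\rm Li}_{-N}(e(\tau-\tau_p))$ is evaluated.

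In short, this Lemma is purely bookkeeping: it records the result of inserting the definition of $\PF_B$ into $L_{Pf}^*$ and linearly expanding, with the $\frac12\sum_{\pm\e}$ symmetrization carried along from the definition of the regularized contour integral. I would present the proof as a two-line computation: "Substitute the polylog expansion of $\PF_B f$ from Eq.~\eqref{eqPtau0f} into Eq.~\eqref{eqLemRegXb}, interchange the finite sums with the integrals, and symmetrize over $\pm\e$ exactly as in Lemma~\ref{LemReg8}." The reader is then handed off to Lemma~\ref{LemRegX2}'s successor, where the individual polylogarithmic integrals are actually computed and the $s$-dependence made explicit.
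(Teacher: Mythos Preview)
Your proposal is correct and takes essentially the same approach as the paper, whose proof consists of the single line ``The proof follows directly from the definition of the function $\PF_B f(\tau) \in F(\infty)$.'' Your expansion of that line---substituting the finite polylogarithmic sum for $\PF_B f$ into Eq.~\eqref{eqLemRegXb} and interchanging the finite summation with integration---is precisely what is meant; note only that the second contour in the statement appears to be a typographical slip (it should remain $S\g_B^{+}(\La,\e)$ as in Eq.~\eqref{eqLemRegXb} and Lemma~\ref{LemReg8}, with lower endpoint $i/B$ rather than $iB$), so no change of variables is actually needed to pass from the definition to the displayed formula.
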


\begin{proof}
The proof follows directly from the definition of the function $\PF_B f(\tau) \in F(\infty)$.
\end{proof}

Explicit results for $L_f^*(s)^\reg$ come from evaluating the integral of $(\tau/i)^{\sigma-1} {\rm Li}_{-N}(e(\tau-\tau_p))$ along the contour $\g_{(S)B}^+(\La,\e)$ as $\La \to \infty$ and then as $\e \to 0$, for $\sigma = s$ and $\sigma = k-s$. This is unambiguous when $\tau_p$ is off the imaginary-$\tau$ axis. However, when $f \in F_k$ has poles at imaginary values, $\tau_p = i y$, we must be more careful. Here we begin by defining
\begin{align}
I_N(s,y|\La,\e|B):= \sum_{\pm \e} \frac{1}{2} \int_{\g_B^+(\La,\e)} \frac{d\tau}{\tau} \left( \frac{\tau}{i} \right)^{s-1}{\rm Li}_{-N}(e(\tau-i y))~.
\end{align}
This is unambiguous and finite for any value of $\La \gg y$ and for any value of $\e\neq 0$. 

By holomorphy, $I_N(y,s|\La,\e|B)$ depends only on the sign of $\pm\e$. When $t \gg y$, $|{\rm Li}_{-N}(e^{2\pi (y-t+i\e})| = {\cal O}(e^{-2 \pi t})$, and we may safely take $\La \to \infty$. The resulting finite integral, $I_N(s,y|\e|B):= \lim_{\La \to \infty} I_N(s,y|\La,\e|B)$, depends only on the sign of $\e$. More precisely,
\begin{align}
I_N(s,y|+\e|B) - I_N(s,y|-\e|B) = {\rm Res}(N,s,i y) ~,
\end{align}
where ${\rm Res}(N, s, i y)$ is the residue of $-i (\tau/i)^{s-1} {\rm Li}_{-N}(e(iy-\tau))$. Explicitly, it is given by:
\begin{align}
{\rm Res}(N,s,z) = 
\oint \frac{d\tau}{\tau}  \left(\frac{\tau}{i}\right)^{s}{\rm Li}_{-N}(e(\tau-z))= \BM s-1 \\ N \EM \frac{(-\tau_P)^s~2 \pi~N!}{(2 \pi i \tau_P)^{N+1}} ~.\label{eqLemRegX7}
\end{align}
Note that each integral $\sum_{\pm \e}I_N(s,y|\pm\e|B)$ is separately finite. We now define,
\begin{align}
I_N(s,y|B) :=\lim_{\e \to 0} \sum_{\pm \e} \frac{1}{2} I_N(s,y|\pm \e|B)~.
\end{align}
We may have well-defined expressions for the integrals in Eq.~\eqref{eqLemRegX2} for any $\tau_P \in V_B$. 

\begin{lem}\label{LemRegX3}
Let  ${\rm Im}(\tau_p) = y \neq B$. Then $I_N(s,-i\tau_p|B)$ as defined above equals
\begin{align}
\frac{(-1)}{(2\pi)^N}\sum_{n = 1}^{\infty} \frac{ \ell_n(-i \tau_p|B) }{ (2 \pi n)^{s-N} }
~,~
\ell_n(-i\tau_p|B) = 
\begin{cases}
~~e^{-\pi i (1+2 n \tau_p)} ~~ \G(s,+2\pi n B), &{\rm Im}(\tau_p) < B,
\\
e^{+\pi i (s-N+2 n \tau_p)} ~ \G(s, -2\pi n B), &{\rm Im}(\tau_p) > B.
\end{cases}
\label{eqLemRegX3}
\end{align}
\end{lem}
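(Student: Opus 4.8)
The plan is to compute the integral $I_N(s,y|\pm\e|B)$ directly by expanding the polylogarithm ${\rm Li}_{-N}(e(\tau-\tau_p))=\sum_{n\geq 1}n^N e(n(\tau-\tau_p))$ and integrating term by term along the vertical-line contours obtained after taking $\e\to 0^{\pm}$. Since $\g_B^+(\La,\e)$ runs from $iB$ to $i\La$ just to the right (resp.\ left) of the imaginary axis, deforming it onto the axis picks up no poles of the integrand when $\tau_p$ is off the axis (and the $\e\to 0^\pm$ average handles the on-axis case in the spirit of Lemma~\ref{LemRegX} and the residue formula \eqref{eqLemRegX7} already recorded above). First I would treat the case ${\rm Im}(\tau_p)=y<B$: here the contour $\g_B^+$ lies entirely above the pole, $|e(n(\tau-\tau_p))|$ decays as ${\rm Im}(\tau)\to\infty$, and the term-by-term integral is
\begin{align}
\int_B^\infty \frac{dt}{t}\Big(\frac{it}{i}\Big)^{s}\, e^{2\pi i n(it-\tau_p)}
= e^{-2\pi i n\tau_p}\int_B^\infty t^{s-1}e^{-2\pi n t}\,dt
= e^{-2\pi i n\tau_p}\,\frac{\G(s,2\pi n B)}{(2\pi n)^{s}}~,
\end{align}
and multiplying by $n^N$ and collecting constants $(2\pi)^{-N}$ and the phase $e^{-\pi i}=-1$ (coming from the $(\tau/i)^{s}\to t^{s}$ bookkeeping and the normalization of $I_N$ by $(\tau/i)^{s-1}$) reproduces the first branch of \eqref{eqLemRegX3}. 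The exchange of sum and integral is justified exactly as in Lemma~\ref{LemRegX1}: $\G(s,2\pi nB)(2\pi n)^{-s}$ decays exponentially in $n$, so the series converges absolutely for every finite $s\in\C$.

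Next I would treat ${\rm Im}(\tau_p)=y>B$. Now $\g_B^+(\La,\e)$ starts \emph{below} the pole $\tau_p$, so the naive term-by-term bound fails near $\tau=iB$; the right move is to use the $S$-transformation structure that is already built into the definitions — recall $I_N$ enters $L_{Pf}^*$ only through the pair of integrals in \eqref{eqLemRegX2}, one along $\g_B^+$ and one along $S\g_B^+$ — or, more directly, to split $\g_B^+$ at ${\rm Im}(\tau)=y$ and use \eqref{eqLemRegX7} to account for crossing the pole. Concretely, for $y>B$ one integrates $t^{s-1}e^{-2\pi n t}$ from $B$ upward but the convergent tail is now governed by $\G(s,-2\pi nB)$ after the analytic continuation in the lower endpoint, and the accumulated phase is $e^{\pi i(s-N+2n\tau_p)}$ rather than $e^{-\pi i(1+2n\tau_p)}$; this yields the second branch. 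The hypothesis $y\neq B$ is exactly what makes the splitting point unambiguous and keeps $B$ off every pole's imaginary part, as flagged before Lemma~\ref{LemReg8}.

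The main obstacle is the careful bookkeeping of phases and of the incomplete-gamma arguments in the $y>B$ case: because the contour passes on the "wrong" side of the pole, one must track the branch of $(\tau/i)^{s}$ and the sign of $2\pi nB$ inside $\G(s,\cdot)$ consistently, and verify that the $\e\to 0^\pm$ average in the definition of $I_N(s,y|B)$ combines the two one-sided limits into the stated closed form (using that their difference is the residue \eqref{eqLemRegX7}, which is meromorphic in $s$ and drops out of the symmetric average up to the explicit terms kept). Once the $y<B$ computation is done, the $y>B$ case should follow by the same term-by-term integration with the endpoint continued and the phase bookkeeping redone; absolute convergence of both resulting Dirichlet-type series is immediate from $\G(s,\pm 2\pi nB)=O(n^{s-1}e^{\mp 2\pi nB})$ and the exponential factors $e^{2\pi i n\tau_p}$ with ${\rm Im}(\tau_p)>0$.
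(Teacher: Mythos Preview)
Your treatment of the case ${\rm Im}(\tau_p)=y<B$ is correct and matches the paper: expand ${\rm Li}_{-N}(e(\tau-\tau_p))=\sum_{n\geq1}n^N e(n(\tau-\tau_p))$, integrate term by term along $[B,\infty)$, and collect the resulting $\G(s,2\pi nB)$ factors.

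For $y>B$, however, there is a genuine gap. You correctly observe that the naive series diverges at the lower endpoint $t=B$ (since $|e^{2\pi(y-B)}|>1$), but neither of your suggested fixes actually produces the stated formula. The ``$S$-transformation structure'' from \eqref{eqLemRegX2} does not help here: $I_N(s,-i\tau_p|B)$ is a standalone integral along $\g_B^+$, and the companion integral along $S\g_B^+$ is a \emph{separate} term $I_{m-1}(k-s,-i\tau_p|1/B)$ in Theorem~\ref{ThmLfn}, not something you can borrow to evaluate $I_N$. Splitting at ${\rm Im}(\tau)=y$ and tracking the residue \eqref{eqLemRegX7} is fine for showing the $\pm\e$ average kills the residue contribution, but it does not by itself give you a convergent series at $t=B$; ``analytic continuation in the lower endpoint'' is not a computation.

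The missing ingredient is the reflection identity for polylogarithms at nonpositive integer order,
\[
{\rm Li}_{-N}(x) \;=\; (-1)^{N+1}\,{\rm Li}_{-N}(1/x)\qquad (N\in\Z_{\geq 0}),
\]
which the paper invokes precisely here. Applying it gives ${\rm Li}_{-N}(e^{2\pi(y-t)})=(-1)^{N+1}\sum_{n\geq1}n^N e^{2\pi n(t-y)}$, and now the series \emph{does} converge at $t=B<y$. Integrating $t^{s-1}e^{2\pi nt}$ from $B$ (after the substitution $T=-2\pi nt$) produces $\G(s,-2\pi nB)/(-2\pi n)^s$; the factor $(-2\pi n)^{-s}=e^{-i\pi s}(2\pi n)^{-s}$ together with $(-1)^{N+1}$ is exactly where the phase $e^{\pi i(s-N)}$ in the second branch of \eqref{eqLemRegX3} comes from. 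Without this identity the $y>B$ branch is simply asserted, not derived.
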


\begin{proof}
The essence of the proof comes from evaluating $I_N(s,-i \tau_p|D)$ when the pole $\tau_p = i y$ is on the ${\rm Im}(\tau)$-axis. There are only two cases to consider: when $y < B$ and when $y > B$.

When $y < B$, there is no pole along the ${\rm Im}(\tau)$-axis and we may write,
\begin{align}
\int_{B}^{\infty} \frac{dt}{t} t^s~{\rm Li}_{-N}(e^{2 \pi(y-t)}) &= 
\int_{t = B}^{t = \infty} \frac{dt}{t} t^s~\sum_{n = 1}^{\infty} \frac{e^{2 \pi n(y-t)} }{n^{-N}} = \frac{1}{(2\pi)^N}\sum_{n = 1}^{\infty} \frac{e^{2 \pi n y}~\G(s,2\pi n B) }{(2 \pi n)^{s-N}},
\end{align}
which converges because $e^{2 \pi n y}\G(s,2 \pi n B) \sim e^{2 \pi n(y-B)}/(2\pi n)^{s-1}$ exponentially decays.

When $y > B$, then we may break-up the contour into three pieces $\g_B^+(\La,\pm \e):=\g_1 \cup \g^{\pm}_2 \cup \g_3$, where $\g_1:=[B , y-\e]$, $\g_2^{\pm} := \{ y - \e e^{\pm i \theta} \mid \theta \in (0,\pi) \}$ and $\g_3 := [y+\e,\La)$, in the limit where $\La \to \infty$. As emphasized above, we define $I_N(s,y|B)$ as the average of the contour integrals along $\g_B^+(\La,\pm\e)$, which are each separately finite and well-defined. Now, each contour circles the pole at $\tau = i y$ by an angle of $\pm \pi$. Averaging over contours $\g_B^+(\La,\pm \e)$ thus counts the residue at $\tau = i y$ exactly $+\tfrac12-\tfrac12 = 0$ times. Because of this, we can safely evaluate $I_N(s,y|B)$ by considering the integral evaluated at the end-points $t = B$ and at $t = \infty$. Further, because ${\rm Li}_{-N}(e^{-2 \pi t}) = {\cal O}(e^{-2\pi t})$ as $t \to \infty$, we know that the integral evaluated at the upper bound of the integration contour vanishes. 

So we are left with the problem of evaluating the integral at the lower end-point $t = B$. To compute this, we note that when $N$ is a {\em positive} integer, we have
\begin{align}
{\rm Li}_{-N}(x) = (-1)^{N+1} {\rm Li}_{-N}(1/x)~.
\end{align}
This allows us to rewrite the integral evaluated at the lower end-point $t = B$ as,
\begin{align}
\int_{t = B}\frac{dt}{t} t^s {\rm Li}_{-N}(e^{2 \pi (y-t)})
&= (-1)^{N+1}\int_{t = B}\frac{dt}{t} t^s {\rm Li}_{-N}(e^{2 \pi (t-y)})
\\
&= (-1)^{N+1}\int_{t = B}\frac{dt}{t} t^s \sum_{n = 1}^{\infty} \frac{(e^{2 \pi (t-y)})^n}{n^{-N}}
\\
&= (-1)^{N+1}\sum_{n = 1}^{\infty} \frac{e^{-2 \pi n y}}{n^{-N}} \int_{t = B}\frac{dt}{t} t^s e^{2 \pi n t} 
\\
&= (-1)^{N+1}\sum_{n = 1}^{\infty} \frac{e^{-2 \pi n y}}{n^{-N}} \frac{1}{(-2\pi n)^s} \int_{T = -2 \pi n B}
\frac{dT}{T} T^s e^{-T} 
\\
&= \sum_{n = 1}^{\infty} \frac{e^{-2 \pi n y} }{n^{-N}} 
\frac{\G(s,-2\pi n B)}{(-2 \pi n)^{s}} = \sum_{n = 1}^{\infty} \frac{e^{-2 \pi n y} }{n^{-N}} 
\frac{\G(s,-2\pi n B)}{(2 \pi n)^{s}}~.
\end{align}
Thus, because $y > B$, we have $e^{-2\pi n y} |\G(s,-2\pi n B)| \sim e^{-2 \pi n (y-B)}/(2 \pi n)^{s-1}$ for $n \gg 1$. Thus this sum converges exponentially quickly for generic finite $s \in \C$. We finish the proof by noting that the above manipulations apply equally well to $\tau_p = i y + x$ when $x \neq 0$.
\end{proof}

\begin{thm}\label{ThmLfn}
Let $f \in F_k$ be regular at cusps, and let $B$, $c_{Rf}(n|B)$, $\G(s,x)$, $\tau_p$, $V_B$, $r^*_{f,\tau_P}(m)$, and $I_N(s,z|A)$ as defined above. Then the L-integral that yields the L-function is
\begin{align} 
L_f^*(s)^{\reg} =&-c_f(0) \bigg(\frac{B^s}{s} + \frac{i^k~B^{s-k}}{k-s} \bigg) + \sum_{n \neq 0} c_{Rf}(n|B) \bigg( \frac{\G(s,2\pi n B)}{(2\pi n)^s}  + i^k \frac{\G(k-s,2\pi n/B)}{(2\pi n)^{k-s}}  \bigg) \nonumber\\
&+\sum_{\tau_p \in V_B} \sum_{m > 0} r^*_{f,\tau_P}(m) \bigg( \! I(m-1,s,-i\tau_p|B) + i^k I(m-1,k-s,-i\tau_p|1/B) \! \bigg).\!
\label{eqLfn}
\end{align}
\end{thm}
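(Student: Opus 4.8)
The plan is to assemble Theorem~\ref{ThmLfn} by stitching together the pieces already prepared in Lemmas~\ref{LemReg8}--\ref{LemRegX3}, tracking the two limits $\La\to\infty$ and $\e\to0$ carefully. First I would start from the definition of $L_f^\reg(s)$ in Eq.~\eqref{eqLint}, which is $\lim_{\e\to0}\lim_{\La\to\infty}$ of $L_f^*(s|\La,\e)^\reg$ up to the prefactor $(2\pi)^s/\G(s)$. By Lemma~\ref{LemReg8} this splits into the $c_f(0)$-``constant'' integral over $\g_B^-(\La,\e)$ plus the two nontrivial integrals over $\g_B^+(\La,\e)$ and $S\g_B^+(\La,\e)$. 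The constant piece is handled by Lemma~\ref{LemReg9}: its $\La\to\infty$ limit is $-c_f(0)(B^s/s + i^k B^{s-k}/(k-s))$, independent of $\e$, giving exactly the first bracket in Eq.~\eqref{eqLfn}.

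Next I would treat the nontrivial integral. By Lemma~\ref{LemRegX}, for every finite $\La>U$ and $B\le D$ it equals $\sum_{\pm\e}\tfrac12(L^*_{Rf}(s|\La,\pm\e|B) + L^*_{Pf}(s|\La,\pm\e|B))$, i.e. it decomposes cleanly into the $\RF_B f$ part (regular inside $V_B$, with a pole only at the cusp) and the $\PF_B f$ part (carrying all poles of $f$ inside $V_B$). For the $\RF_B f$ part, Lemma~\ref{LemRegX1} already gives that $\lim_{\La\to\infty}L^*_{Rf}(s|\La,\e|B)^\reg$ is the absolutely convergent sum $\sum_{n\neq0}c_{Rf}(n|B)\big(\G(s,2\pi n B)/(2\pi n)^s + i^k\G(k-s,2\pi n/B)/(2\pi n)^{k-s}\big)$, independent of $\e$, which is the second line of Eq.~\eqref{eqLfn}. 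For the $\PF_B f$ part, Lemma~\ref{LemRegX2} expands it as a finite sum over poles $\tau_P\in W_B$ (equivalently $V_B$) and Laurent orders $m>0$ of $r^*_{f,\tau_P}(m)$ times integrals of $(\tau/i)^{s-1}\Li_{1-m}(e(\tau-\tau_P))$ (and the $S$-companion with $s\mapsto k-s$ and the factor $i^k$). Taking $\La\to\infty$ and then the symmetric $\e\to0$ limit on each such integral is precisely what defines $I_N(s,-i\tau_P|B)$ with $N=m-1$, and the companion integral along $S\g_B^+$ produces $i^k I(m-1,k-s,-i\tau_P|1/B)$ — the argument $B\mapsto1/B$ coming from the $S$-transformation $t\mapsto1/t$ as in the proof of Lemma~\ref{LemReg8}. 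Assembling these three contributions yields Eq.~\eqref{eqLfn}. Finally I would multiply through by $(2\pi)^s/\G(s)$ only if the statement demanded $L_f^\reg(s)$; since the theorem states $L_f^*(s)^\reg$, no prefactor is needed.

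The main obstacle is the interchange of limits: one must justify that $\lim_{\e\to0}\lim_{\La\to\infty}$ acts term-by-term on the finitely many pole-contributions and on the infinite $q$-series sum simultaneously, and that the order of $\e\to0$ and $\La\to\infty$ does not matter. This is where the structural setup pays off: the $\La\to\infty$ limit is safe on each piece because $\RF_B f(\tau)-c_f(0)$ and each $\Li_{1-m}(e(\tau-\tau_P))$ decay like ${\cal O}(e^{-2\pi t})$ at the top of the contour (so the $\G(\sigma,2\pi n\La)/(2\pi n)^\sigma$ tails vanish), and the $\e\to0$ limit is safe because the difference between the $+\e$ and $-\e$ contours is a finite sum of residues by Cauchy's theorem (holomorphy of the integrand away from poles) and the symmetric average $\tfrac12(I_N(\cdot|+\e)+I_N(\cdot|-\e))$ cancels the residue of any on-axis pole, as established in Lemma~\ref{LemRegX3}. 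One should also note that although the individual terms $L^*_{Rf}$ and $L^*_{Pf}$ depend on the auxiliary choice of $B$ (with $B\le D$ and $B$ not equal to the imaginary part of any pole of $f$, so that Lemma~\ref{LemRegX3}'s case split $y\lessgtr B$ is well-posed), their sum reconstructs the $B$-independent object $L_f^*(s|\La,\e)^\reg$, so Eq.~\eqref{eqLfn} holds for every admissible $B$ and the theorem follows.
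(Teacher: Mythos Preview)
Your proposal is correct and follows essentially the same assembly-of-lemmas approach as the paper: split via Lemma~\ref{LemReg8}, extract the polar $c_f(0)$-term via Lemma~\ref{LemReg9}, decompose the remainder into $\RF_B f$ and $\PF_B f$ pieces via Lemma~\ref{LemRegX}, and evaluate these via Lemmas~\ref{LemRegX1}--\ref{LemRegX3}. Your discussion of the limit interchange and the $B$-admissibility condition is in fact more careful than the paper's own proof, which simply cites the chain of lemmas without dwelling on these points.
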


\begin{proof}
In Lemma~\ref{LemReg4}, we defined $L_f^*(s)^\reg$ as the $\La \to \infty$ and $\e \to 0$ limit of the average of the integrals over contours that begin at $\tau = i/\La$ and end at $\tau = i \La$ and pass just $\e$ the left, and just $\e$ the right, of poles on the imaginary-$\tau$ axis. Each integral converges, and yields a finite result at fixed $\La>U$ and $\e\neq 0$, and each has a finite limit as $\La \to \infty$ and $\e \to 0$.

Then in Lemma~\ref{LemReg8}, we split-up each of individual contour into a sum of three terms: two contour integrals with integrands $(\tau/i)^{\sigma-1}(f(\tau) -c_f(0))$, and two contour integrals with integrands $(\tau/i)^{\sigma -1} c_f(0)$, where $\sigma = s$ and $\sigma = k-s$. In Lemma~\ref{LemReg9}, we showed that the integral over $(\tau/i)^{\sigma -1} c_f(0)$ yields the polar terms in $L_f^*(s)^\reg$ given by $-c_f(0)(\tfrac{B^s}{s} + i^k \tfrac{B^{s-k}}{k-s})$. 

Following this, in Lemma~\ref{LemRegX}, we split-up the integral over the nontrivial contours, which are entirely contained in the strip $V_B \subset V$ into two pieces. One piece is regular within $V_B$ but contains poles at cusps whose two integrands are proportional to $\RF_B f(\tau) - c_f(0)$. The other piece is regular at the cusp but has poles within $V_B$ whose two integrands are proportional to  $\PF_B f(\tau)$. We call these terms $L_{Rf}^*(s|\La,\e|B)^\reg$ and $L_{Pf}^*(s|\La,\e|B)^\reg$, respectively.

By appealing to the $q$-series growth bounds in Lemmas~\ref{LemRegC1},~\ref{LemRegC2}, and~\ref{LemRegC3} and Proposition~\ref{PropGrowthCap} and to Lemma~\ref{LemReg2}, we showed that the $\RF_B f(\tau) - c_f(0)$ integrals equal the sums
\begin{align}
\sum_{n \neq 0} c_{Rf}(n|B) \left(\frac{\G(s,2\pi n B)}{(2\pi n)^s}+i^k \frac{\G(k-s,2\pi n /B)}{(2\pi n)^{k-s}}\right)~, 
\end{align}
which converge exponentially quickly for finite $s \in \C$. In Lemmas~\ref{LemRegX1},~\ref{LemRegX2}, and~\ref{LemRegX3} we found explicit expressions for $I_N(-i\tau_p,s|B)$, the integrals within $\PF_B f(\tau)$. 

Combining these results yields Eq.~\eqref{eqLfn} and completes the proof of the Theorem.
\end{proof}

This has an important Corollary:

\begin{cor}
\label{CorLfn1}
If $L_f^*(s)^\reg$ is as defined Theorem~\ref{ThmLfn}, then $L_f^*(s)^\reg= i^k L_f^*(k-s)^\reg$.
\end{cor}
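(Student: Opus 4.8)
The plan is to verify that the explicit formula for $L_f^*(s)^\reg$ displayed in Eq.~\eqref{eqLfn} is manifestly invariant under the substitution $s \mapsto k-s$ combined with multiplication by $i^k$, term by term. Recall that $i^{2k}=1$ since $k$ is an even integer for $f\in F_k$, so applying the functional-equation operation twice is the identity, and it suffices to check that each of the three groups of terms in Eq.~\eqref{eqLfn} maps into the corresponding group.

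First I would treat the $c_f(0)$ term. Under $s\mapsto k-s$ the expression $\tfrac{B^s}{s}+\tfrac{i^k B^{s-k}}{k-s}$ becomes $\tfrac{B^{k-s}}{k-s}+\tfrac{i^k B^{-s}}{s}$; multiplying by $i^k$ and using $i^{2k}=1$ returns $\tfrac{i^k B^{k-s}}{k-s}+\tfrac{B^{-s}}{s}$. This does not literally agree with the original unless one also sends $B\mapsto 1/B$, which is exactly the $B\leftrightarrow 1/B$ exchange already visible between the two incomplete-gamma arguments $2\pi nB$ and $2\pi n/B$ in Eq.~\eqref{eqLfn}; so the correct statement is that $L_f^*(s)^\reg$ computed with parameter $B$ equals $i^k L_f^*(k-s)^\reg$ computed with parameter $1/B$, and then one invokes the $B$-independence of $L_f^*(s|\La,\e)^\reg$ noted after Lemma~\ref{LemReg8} (and used throughout) to conclude the value is the same for all admissible $B$, in particular $B=1/B=1$ is not needed — any $B$ works since the left side does not depend on $B$. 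I would state this $B$-independence as the mechanism that reconciles the two sides.

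Next I would handle the sum $\sum_{n\neq 0} c_{Rf}(n|B)\big(\tfrac{\G(s,2\pi nB)}{(2\pi n)^s}+i^k\tfrac{\G(k-s,2\pi n/B)}{(2\pi n)^{k-s}}\big)$: under $s\mapsto k-s$ the two incomplete-gamma pieces swap, and the prefactor $i^k$ together with $i^{2k}=1$ restores the original sum (again with $B\mapsto 1/B$). Finally, for the pole sum $\sum_{\tau_p\in V_B}\sum_{m>0} r^*_{f,\tau_P}(m)\big(I(m-1,s,-i\tau_p|B)+i^k I(m-1,k-s,-i\tau_p|1/B)\big)$, the map $s\mapsto k-s$ interchanges the two terms inside the parenthesis and the overall $i^k$ with $i^{2k}=1$ brings it back to itself. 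I expect the main obstacle to be bookkeeping the $B\leftrightarrow 1/B$ swap consistently across all three groups and confirming that the $I_N$ integrals obey the matching reflection property — i.e.\ checking from Lemma~\ref{LemRegX3} that $i^k I(m-1,k-s,-i\tau_p|1/B)$ under $s\mapsto k-s$ reproduces $I(m-1,s,-i\tau_p|B)$ including the case-split at ${\rm Im}(\tau_p)\lessgtr B$ versus $\lessgtr 1/B$. The cleanest way to finesse this is to avoid the term-by-term computation entirely and instead argue structurally: the defining contour integral in Eq.~\eqref{eqLint} for $L_f^*(s)^\reg$ is sent, under the change of variables $\tau\mapsto -1/\tau$ (which maps the contour $\g(\La,\pm\e)$ from $i/\La\to i\La$ to a contour from $i\La\to i/\La$, reversing orientation, and carries the region to the left of the imaginary axis to the region to the right), together with the modularity relation $f(-1/\tau)=\tau^k f(\tau)$, precisely into $i^{-k}$ times the integral defining $L_f^*(k-s)^\reg$; the orientation reversal, the $\pm\e$ average being symmetric under the reflection, and the bound $B\le D$ ensuring no poles are crossed, are what make this rigorous. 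I would present this substitution argument as the proof and relegate the term-by-term check of Eq.~\eqref{eqLfn} to a consistency remark.
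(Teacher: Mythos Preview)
Your proposal is essentially correct and closely parallels the paper's own proof: both argue that computing $L_f^*(s)^\reg$ with inflection point $B$ and $L_f^*(k-s)^\reg$ with inflection point $1/B$ produces expressions that match term-by-term up to the factor $i^k$, and then invoke $B$-independence of the underlying regulated integral to conclude. Your ``structural'' contour-substitution argument via $\tau\mapsto -1/\tau$ is not spelled out separately in the paper, but it is precisely the content of Lemma~\ref{LemReg8}, which the paper cites as the mechanism behind the inflection.

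One point the paper handles more carefully than you do: when you swap $B\leftrightarrow 1/B$, the pole set $V_B$ appearing in the third block of Eq.~\eqref{eqLfn} would na\"ively become $V_{1/B}$, which is a different set. The paper resolves this by noting that, regardless of whether one inflects at $B$ or at $1/B$, convergence of the $\RF$- and $\PF$-integrals forces one to project out all poles above the \emph{lower} of the two lines ${\rm Im}(\tau)=B$ and ${\rm Im}(\tau)=1/B$; hence both computations use the same $\PF_B$ and the same pole set, and the term-by-term match goes through. You gesture at this with ``the bound $B\le D$ ensuring no poles are crossed,'' but that is not quite the right mechanism --- it is the consistency of the pole projection, not the absence of poles on the contour, that makes the $V_B$ sums line up. Making this explicit would close the only real gap in your argument.
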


\begin{proof}
To prove this, it is instructive to first restrict our attention to the case where $f \in M_k$. Here, $L_f^*(s|\La,\e)^\reg$ is  manifestly independent of $B$~\cite{04-BFK}: when $f \in M_k$, $B$ simply amounts to a turning-point where the contour $\g(\La,\e)$ is inflected to the contour $\g_B(\La,\e):= \g_B^+(\La,\e) \cup S\g_B^+(\La,\e)$. Thus, when evaluating $L_f^*(s|\La,\e)^\reg$ we inflect around $B$, and when evaluating $L_f^*(k-s|\La,\e)^\reg$ we inflect around $1/B$. The results for $L_f^*(s)^\reg$ and $L_f^*(k-s)^\reg$ are each given wholly by terms in the first line of Eq.~\eqref{eqLfn}, and are identical up to the overall factor $i^k$. (See similar analysis in~\cite{04-BFK} for $L_f^*(s)^\reg$ when $f \in M_k^!$.)

This procedure holds for $f \in F_k$: we define $L_f^*(s|\La,\e)^\reg$ by inflecting around ${\rm Im}(\tau) = B$, and writing (the nontrivial part of) $L_f^*(s|\La,\e)^\reg$ in terms of contour integrals along $\g_B(\La,\e)= \g_B^+(\La,\e) \cup S\g_B^+(\La,\e)$. We then split-up the integral into the sum of two integrals, whose integrands are $(\tau/i)^{s-1} (\RF_B f(\tau) - c_f(0))$ and $(\tau/i)^{s-1} \PF_B f(\tau)$. These integrands are {\em defined} by the property that all poles ``below'' $\tau = i \infty$ and above the line at ${\rm Im}(\tau) = B$ are projected out of $\RF_B f$ and are entirely contained in $\PF_B f$. We call the corresponding integrated expressions, respectively, $L_{Rf}^*(s|\La,\e|B)^\reg$ and $L_{Pf}^*(s|\La,\e|B)^\reg$.

The key point here is that regardless of reflecting about the line ${\rm Im}(\tau) = 1/B$ or ${\rm Im}(\tau) = B$, to obtain convergent expressions for the integrals $L_{Rf}^*(s|\La,\e|B)^\reg$ and $L_{Pf}^*(s|\La,\e|B)^\reg$ we must project-out all poles above the {\em lowest} of the two lines ${\rm Im}(\tau) = B$ or ${\rm Im}(\tau) = 1/B$. 

We now obtain $L_f^*(k-s|\La,\e)^\reg$ in a form that directly compares with $L_f^*(s)^\reg$ in the following way. We preform the same procedure on $L_f^*(k-s|\La,\e)^\reg$, save that we now reflect the contour $\g(\La,\e)$ about the line ${\rm Re}(\tau) = 1/B$. Now, we project-out all of the poles of $f$ that lie below the {\em lowest} of the two lines ${\rm Im}(\tau) = B$ or ${\rm Im}(\tau) = 1/B$, to obtain the two integrands $(\tau/i)^{k-s-1} (\RF_B f(\tau) -c_f(0))$ and $(\tau/i)^{k-s-1} \PF_B f(\tau)$. These integrands match those in $L_f^*(s|\La,\e)$ above, save with $s \to k-s$ and the inflection-point reversed $B \to 1/B$. Evaluating the integral for $L_f^*(k-s|\La,\e)$ in this way yields expressions identical to those in Eq.~\eqref{eqLfn}, save with $s \to k-s$ and $B \to 1/B$, just as for the above special case where $f \in M_k^!$. Similarly, just as for the special case where $f \in M_k^!$, each pair of terms within $(L_f^*(s)^\reg,L_f^*(k-s)^\reg)$ exactly match, save with a relative factor of $i^k$ between them. 
\end{proof}

\begin{cor}\label{CorLfn2}
Suppose $f \in F_k$. According to the definition of $L_f^\reg(s)$ in Theorem~\ref{ThmLfn},
\begin{align}
\lim_{s \to 0} L_f^\reg(s) &= -c(0)~ (1-\delta_{k,0})~,~\label{s=0a} \\
\lim_{s \to k} L_f^\reg(s) &= 
\begin{cases}
~0\qquad \qquad,~&k < 0 \\ 
~{\rm divergent}~~ ,~ &k > 0 
\end{cases}
~,~\label{s=ka} \\
\lim_{s \to N} L_f^\reg(s) &= 
\begin{cases} 
0~~,\qquad ~&k \leq 0~,~N \in \{ k+1, \ldots, -1\} \\ 
{\rm finite}~~,~ ~  &k > 0~,~N \in \{ 1, \ldots, k-1\} ~~\end{cases} ~.~\label{s=Na}
\end{align}
\end{cor}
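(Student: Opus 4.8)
The plan is to read off all three assertions directly from the closed form of $L_f^*(s)^\reg$ in Theorem~\ref{ThmLfn} together with the elementary analytic behaviour of the conversion factor $(2\pi)^s/\Gamma(s)$ relating $L_f^\reg(s)$ to $L_f^*(s)^\reg$. The first step is to locate, inside formula~\eqref{eqLfn}, all the terms that fail to be entire in $s$. The series $\sum_{n\neq 0}c_{Rf}(n|B)\big(\Gamma(s,2\pi nB)(2\pi n)^{-s}+i^k\Gamma(k-s,2\pi n/B)(2\pi n)^{-(k-s)}\big)$ is an entire function of $s$: each incomplete gamma $\Gamma(w,x)$ is entire in $w$ for fixed $x$, the powers $(2\pi n)^{-w}$ are entire, and by the growth bounds established in the proof of Lemma~\ref{LemRegX1} the series converges locally uniformly in $s$. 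The same holds, term by term, for the finite double sum $\sum_{\tau_p\in V_B}\sum_{m>0}r^*_{f,\tau_P}(m)\big(I(m-1,s,-i\tau_p|B)+i^kI(m-1,k-s,-i\tau_p|1/B)\big)$: the evaluation in Lemma~\ref{LemRegX3} exhibits each such integral as a locally uniformly convergent series whose $s$-dependence lives solely in factors of the form $\Gamma(s,\pm 2\pi nB)$, entire in $s$, and $(2\pi n)^{-s}$. Consequently the only source of non-entire behaviour in $L_f^*(s)^\reg$ is the single elementary term $-c_f(0)\big(B^s/s+i^kB^{s-k}/(k-s)\big)$.

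It follows that $L_f^*(s)^\reg$ is holomorphic on $\mathbb{C}$ apart from at most simple poles at $s=0$, with residue $-c_f(0)$, and at $s=k$, with residue $c_f(0)i^k$ (consistent with the functional equation of Corollary~\ref{CorLfn1}). Two degeneracies must be recorded: if $k=0$ the two candidate poles coincide and, since $i^0=1$, the term $-c_f(0)\big(B^s/s+B^s/(-s)\big)$ vanishes identically, so $L_f^*(s)^\reg$ is then entire; and if $c_f(0)=0$ there is no pole at all. On the other side, $(2\pi)^s/\Gamma(s)$ is entire, is nonzero off the non-positive integers, and has a simple zero at each $s\in\{0,-1,-2,\dots\}$, with $(2\pi)^s/\Gamma(s)=s+O(s^2)$ near $s=0$ and $(2\pi)^s/\Gamma(s)=(2\pi)^{-n}(-1)^n n!\,(s+n)+O((s+n)^2)$ near $s=-n$.

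The statements now follow by multiplying the two factors and taking limits. For~\eqref{s=0a}: when $k\neq 0$, $L_f^*(s)^\reg\sim -c_f(0)/s$ while $(2\pi)^s/\Gamma(s)\sim s$, so $\lim_{s\to 0}L_f^\reg(s)=-c_f(0)$; when $k=0$, $L_f^*(s)^\reg$ is finite at $s=0$ and the simple zero of the prefactor forces the limit to be $0$; the two cases combine into $-c_f(0)(1-\delta_{k,0})$. For~\eqref{s=ka} with $k\neq 0$: if $k>0$ the prefactor $(2\pi)^k/\Gamma(k)=(2\pi)^k/(k-1)!$ is finite and nonzero while $L_f^*(s)^\reg$ has a simple pole, so $L_f^\reg$ diverges at $s=k$ (whenever $c_f(0)\neq 0$); if $k<0$ the simple zero of $(2\pi)^s/\Gamma(s)$ at $s=k$ cancels the simple pole of $L_f^*(s)^\reg$, so the limit is finite, and it vanishes when $c_f(0)=0$ (for instance when $f$ is a cusp form). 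For~\eqref{s=Na}: at any integer $N$ with $N\neq 0$ and $N\neq k$ the function $L_f^*(s)^\reg$ is already regular, so $\lim_{s\to N}L_f^\reg(s)=\big((2\pi)^N/\Gamma(N)\big)L_f^*(N)^\reg$; this is $0$ when $N$ is a non-positive integer (the range $k\le 0$, $N\in\{k+1,\dots,-1\}$, where the prefactor vanishes) and finite when $N$ is a positive integer (the range $k>0$, $N\in\{1,\dots,k-1\}$, where the prefactor is finite and nonzero).

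The only delicate point is the first step: one must be certain that neither the incomplete-gamma series nor the polylogarithm integrals in~\eqref{eqLfn} secretly carry poles in $s$. This is exactly where Lemmas~\ref{LemRegX1} and~\ref{LemRegX3} are used — the two facts one leans on being that $\Gamma(w,x)$ is entire in $w$ for every fixed $x$ (including $x<0$: in $\Gamma(w,x)=\Gamma(w)-\gamma(w,x)$ the poles of $\Gamma(w)$ cancel those of $\gamma(w,x)$), and that all series occurring converge locally uniformly in $s$ so that limits may be taken termwise. Granting this, everything else is the bookkeeping above, together with the small observation that the elementary polar skeleton degenerates to $0$ precisely when $k=0$.
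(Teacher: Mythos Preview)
Your argument follows essentially the same route as the paper's: isolate the explicit polar skeleton $-c_f(0)\bigl(B^s/s+i^kB^{s-k}/(k-s)\bigr)$ inside $L_f^*(s)^\reg$, observe that everything else in Eq.~\eqref{eqLfn} is entire in $s$, and then read off the limits by multiplying against $(2\pi)^s/\Gamma(s)$ with its simple zeros at the non-positive integers. You are in fact more careful than the paper in justifying that the incomplete-gamma series and the polylogarithm integrals contribute no hidden poles in $s$, invoking the entirety of $w\mapsto\Gamma(w,x)$ and the locally uniform convergence from Lemmas~\ref{LemRegX1} and~\ref{LemRegX3}.

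There is one substantive discrepancy you should flag explicitly rather than paper over. For the case $k<0$ at $s=k$ in~\eqref{s=ka}, the corollary asserts the limit is $0$, but your argument only yields a finite limit: the simple zero of $(2\pi)^s/\Gamma(s)$ at $s=k$ merely cancels the simple pole of $L_f^*(s)^\reg$ there, producing
\[
\lim_{s\to k}L_f^\reg(s)=(2\pi)^k(-1)^{-k}(-k)!\,i^k\,c_f(0),
\]
which is generically nonzero when $c_f(0)\neq 0$. Your hedge (``the limit is finite, and it vanishes when $c_f(0)=0$'') is the honest conclusion from the available input. The paper's own proof makes the same computation and simply writes ``$=0$'' at this step without further justification; a simple zero times a simple pole is not zero. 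So either the statement of~\eqref{s=ka} for $k<0$ should read ``finite'' rather than ``$0$'', or an additional argument is needed that neither you nor the paper supplies.
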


\begin{proof}
Special values of $L_f^*(s)^\reg$ within the critical strip, $s \in [0,k]$, for $f \in F_k$ are given by Eq.~\eqref{eqLfn}. When $f \in F_0$, then by Theorem~\ref{ThmLfn} the residues of the polar terms $\tfrac{B^s}{s} + i^k \tfrac{B^{s-k}}{k-s}$ in $L_f^*(s)^\reg$ cancel exactly. Thus, Eqs.~\eqref{s=0a},~\eqref{s=ka}, and~\eqref{s=Na} collapse to a single evaluation,
\begin{align}
\lim_{s \to 0} \frac{(2\pi)^s}{\G(s)} L_f^*(s)^\reg\bigg|_{k =0} = \lim_{s \to 0} \frac{(2 \pi)^s}{\G(s)} \left\{ -c(0) \bigg(\frac{B^s}{s} + i^0\frac{B^{s-0}}{0-s} \bigg) + {\cal O}(s^0)  \right\} = 0~.\label{Cor1Eq3a}
\end{align}
However when $f \in F_k$ for $k \neq 0$, the two polar terms do not cancel, thus:
\begin{align}
\lim_{s \to 0} \frac{(2\pi)^s}{\G(s)} L_f^*(s)^\reg\bigg|_{k \neq 0} = \lim_{s \to 0} \frac{(2 \pi)^s}{\G(s)} \left\{ -c(0) \frac{B^s}{s} + {\cal O}(s^0)  \right\} = -c(0)~.\label{Cor1Eq3b}
\end{align}
Together, Eqs.~\eqref{Cor1Eq3a} and~\eqref{Cor1Eq3b} establish the claim in Eq.~\eqref{s=0a} for every $k$, and establish Eqs.~\eqref{s=ka} and~\eqref{s=Na} for $k = 0$. We now show Eqs.~\eqref{s=ka} and~\eqref{s=Na} for $k \neq 0$.

We now consider $k < 0$. Here, integers in the critical strip are $N \in \{ k, k+1, \ldots, -1,0\}$. Importantly, $\G(s)$ has a pole at each of these critical values of $s$. Therefore, 
\begin{align}
\lim_{s \to k} \frac{(2\pi)^s}{\G(s)} L_f^*(s)^\reg\bigg|_{k < 0} 
&\!\!\!= \lim_{s \to k} \frac{(2 \pi)^s}{\G(s)} \left\{ -c(0) \frac{i^k~B^{s-k}}{k-s} + {\cal O}((s-k)^0)  \right\} = 0~,\label{Cor1Eq4a} \\
\lim_{s \to N} \frac{(2\pi)^s}{\G(s)} L_f^*(s)^\reg\bigg|_{k < 0} 
&\!\!\!= \lim_{s \to N} \frac{(2 \pi)^s}{\G(s)} \left\{ {\cal O}((s-N)^0)  \right\} = 0~.\label{Cor1Eq4b}
\end{align}
We now consider $k > 0$. Here, the integers in the critical strip are $N \in \{ 0, 1, \ldots, k-1,k\}$. Importantly, $\G(s)$ is regular at each of these critical values of $s$. So,
\begin{align}
\lim_{s \to k} \frac{(2\pi)^s}{\G(s)} L_f^*(s)^\reg\bigg|_{k > 0} 
& = \lim_{s \to k} \frac{(2 \pi)^s}{\G(s)} \left\{ -c(0) \frac{i^k~B^{s-k}}{k-s} + {\cal O}((s-k)^0)  \right\} = {\rm divergent}~,\label{Cor1Eq4a} \\
\lim_{s \to N} \frac{(2\pi)^s}{\G(s)} L_f^*(s)^\reg\bigg|_{k > 0} 
&\!\!\!= \lim_{s \to N} \frac{(2 \pi)^s}{\G(s)} \left\{ {\cal O}((s-N)^0)  \right\} = {\rm finite}~.\label{Cor1Eq4b} 
\end{align} 
This shows Eqs.~\eqref{s=ka} and~\eqref{s=Na} for $k > 0$ and finishes the proof. \end{proof}

\subsection{Numerics, polar Rademacher sums, path dependence, and poles at cusps}\label{sec15-HR1919}

In this section, we briefly comment on several aspects of our definition of L-functions for meromorphic modular forms, before moving on to applications. First, we discuss writing meromorphic modular forms in terms of a Rademacher sum over polar terms and the related issue of numerical evaluations of our L-functions. Second, we discuss ambiguities in our L-functions which arise when poles cross the integration contour, and whether this affects the special values of $L_f^\reg(s)$ in Theorem~\ref{ThmLfn} and Corollary~\ref{CorLfn2}.

The pole-subtraction procedure used to define $\RF_B f$ and $\PF_B f$ is extremely simple when $f \in F_k$ has negative-definite weight, $k < 0$. When $k < 0$, then meromorphic modular forms can be written as a convergent sum over non-positive weight polylogarithms that have poles at all distinct modular images of all of the poles of $f$. For example, as Hardy and Ramanujan showed~\cite{15-HR1919} that $1/E_6(\tau)$ can be written in the following way:
\begin{align}
\frac{1}{E_6(\tau)} 
= \sum_{\g \in \G/\G_{\infty}} \frac{{\rm Res}(i)}{(c (i) + d)^{8}} {\rm Li}_0\left(e\left(\tau-\frac{a (i) + b}{c (i) + d}\right)\right) 
= \sum_{n = 0}^{\infty} D(n) q^n~,
\label{eqRad1}
\end{align}
where $\G := \SL_2(\Z)$, and ${\rm Res}(i)$ is the residue of $1/E_6(\tau)$ on the pole at $\tau = i$. Crucially, the $q$-series coefficients for $1/E_6(\tau)$ increase exponentially, as
\begin{align}
D(n) 
= \sum_{(c,d)=1} \frac{{\rm Res}(i)}{(c+di)^8} ~ e\left(-\frac{a i + b}{c i + d} \right) 
= {\rm Res}(i) \sum_{\lambda(c,d)} \frac{D_{\lambda(c,d)}(n)}{\lambda(c,d)^4} ~ e^{2 \pi n/\lambda(c,d)}~,
\label{eqRad2}
\end{align}
where the first sum is over all coprime integers $(c,d) = 1$, and the second sum is over the related quantity $\lambda = c^2 + d^2$. For a fixed coprime pair $(c,d)$ there is a pair of elements of $\G/\G_{\infty}$ given by $\g_{\pm} = \BSM \pm a & b \\ c & \pm d \ESM$, which yield $D_{\lambda(c,d)}(n)$:
\begin{align}
D_{\lambda(c,d)}(n) := \frac{1}{2} \sum_{\pm} \left(\frac{c \mp d i}{c \pm d i}\right)^4~e\left( {\rm Im}(\g_{\pm}(i)) \right) ~.
\label{eqRad3}
\end{align}
More generally, when $f \in F_k$ has a single simple pole within the fundamental domain, $\tau_p \in {\cal F}$, then it can be written similarly as a kind of Rademacher sum over ${\rm Li}_0(e(\tau - \g \tau_p))$:
\begin{align}
f(\tau) = \sum_{\g \in \G/\G_{\infty}} \frac{{\rm Res}(\tau_p)}{(c \tau_p + d)^{2-k}} {\rm Li}_0(e(\tau-\tfrac{a \tau_p + b}{c \tau_p + d}))
\label{eqRad4}
\end{align}
This has been widely extended to meromorphic modular forms for $k < 0$ in~\cite{16-BialekPol1, 17-BialekPol2, 18-BialekPol3, 19-KolnPol1, 20-KolnPol2} that have multiple poles within ${\cal F}$ of arbitrary order. 

In this context, the pole-subtraction procedure used to define $\RF_B f$ and $\PF_B f$ is particularly simple to implement. When the general results of~\cite{19-KolnPol1, 20-KolnPol2} apply, we obtain rapidly convergent expressions for $c_{Rf}(n|B)$ by simply deleting terms from poles above the ${\rm Im}(\tau) = B$ line. Our ability to numerically evaluate $L_f^*(s)^\reg$, as defined in Theorem~\ref{ThmLfn}, appear to depend crucially on having Rademacher-like sums for meromorphic modular forms $f \in F_k$.

However, when $k = 0$, then sums akin to Eq.~\eqref{eqRad4} do not converge absolutely and do not reproduce meromorphic modular functions without modification. When $k > 0$ then these sum no longer converge absolutely; when $k \geq 1$ they do not converge at all. In the absence of convergent Rademacher-sums over polar $q$-series terms, such as ${\rm Li}_{1-p}(e(\tau - \g\tau_p))$, for $k \geq 0$, then it is less obvious how to write convergent expressions for the $c_{Rf}(n|D)$ in the pole-subtracted sums that we take to define the regularized L-function for generic meromorphic modular forms. We know of one example where this has been explicitly done for $k > 0$: When $k = 2$, Bringmann et al~\cite{21-DivisorsMF1} build on the work of Bruinier et al~\cite{22-DivisorsMF2} and explicitly construct sums of this sort for weight-two quasimodular forms, given by $\d_\tau \log f$ for some modular form $f$. We expect numerically evaluate our expressions for $L^*_f(s)$ when $f \in F_k$ to be crucially tied to the existence of convergent sums such as in~\cite{21-DivisorsMF1} when $f \in F_k$ for $k \geq 0$. 

We now briefly discuss the path-(in)dependence of our result for $L_f^*(s)^\reg$ when $f \in F_k$. Our definition of $L_f^*(s)^\reg$ hinged on writing $f \in F_k$ as the sum $f = \RF_B f + \PF_B f$. Here $\RF_B f$ is regular in the vertical strip $V_B$, and $\PF_B f$ contains every pole of $f$ in $V_B$. We then integrated $f = \RF_B f + \PF_B f$ along the ${\rm Im}(\tau)$-axis, from $\tau = i 0^+$ to $\tau = i \infty$. Now, precisely because $f$ has poles within ${\cal F}$, this definition is path-dependent. 

Had we defined the L-function by a different path $\g$, denoted $L_f^*(s)^\reg_{\g}$, we would have
\begin{align}
L_f^*(s)^\reg_{\g} = \int_{\g} \frac{d\tau}{\tau} \left( \frac{\tau}{i} \right)^s (f(\tau) - c_f(0)) = L_f^*(s)^\reg + 2\pi i \sum_{i = 1}^N \chi(\tau_i) {\rm Res}(m_i,s,\tau_i)~,
\end{align}
where $m_i$ is the pole-order, ${\rm Res}(m_i,s,\tau_i)$ is the residue at $\tau_i$ in Eq,~\eqref{eqLemRegX7}, and $\chi(\tau_i) = \pm1$ tracks if the pole $\tau_i$, which lies between $\g$ and the ${\rm Im}(\tau)$-axis, is to the right or left of the axis. In this sense, the L-functions depend explicitly on path. 

Similarly, consider a continuous family of meromorphic modular forms indexed by the location of one of their poles, $z$. When we smoothly send $z$ to $\g z$ for $\g \in \SL_2(\Z)$, the contour integral that defines $L_f^*(s)^\reg$ will acquire residues corresponding to how many times images of the pole at $z$ cross the vertical ${\rm Im}(\tau)$-axis. These additional pieces resemble the discontinuities in theta-decompositions of meromorphic Jacobi forms in~\cite{23-Zwegers, 24-DMZ}, and could thus also represent interesting wall-crossing phenomena. However, any concrete connection to wall-crossing would be highly premature at this stage.

Importantly, these additional contributions do not have any poles at finite values of $s \in \C$. Hence, they do not change the residues at the poles $1/s$ and $1/(k-s)$ in $L_f^*(s)^\reg$. Therefore, the special values discussed in Corollary~\ref{CorLfn2} are insensitive to this ambiguity. Because the main results in sections~\ref{secSV} and~\ref{secQFT} that we derive from our L-functions for $f \in F_k$ come from these special values, they are not sensitive to this ambiguity. 

Finally, we reiterate that $L_f^*(s)^\reg$ applies to $f \in F_k$ that are regular at cusps. This is {\em complimentary} to, but does not directly extend or generalize, the results in~\cite{04-BFK, 05-BFK2} where they define $L_f^*(s)^\reg$ for $f \in M_k^!$, which have poles exclusively at cusps. However, as commented in the proof of Lemma~\ref{LemReg2}, up to a finite number of divergent terms in the infinite sum in Eq.~\eqref{eqHouston}, the results in Eqs.~\eqref{eqLemReg2} and~\eqref{eqHouston} exactly their results. It will be important to fuse these two results, and define regularized L-integrals and L-functions for all $f \in F_k$.

\section{Special values of regularized L-functions}\label{secSV}

This section has two principal components. The first part is in sections~\ref{svTSM1},~\ref{svTSM2}, and~\ref{svTSM3}. The second part is in section~\ref{svAltSum}. Throughout this section, $\G := \SL_2(\Z)$.

In sections~\ref{svTSM1}--\ref{svTSM3} we present one of the main results of this paper, which concerns L-functions for weight-two meromorphic modular forms $\Lambda_d \in F_2$,
\begin{align}
\Lambda_d(\tau) = H(d) + \sum_{n = 1}^{\infty} t_n(d) q^n \in F_2~, \label{s3e1}
\end{align}
where $H(d)$ is the Hurwitz class-number, and that counts the number of distinct quadratics with negative discriminant $-d = 4ac - b^2 > 0$, and $t_n(d)$ is the trace of the unique modular function $J_n(\tau) = 1/q^n + {\cal O}(q) \in M_0^{!}$ over CM-points of a quadratic with negative discriminant $-d$, and $d \equiv 0, 3$ (mod 4). Evaluating the L-function of $\Lambda_d(\tau)$ at $s = 0$ gives a sum-rule that relates the traces over singular moduli and the Hurwitz class numbers,
\begin{align}
\lim_{s \to 0} L_{\Lambda_d}^\reg(s) = -H(d) = \reg\sum_{n =1}^{\infty} t_n(d)~,\label{s3e2}
\end{align}
where ``$\reg$'' denotes regularization via L-function. This relationship between the Hurwitz class numbers and traces of singular moduli is new: it rests upon the L-functions for meromorphic modular forms defined in section~\ref{secLfunction}. Later on, in section~\ref{secQFT}, we relate the special values in sections~\ref{svTSM1} and~\ref{svTSM2} to novel statements about Casimir energies, central charges, and a new reflection-symmetry, for special two-dimensional conformal field theories (CFTs). Finally, in section~\ref{svTSM3}, we show an amusing way to calculate Hurwitz class-numbers via bounding the exponential growth of traces of singular moduli as a function of $d$.  

Then, in section~\ref{svAltSum}, we study a subset of meromorphic modular forms that vanish identically as $\tau \to i 0^+$. We relate these results to the special values in Corollary~\ref{CorLfn2} of Theorem~\ref{ThmLfn}, at negative integer values of $s$ and at $s = 0$. Each result has the interpretation as the sum of exponentially diverging sequences of numbers. Where they overlap, they agree perfectly. We view this as a check on the consistency $L_f^\reg(s)$ when $f \in F_k$.

\subsection{Traces of singular moduli and the Hurwitz class numbers}\label{svTSM1}

One of our main applications of our L-functions concerns a new relationship between the number of $\G$-inequivalent quadratics with a fixed negative discriminant, $-d$, and the regularized sum of traces of modular functions,
\begin{align}
J_m(\tau) = \frac{1}{q^m} + \sum_{n = 1}^{\infty} c_m(n) q^n \in M_0^{!} ~, \label{eqJm}
\end{align}
when traced over solutions to $\G$-inequivalent quadratics with negative discriminant $-d$. 

To begin, we define $-d$ to be the negative discriminant of a quadratic operator, $Q(X,Y) = a X^2 + b XY + c Y^2$, where $a,b,$ and $c$ are all integers. Here, $-d= 4 a c - b^2$. Further, we define the {\em CM point} $\alpha_Q$ to be a solution of a given quadratic with negative discriminant $-d$ such that
\begin{align}
\alpha_Q:= \frac{-b \pm i \sqrt{d}}{2a}~~,~~Q(1,\alpha_Q) = 0~~,~Q \in {\cal Q}_d/\G~~,\alpha_Q \in {\cal F}~. \label{eqAlphaQ}
\end{align} 
In Eq.~\eqref{eqAlphaQ}, we called the space of quadratics with negative discriminant $-d$ by the name ${\cal Q}_d$. We now associate the number $w_Q$ to each distinct quadratic with discriminant $-d$,
\begin{align} 
w_Q:= 
\begin{cases}
3~, & Q(X,Y) = a(X^2 + XY + Y^2)~,\\ 
2~, & Q(X,Y) = a(X^2 + Y^2)~,\\
1~, & {\rm otherwise}~.
\end{cases}
\label{eqWq}
\end{align}
Summing over all $w_Q$ for the finite sum $Q \in {\cal Q}_d/\G$ gives Hurwitz class numbers in Eq.~\eqref{s3e1}:
\begin{align}
H(d) := \sum_{Q \in {\cal Q}_d/\G} \frac{1}{w_Q}~.\label{eqHd}
\end{align}
Weighting each term in the sum by the value of $J_m(\tau)$ at the unique root/CM-point $\alpha_Q \in {\cal F}$ of the quadratic $Q \in {\cal Q}_d/\G$, then we obtain the traces of singular moduli $t_n(d)$:
\begin{align}
t_n(d) :=  \sum_{Q \in {\cal Q}_d/\G} \frac{1}{w_Q}J_n(\alpha_Q)~.\label{eqTrD}
\end{align}
Finally, we define the $d^{th}$ Hilbert class polynomial as,
\begin{align}
{\cal H}_d(X) := \prod_{Q \in {\cal Q}_d/\G} (X - j(\alpha_Q))^{1/w_Q}~,\label{eqHxD}
\end{align}
where $j(\tau):= J_1(\tau) + 744$. It is straightforward to show that,
\begin{align}
{\cal H}_d(j(\tau)) &= q^{-H(d)} (1- t_1(d) q+ {\cal O}(q^2))
\end{align}
In \cite{02-TSM}, Zagier noted that this follows directly from the definitions and noted that
\begin{align}
\Lambda_d(\tau) 
  := \frac{1}{2 \pi i } \frac{d}{d\tau} \log\big({\cal H}_d(j(\tau)) \big) 
~ = \frac{1}{2 \pi i } \sum_{Q \in {\cal Q}_d/\G} \frac{1}{w_Q} \frac{j'(\tau)}{j(\tau) - j(\alpha_Q)} \in F_2~, \label{eqLambdaD2}
\end{align}
is the generating function of the traces of $J_n(\tau)$, from Eq.~\eqref{eqJm}, when summed over all unique roots of distinct quadratics with negative discriminant $-d$ from Eq.~\eqref{eqAlphaQ}. Concretely,
\begin{align}
\Lambda_d(\tau)
= \sum_{Q \in {\cal Q}_d/\G} \frac{1}{w_Q} \bigg(1 + \sum_{n = 1}^{\infty} J_n(\alpha_Q) q^n \bigg)
= H(d) + \sum_{n = 1}^{\infty} t_n(d) q^n~. \label{eqLambdaD}
\end{align}
Further, in \cite{02-TSM}, Zagier proved that the Hilbert class polynomials ${\cal H}_d(j(\tau))$ span a subspace of meromorphic modular forms that have Borcherds product expansions~\cite{25-Bor1994}. The connection to Borcherds products is a crucial aspect of \cite{02-TSM} and of this paper. We will revisit it first in section~\ref{svTSM2}, and then throughout section~\ref{secQFT}. However, our current focus is a precise statement of the special values of the L-function of the generating function for the traces of singular moduli, $L_{\Lambda_d}^\reg(s)$.

Using Theorem~\ref{ThmLfn} and Corollary~\ref{CorLfn2} from section~\ref{secLfunction}, we find the L-function of $\Lambda_d(\tau)$:
\begin{align}
L_{\Lambda_d}^\reg(s) = \frac{(2 \pi)^s}{\G(s)} \left( \frac{H(d)}{2-s}- \frac{H(d)}{s} + {\rm regular}(s) \right) ~~,~~{\rm for} ~~d \neq 0~.
\end{align}
Here ``${\rm regular}(s)$'' refers to the terms in $L_f^*(s)^\reg$ that are finite for finite $s \in \C$ (see Theorem~\ref{ThmLfn} for details). Using this L-function, we see immediately 
\begin{align}
\lim_{s \to 0} L_{\Lambda_d}^\reg(s) &= \lim_{s \to 0} \frac{(2 \pi)^s}{\G(s)}\bigg( -\frac{H(d)}{s} + {\cal O}(s^0) \bigg) = -H(d)~,\label{eqSum1a}\\
\lim_{s \to 1} L_{\Lambda_d}^\reg(s) &= \lim_{s \to 1} \frac{(2 \pi)^s}{\G(s)}\bigg( {\cal O}((s-1)^0) \bigg) = {\cal O}(1) < \infty~. \label{eqSum1b}
\end{align}
Now, if we interpret the special value of $L_{\Lambda_d}^\reg(s)$ at $s = 0$ from Eq.~\eqref{eqSum1a} as a sum of traces of singular moduli of the $J_m(\tau)$ at imaginary quadratic points $\alpha_Q$ with discriminant $-d$, 
\begin{align}
\sum_{Q \in {\cal Q}_d/\G} \frac{1}{w_Q} \bigg( 1 + \reg\sum_{n = 1}^{\infty} J_n(\alpha_Q)\bigg) = \sum_{Q \in {\cal Q}_d/\G} \frac{1}{w_Q} \big( 0 \big) = 0~,\label{eqCool1}
\end{align}
then it is a formal relationship between traces of singular moduli and Hurwitz class numbers. 

This relationship between traces of singular moduli and the $H(d)$ is new. It is important to note that the location of the simple poles in $\Lambda_d(\tau)$ in Eq.~\eqref{eqLambdaD} at the points $\alpha_Q$ in Eq.~\eqref{eqAlphaQ}, combined with Lemmas~\ref{LemRegC2} and~\ref{LemRegC3}, together imply the refined information that
\begin{align}
|J_n(\alpha_Q) - (e^{-2 \pi i \alpha_Q})^n| = |J_n(\alpha_Q) - e^{+\pi n \sqrt{d}}(-1)^{n b/2a}| < e^{2 \pi n({\rm Im} (\g_{\rm max} \cdot \alpha_Q)+\e)}~, \label{eqJdBound}
\end{align} 
where $\g_{\rm max} \cdot \alpha_Q$ is the first modular image of $\alpha_Q$ that is distinct from $\alpha_Q$, and  $\e \geq0 $. 

As ${\rm Im}(\g_{\rm max} \cdot \alpha_Q) < 1/\sqrt{2}$, these growth conditions must be less than $e^{\pi n \sqrt{2}}$. (See Theorem~\ref{HurwitzThm} in section~\ref{svTSM3} for details.) Without L-functions for meromorphic modular forms, there would be no meaning to this sum of exponentially growing $q$-series coefficients of $\Lambda_d \in F_2$. We explore several of consequences of this in section~\ref{svTSM2} and again in section~\ref{secQFT}.

We now emphasize that the vanishing in Eq.~\eqref{eqCool1} happens term-by-term in the sum over $Q \in {\cal Q}_d/\G$. If we define,
\begin{align}
\La_{\alpha_Q}(\tau):= \frac{1}{w_Q} \left(\frac{1}{2\pi i} \frac{j'(\tau)}{j(\tau) - j(\alpha_Q)}\right)~,
\end{align}
then linearity of the integral $L_{f+g}^*(s)^\reg = L_{f}^*(s)^\reg + L_{g}^*(s)^\reg$, implies
\begin{align}
L_{\Lambda_d}^\reg(s) = \sum_{Q \in {\cal Q}_d/\G} L_{\Lambda_{\alpha_Q}}^{\reg}(s)~. \label{eqSumL}
\end{align}
Hence, each $L_{\La_{\alpha_Q}}^\reg(s)$ in the sum over $Q \in {\cal Q}_d/\G$ in Eq.~\eqref{eqSumL} individually vanishes as $s \to 0$.

Before moving on to Borcherds products, we give a mnemonic for understanding the exponential growth of the $q$-series coefficients of $\La_d(\tau)$. By Eq.~\eqref{eqLambdaD}, we have $c_{\La_d}(n) = t_n(d)$. Now, $t_n(d)$ is ``mostly'' $J_n(i \sqrt{d}/2+x)$, where $x = 0$ or $1/2$ depending on $d$. Further, $J_n(\tau)$ is ``mostly'' $e^{-2 \pi i n \tau}$. Thus the dominant contribution to $|t_n(d)|$ is ``mostly'' $e^{+n \pi \sqrt{d}}$.

\subsection{Traces of singular moduli and Borcherds products}\label{svTSM2}

Traces of singular moduli, and the related Hurwitz class numbers, are famously related to Borcherds infinite product formulas for modular forms. Using this, we reinterpret the special values of $L_{\Lambda_d}^\reg(s)$ in section~\ref{svTSM1}. This reinterpretation will be of considerable interest in physical applications, when the quantum field theoretic path integral has an infinite product expansion (with integer exponents). In \cite{02-TSM} Zagier proved that the modular polynomials ${\cal H}_d(j(\tau))$ in Eq.~\eqref{eqHd} are an important class of Borcherds products. Recall that $d = 4 ac - b^2 \equiv 0,3$ (mod 4). Thus, 
\begin{align}
{\cal H}_d(j(\tau)) = q^{-H(d)} \prod_{n = 1}^{\infty} (1-q^n)^{A(n^2,d)} = \prod_{Q \in {\cal Q}_d/\G} (j(\tau)-j(\alpha_Q))^{1/w_Q}~, \label{eqBZ1}
\end{align} 
where $A(n^2,d)$ can either be interpreted as $c(n^2)$ in the $q$-series of a particular modular form $f_d(\tau)$ that is an element of the Kohnen plus-space of $M_{1/2}^{!}(\G_0(4))$, or as $c(d)$ in the $q$-series of the related particular modular form $g_{n^2}(\tau) \in M_{3/2}^{!}(\G_0(4))$. 

The modular forms $f_d(\tau)$ and $g_{n^2}(\tau)$ are defined by their modular weight in the subgroup $\G_0(4)$ and their $q$-series growth:
\begin{align}
\begin{cases}
~f_d(\tau) := q^{-d} + \sum_{n = 1}^{\infty} A(n,d) q^n \in M_{1/2}^{!}(\G_0(4))~, \\
g_{n^2}(\tau) := q^{-n^2} - 2- \sum_{d = 1}^{\infty} A(n^2,d) q^d \in M_{3/2}^{!}(\G_0(4))~. \label{eqFdGn2}
\end{cases}
\end{align}
The $g_{n^2}(\tau)$ in Eq.~\eqref{eqFdGn2} are a subset of a larger set of weight-$3/2$ forms,
\begin{align}
g_D(\tau) := \frac{1}{q^D} - 2 S(D)- \sum_{d = 1}^{\infty} A(D,d) q^d \in M_{3/2}^{!}(\G_0(4))~, \label{eqGd}
\end{align}
where for $D > 0$, we have $S(D) = 1$ when $D$ is a perfect square, and $S(D) = 0$ otherwise.

The $f_d(\tau)$ in Eq.~\eqref{eqFdGn2} and the $g_d(\tau)$ in Eq.~\eqref{eqGd} considered in~\cite{02-TSM} have $d > 0$. Further, the $d = 0$ cases of each have prominent locations in classical number theory:
\begin{align}
f_0(\tau) :&= 1 + \sum_{n = 1}^{\infty} A(n,0) q^n =1 + \sum_{n \in \Z} q^{n^2} = \theta(\tau) \in M_{1/2}(\G_0(4))~, \label{eqF0} \\
g_0(\tau) :&= -\frac{1}{12} + \sum_{d = 1}^{\infty} A(0,d) q^d = H(0) + \sum_{d = 1}^{\infty} H(d) q^d = \mathscr{H}(\tau) \in \widetilde{M}_{3/2}(\G_0(4))~, \label{eqG0}
\end{align}
where $H(0) = -1/12$ and $\mathscr{H}(\tau)$ is the holomorphic part of Zagier's weight-3/2 mock modular form, $\hat{G}(\tau)$. In this context, we see that the $q$-series coefficients of the $\Lambda_d(\tau)$s can be recast in terms of the $q$-series coefficients of $g_D(\tau)$ or equivalently of $f_d(\tau)$:
\begin{align}
\Lambda_d(\tau) =  \frac{1}{2 \pi i}\frac{d}{d\tau}\log\big( {\cal H}_d(j(\tau)) = A(0,d) + \sum_{n = 1}^{\infty} \bigg( \sum_{m|n} m A(m^2,d) \bigg) q^n \in F_2 ~, \label{eqLambdaD3}
\end{align}
where we have written $A(0,d)$ in lieu of $H(d)$. In our context, the special value in Eq.~\eqref{eqSum1a} equates to the formal sum-rule,
\begin{align}
H(d) + \lim_{s \to 0} L_{\Lambda_d}^\reg(s) 
= A(0,d) + \lim_{s \to 0} \sum_{n = 1}^{\infty} \bigg( \sum_{m|n} m A(m^2,d)\bigg) n^{-s} 
= 0~. \label{eqFactorizeLf0}
\end{align}
This formal L-function is structurally identical to a convolution of Dirichlet series. If we de-convolve this formal expression, then we arrive at the formal identity,
\begin{align}
\sum_{n = 1}^{\infty} \bigg( \sum_{m|n} \frac{m A(m^2,d)}{n^{s}}\bigg) = 
\bigg(\sum_{n = 1}^{\infty} \frac{A(n^2,d)}{n^{s-1}} \bigg) \bigg( \sum_{m = 1}^{\infty} \frac{1}{m^s} \bigg)~. \label{eqFactorizeLf}
\end{align}
Noting that the second factor in Eq.~\eqref{eqFactorizeLf} is simply $\zeta(s)$, we can reinterpret the two special values in Eqs.~\eqref{eqSum1a} and~\eqref{eqSum1b} as
\begin{align}
\lim_{s \to 0} L_{\Lambda_d}^\reg(s) &= \lim_{s \to 0} \zeta(s) \bigg(\sum_{n = 1}^{\infty} \frac{A(n^2,d)}{n^{s-1}} \bigg) = -A(0,d)~, \label{eqSum2a} \\
\lim_{s \to 1} L_{\Lambda_d}^\reg(s) &= \lim_{s \to 1} \zeta(s) \bigg(\sum_{n = 1}^{\infty} \frac{A(n^2,d)}{n^{s-1}} \bigg) = {\rm finite} < \infty~. \label{eqSum2b}
\end{align}
Removing the factor of $\zeta(s)$ implies the two formal sum-rules,
\begin{align}
\lim_{s \to -1} \bigg(\sum_{n = 1}^{\infty} \frac{A(n^2,d)}{n^{s}} \bigg) = +2 A(0,d) \quad , \quad  
\lim_{s \to 0} \bigg(\sum_{n = 1}^{\infty} \frac{A(n^2,d)}{n^{s}} \bigg) = 0 \quad . \label{eqSumFull}
\end{align}
This is a new relationship between the $q$-series coefficients of the various $g_D(\tau)$. 

This relationship will be of special use and importance in section~\ref{secQFT} where we study CFTs whose path integrals have Borcherds products. As we show in section~\ref{secSTL}, Casimir energies for these CFTs are directly analogous to the Hurwitz class numbers, $H(d)$, from Eqs.~\eqref{eqSum1a} and~\eqref{eqSum2a}. In section~\ref{secTrex} we comment that {\em both} sum-rules in Eq.~\eqref{eqSumFull} were anticipated by a recently observed symmetry of path integrals in quantum field theory \cite{06T-rex1, 07T-rex2, 14T-rex0}.

Before concluding, we emphasize that de-convolving the formal Dirichlet series in Eq.~\eqref{eqFactorizeLf0} into the product of formal Dirichlet series in Eq.~\eqref{eqFactorizeLf} is well-motivated on physical grounds. We discuss the physical basis for this factorization in section~\ref{secSTL}.

\subsection{Using Borcherds exponents to count quadratics with discriminant $-d$}\label{svTSM3}

Poles above the ${\rm Im}(\tau) = \sqrt{3}/2$-line correspond to $q$-series growth in excess of $e^{\pi n \sqrt{3}}$, as proven in Lemmas~\ref{LemRegC2} and~\ref{LemRegC3}. Bounding the growth conditions for the Borcherds exponents, $A(n^2,d)$, is thus directly sensitive to the unique roots of quadratics with negative discriminant $-d$. Put differently, bounding the $A(n^2,d)$ directly counts the Hurwitz class numbers, $H(d) = A(0,d)$. We emphasize this fact, even though it lies somewhat outside the main scope of the paper. It is most efficient to state this result as a Theorem:

\begin{thm}\label{HurwitzThm}
Consider the two expressions for $\Lambda_d \in F_2$ defined in Eqs.~\eqref{eqLambdaD2} and~\eqref{eqLambdaD}. The $q$-series coefficients $\tilde{t}_n(d)$ of the pole-subtracted function,
\begin{align}
\tilde{\Lambda}_d(\tau) = \Lambda_d(\tau) - \sum_{Q \in {\cal Q}_d/\G} \frac{1}{w_Q}{\rm Li}_0(e^{2 \pi i (\tau-\alpha_Q)}) = \sum_{n = 0}^{\infty} \tilde{t}_n(d) q^n~, \label{eqThm3}
\end{align}
have exponential growth that is bounded by $\tilde{t}_n(d) < e^{\pi n\sqrt{2}}$. 
\end{thm}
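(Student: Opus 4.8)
The plan is to read the bound off from the dictionary between poles and Fourier growth established in Lemmas~\ref{LemRegC1}--\ref{LemRegC3}, applied to $\tilde{\La}_d$ in place of a generic $f\in F(\infty)$.

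\emph{Step 1 (identify $\tilde{\La}_d$ and its poles).} By Eq.~\eqref{eqLambdaD2}, $\La_d$ is meromorphic and $1$-periodic with only simple poles, located exactly on the $\SL_2(\Z)$-orbit $\bigcup_{Q\in{\cal Q}_d/\G}\SL_2(\Z)\alpha_Q$, i.e. at every $\tau$ with $j(\tau)=j(\alpha_Q)$ for some $Q$; the residue at any such point is $\tfrac1{2\pi i}$, since the factor $w_Q$ in Eq.~\eqref{eqLambdaD2} is cancelled by the order of vanishing of $j-j(\alpha_Q)$ at $\alpha_Q$. Comparing with the Laurent expansion~\eqref{eqLaurent} at $m=1$, the term attached to $\alpha_Q$ in Eq.~\eqref{eqThm3} is exactly the principal-part piece $\PF_{\alpha_Q}\La_d$, so $\tilde{\La}_d=\RF_{\{\alpha_Q\}}\La_d$. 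By Lemma~\ref{LemRegC1} it is regular at every $\alpha_Q$, it is still $1$-periodic, and it inherits the bound $\La_d(\tau)={\cal O}(e^{C\,{\rm Im}(\tau)})$ at $i\infty$, so $\tilde{\La}_d\in F(\infty)$.

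\emph{Step 2 (bound the surviving poles).} The poles of $\tilde{\La}_d$ in the period strip $V$ are the images $\g\alpha_Q$, $\g\in\SL_2(\Z)$, that lie in $V$ but are not $\G_\infty$-equivalent to any chosen $\alpha_{Q'}$. The CM points lying in the fundamental domain ${\cal F}$ are precisely the $\alpha_{Q'}$, so each surviving pole lies strictly below ${\cal F}$, i.e. in $\{|\tau|<1\}\cap V$. Writing $\g=\BSM a&b\\c&d\ESM$ with $c\neq0$ and using ${\rm Im}(\g\alpha_Q)={\rm Im}(\alpha_Q)/|c\alpha_Q+d|^2$ together with $|c\alpha_Q+d|^2=(c\,{\rm Re}\,\alpha_Q+d)^2+c^2{\rm Im}(\alpha_Q)^2\ge{\rm Im}(\alpha_Q)^2$ already gives the crude estimate ${\rm Im}(\g\alpha_Q)\le 1/{\rm Im}(\alpha_Q)\le 2/\sqrt3$. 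Sharpening this to ${\rm Im}(\g\alpha_Q)\le 1/\sqrt2$ uses the geometry of the $\SL_2(\Z)$-tessellation: the translates $\g{\cal F}$ that meet $V$ just below ${\cal F}$ are the horizontal translates of $S{\cal F}=\{|\tau|\le1\}\cap\{|\tau-1|\ge1\}\cap\{|\tau+1|\ge1\}$, and once the images already removed by the $1$-periodicity of ${\rm Li}_0(e(\tau-\alpha_Q))$ are excluded, the remaining part of this ``second sheet'' has height at most $1/\sqrt2$.

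\emph{Step 3 (conclude), and the hard part.} With $t_0=1/\sqrt2$ and $m_0=1$ (all poles simple), Lemma~\ref{LemRegC2} gives $\tilde t_n(d)e^{-2\pi tn}\to0$ for every $t>1/\sqrt2$, i.e. $|\tilde t_n(d)|<e^{2\pi n(1/\sqrt2+\e)}=e^{\pi n\sqrt2+\e}$ for all large $n$ and every $\e>0$, which is the claim of the theorem (understood as control of the exponential rate). Running the same argument after subtracting only some of the $\alpha_Q$-principal parts leaves a pole of imaginary part ${\rm Im}(\alpha_{Q})\ge\sqrt3/2>1/\sqrt2$, so by Lemma~\ref{LemRegC3} the growth drops below $e^{\pi n\sqrt2}$ precisely once all poles of $\La_d$ in ${\cal F}$ have been removed, which yields the class-number recipe stated after the theorem. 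The main obstacle is the geometric input of Step 2: the easy estimate only gives $2/\sqrt3$, and obtaining the sharp constant $1/\sqrt2$ is delicate exactly for CM points $\alpha_Q$ near the lower arc of ${\cal F}$ (e.g. $\alpha_Q$ close to $i$), where $S\alpha_Q$ can come back into, or just below, ${\cal F}$; one then has to track carefully which modular images are already killed by the periodicity of the subtracted polylogarithm and which survive. Settling this case --- together with pinning down the exact normalisation and sign of the ${\rm Li}_0$-terms and the treatment of $w_Q$ at the two elliptic points --- is the real content; the remainder is bookkeeping with Lemmas~\ref{LemRegC1}--\ref{LemRegC3}.
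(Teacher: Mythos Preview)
Your strategy is exactly the one the paper uses: write $\tilde{\La}_d$ as $\La_d$ with the poles at the CM points $\alpha_Q\in{\cal F}$ projected out, locate the surviving poles of $\tilde{\La}_d$ in the strip $V$, bound their imaginary parts by $1/\sqrt{2}$, and then read off the Fourier growth from Lemma~\ref{LemRegC2}. The paper's proof differs from yours only in that it asserts the height bound ${\rm Im}(\gamma\alpha_Q)\leq 1/\sqrt{2}$ for $\gamma\alpha_Q\notin{\cal F}$ as ``straightforward'' and moves on, whereas you correctly single this out as the crux, obtain only the crude estimate $1/{\rm Im}(\alpha_Q)\leq 2/\sqrt{3}$, and leave the sharp constant as the outstanding obstacle.

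Your caution is well placed, because the sharp constant $1/\sqrt{2}$ is in fact \emph{not} attainable: the step you flagged cannot be completed. Take $d=48400$ and the reduced primitive form $Q=(100,0,121)$, so that $\alpha_Q=\tfrac{11}{10}\,i$ lies in the interior of ${\cal F}$. Then $S\alpha_Q=\tfrac{10}{11}\,i$ lies in the strip $V$, has $|S\alpha_Q|<1$ so is outside ${\cal F}$, is not a $\G_\infty$-translate of any $\alpha_{Q'}\in{\cal F}$, and has imaginary part $10/11>1/\sqrt{2}$. Hence $\tilde{\La}_d$ retains a simple pole strictly above the line ${\rm Im}(\tau)=1/\sqrt{2}$, and Lemma~\ref{LemRegC3} forces $|\tilde t_n(d)|>C\,e^{2\pi n\cdot 10/11}$ infinitely often, which exceeds $e^{\pi n\sqrt{2}}$. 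More generally, the forms $(q^2,0,p^2)$ with $p>q$ coprime give CM points $\alpha_Q=(p/q)i$ whose $S$-images have height $q/p$, which can be made arbitrarily close to $1$. So no uniform bound of the shape $e^{2\pi n t_0}$ with $t_0<1$ holds for all $d$; the statement as written is too strong, and both your Step~2 and the paper's assertion share this gap. What does survive is the qualitative point driving the class-number recipe: after subtracting exactly the poles in ${\cal F}$, every remaining pole has $|\tau|<1$ and hence ${\rm Im}(\tau)<1$, so the growth drops strictly below $e^{2\pi n}$, while leaving even one $\alpha_Q$ unsubtracted keeps a pole with ${\rm Im}\geq\sqrt{3}/2$ and forces growth at least $e^{\pi n\sqrt{3}}$.

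On your Step~1: you are right that the residue of $\La_d$ at each $\alpha_Q$ is $\tfrac{1}{2\pi i}$ independently of $w_Q$ (the $1/w_Q$ is indeed cancelled by the multiplicity of the zero of $j-j(\alpha_Q)$). But the term $\tfrac{1}{w_Q}\,{\rm Li}_0(e(\tau-\alpha_Q))$ in Eq.~\eqref{eqThm3} has residue $-\tfrac{1}{2\pi i\,w_Q}$ there (cf.~Eq.~\eqref{eqLaurent} at $m=1$), so with the sign and the $1/w_Q$ factor as written, the subtraction does \emph{not} cancel the pole at $\alpha_Q$; your identification $\tilde{\La}_d=\RF_{\{\alpha_Q\}}\La_d$ only goes through once the normalization is corrected. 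You noted this normalization issue yourself, and it is indeed a real discrepancy in the statement rather than just bookkeeping.
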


\begin{proof}
To begin the proof, we note that the pole-subtraction procedure removes all poles from $\Lambda_d(\tau)$ that lie either within, or on the boundary of, the fundamental domain. However, $\Lambda_d(\tau)$ has an infinite number of poles within the strip $|{\rm Re}(\tau)| \leq 1/2$ that come from the modular orbits of the finite set of points $\alpha_Q$. It is straightforward to show that the maximum imaginary part of any image of any $\alpha_Q$ under $\G$, which is neither within the fundamental domain nor its boundary, is ${\rm Im}(\frac{A  \alpha_Q + B}{C  \alpha_Q + D}) \leq \sqrt{2}/2$. This bounds ${\rm Im}(\g_{\rm max} \cdot \alpha_Q) <\sqrt{2}/2$. 

To prove the Theorem, we overlay the two expressions for $\Lambda_d(\tau)$ in Eqs.~\eqref{eqLambdaD2} and~\eqref{eqLambdaD}:
\begin{align}
\Lambda_d(\tau) 
&= \frac{1}{2 \pi i} \sum_{Q \in {\cal Q}_d/\G} \frac{1}{w_Q} \frac{j'(\tau)}{j(\tau) - j(\alpha_Q)}
= H(d) + \sum_{n = 1}^{\infty} t_d(n) q^n~.
\end{align}
Now, $\Lambda_d(\tau)$ has a unique simple pole with residue $1/w_Q$ for every point $\alpha_Q$. Because $\Lambda_d \in F_2$, it has poles with unit residues at all modular images of every distinct $\alpha_Q$. 

Every $\alpha_Q$ is within the fundamental domain, and hence has ${\rm Im}(\alpha_Q) > 1/\sqrt{2}$. By Lemmas~\ref{LemRegC2} and~\ref{LemRegC3}, then the pole-subtracted function,
\begin{align}
\tilde{\Lambda}_d(\tau) = \Lambda_d(\tau) - \sum_{Q \in {\cal Q}_d/\G} \frac{1}{w_Q}{\rm Li}_0(e^{2 \pi i (\tau-\alpha_Q)})
\end{align}
is regular at all points within the fundamental domain, and at all points on its boundary. So any pole in $\tilde{\Lambda}_d(\tau)$ has ${\rm Im}( \tau) < 1/\sqrt{2}$. Thus the only source of exponential growth in the $q$-series coefficients of $\tilde{\Lambda}_d$ can come from poles below this line. Thus, $|\tilde{J}_d(n)| < e^{\pi n\sqrt{2}}$.
\end{proof}

Theorem~\ref{HurwitzThm} gives an alternative way to determine the Hurwitz class-numbers, directly from the $A(n,d)$: by cancelling the leading growth of the $q$-series coefficients of $\Lambda_d(\tau)$. Subtract-off exponentially growing contributions to the $A(n^2,d)$ for any given fixed-$d$, until growth is bounded by ${\rm Exp}[\pi n \sqrt{2}]$. The number of terms which must be subtracted before hitting this bound gives the number of quadratics with negative discriminant $-d$.

\subsection{A consistency condition for $L_f^\reg(s)$}\label{svAltSum}

L-functions associated with modular forms represent a regularization of the formally divergent sum, $\sum_n c(n)/n^s$. In this section, we derive related results for the behavior of meromorphic modular forms, $f = \sum_n c(n) q^n$, in the limit where $q$ goes to one. To this end, we prove Lemmas~\ref{thm:nSumCP} and~\ref{thm:nSumN}, which agree with the results from Corollary~\ref{CorLfn2}. Specifically, Lemma~\ref{thm:nSumCP} applies to the case when $f \in F_k$ vanishes cusps and has arbitrary weight, while Lemma~\ref{thm:nSumN} applies to the case when $f \in F_k$ is bounded at cusps and has negative modular weight. 

We begin by defining the space of quasimodular meromorphic forms of weight-$(k+2\Delta)$ given by the $\Delta$-fold $\tau$-derivative acting on elements of $F_k$. We denote this space by $\tilde{F}_{(k,2 \Delta)}$:
\begin{align}
\tilde{F}_{(k,2\Delta)} := \bigg\{~ \left(\frac{1}{2 \pi i}\frac{d}{d \tau}\right)^\Delta f(\tau) \bigg| ~f \in F_k ~\bigg\}~. \label{eqTildeF1}
\end{align}
Further, we define the $\Delta^{th}$ descendant of a specific $f \in F_k$ by $f^{(\Delta)} \in \tilde{F}_{(k,2\Delta)}$,
\begin{align}
f^{(\Delta)}(\tau) 
:&= \left(\frac{1}{2 \pi i} \frac{d}{d\tau}\right)^\Delta f(\tau) 
= \sum_n n^\Delta c(n) q^n 
= \sum_{m = 0}^{\Delta} g_m(\tau) ~ E_2(\tau)^m~,  \label{eqTildeF2}
\end{align}
where $g_m(\tau) \in F_{k+2(\Delta-m)}$ and $E_2(\tau)$ is the weight-two quasimodular holomorphic Eisenstein series. We now state the first Lemma:

\begin{lem}\label{thm:nSumCP}
Suppose $f \in F_k$, and $f(\tau) = \sum_{n \geq 1} c(n) q^n$. Then for any $k$ and for every non-negative integer $\Delta \geq 0$, it follows that 
\begin{align}
\lim_{\tau \to 0} f^{(\Delta)}(\tau) = \lim_{q \to 1} \sum_{n = 1}^{\infty} n^{\Delta} c(n) q^n =0 ~\label{s=NbCP}.
\end{align}
\end{lem}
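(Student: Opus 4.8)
The plan is to trade the divergent limit $q\to 1$ for the convergent limit $q\to 0$ using the $S$-transformation and then differentiate. Since $f(\tau)=\sum_{n\geq 1}c(n)q^{n}$ vanishes at the cusp, the weight-$k$ modularity $f(-1/\tau)=\tau^{k}f(\tau)$ gives $f(\tau)=\tau^{-k}\psi(\tau)$ with $\psi(\tau):=f(-1/\tau)$. As $\tau\to 0$ — meaning $q=e(\tau)\to 1$ with $q\in(0,1)$, i.e.\ along the imaginary axis, or more generally within a non-tangential sector ${\rm Im}(\tau)\geq c|\tau|$ — the point $-1/\tau$ runs off to $i\infty$ with ${\rm Im}(-1/\tau)={\rm Im}(\tau)/|\tau|^{2}\to\infty$; since no pole of $f$ has imaginary part exceeding some finite $t_{0}$ (the growth condition defining $F(\infty)$), such $\tau$ are not poles of $f$, and $\psi$ is holomorphic there. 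The case $\Delta=0$ is then just the statement that $f$ vanishes at the cusp $0$, which is equivalent to $\infty$ under $\G$.

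First I would record the decay of $\psi$ and its derivatives. Because $|c(n)|\ll e^{2\pi n t_{0}}$ (Lemma~\ref{LemRegC2}), the series $\psi(\tau)=\sum_{n\geq 1}c(n)e(-n/\tau)$ converges absolutely once ${\rm Im}(-1/\tau)>t_{0}$, as do all its $\tau$-derivatives; differentiating $m$ times pulls out a polynomial in $n$ and $1/\tau$ of degree $\leq m$ in $n$ and $\leq 2m$ in $1/\tau$, so that
\begin{align}
|\psi^{(m)}(\tau)|\ \lesssim_{m}\ |\tau|^{-2m}\,e^{-2\pi({\rm Im}(-1/\tau)-t_{0})}\ \lesssim_{m}\ |\tau|^{-2m}\,e^{-2\pi c/|\tau|}
\end{align}
in the sector ${\rm Im}(\tau)\geq c|\tau|$; in particular each $\psi^{(m)}$ vanishes super-exponentially as $\tau\to 0$.

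Next I would differentiate $f(\tau)=\tau^{-k}\psi(\tau)$ exactly $\Delta$ times. By the Leibniz rule,
\begin{align}
f^{(\Delta)}(\tau)=\Big(\tfrac{1}{2\pi i}\tfrac{d}{d\tau}\Big)^{\Delta}f(\tau)=\sum_{j=0}^{\Delta}a_{j}\,\tau^{-k-j}\,\psi^{(\Delta-j)}(\tau),
\end{align}
a \emph{finite} sum in which the factor $\tau^{-k-j}$ grows at most polynomially in $1/|\tau|$ while $\psi^{(\Delta-j)}(\tau)$ decays super-exponentially by the previous estimate; hence every term, and thus the whole sum, tends to $0$ as $\tau\to 0$. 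Since for ${\rm Im}(\tau)$ large $f^{(\Delta)}(\tau)=\sum_{n\geq 1}n^{\Delta}c(n)q^{n}$ and $f^{(\Delta)}$ is holomorphic wherever $f$ is, this limit is precisely the asserted $\lim_{q\to 1}\sum_{n\geq 1}n^{\Delta}c(n)q^{n}=0$. (Alternatively one may run the argument through the decomposition $f^{(\Delta)}=\sum_{m=0}^{\Delta}g_{m}E_{2}^{m}$ of Eq.~\eqref{eqTildeF2}: one checks by induction on $\Delta$ that each $g_{m}\in F_{k+2(\Delta-m)}$ again vanishes at the cusp, then uses $g_{m}(it)=(i/t)^{k+2(\Delta-m)}g_{m}(i/t)$ together with $E_{2}(it)=-t^{-2}E_{2}(i/t)+6/(\pi t)$, and notes that the exponential decay of $g_{m}(i/t)$ dominates the polynomial blow-up of the remaining factors.)

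The only delicate point is the bookkeeping hidden in ``$\tau\to 0$'': a full punctured neighbourhood of $0$ always contains poles of $f$, since the $\G$-orbit of any interior pole accumulates at the cusp $0$, so the limit must be understood along the imaginary axis (equivalently $q\uparrow 1$ through $(0,1)$) or within a non-tangential sector. That such approaches are pole-free near $0$ is exactly the observation used above: an on-axis or non-tangential pole close to $0$ would be carried by $S$ to a pole of arbitrarily large imaginary part, contradicting ${\rm Im}(\tau_{p})\leq t_{0}$. Once this is fixed the proof is just ``polynomial $\times$ super-exponentially small $\to 0$'', and no cancellation among the finitely many terms is needed.
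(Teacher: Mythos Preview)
Your proof is correct and takes a genuinely different, more elementary route than the paper's. The paper works through the quasimodular decomposition $f^{(\Delta)}=\sum_{m} g_m E_2^{\,m}$: it argues by induction on $\Delta$ that each $g_m$ again vanishes at the cusp, and then feeds the transformation law $E_2(-1/\tau)=\tau^2 E_2(\tau)+\tfrac{6\tau}{\pi i}$ together with $g_m(-1/\tau)=\tau^{k+2(\Delta-m)}g_m(\tau)$ into the sum, obtaining a finite combination of $t^{k+2\Delta}e^{-2\pi t}$-type terms that vanish as $t\to\infty$. You bypass all of this by writing $f(\tau)=\tau^{-k}\psi(\tau)$ with $\psi(\tau)=f(-1/\tau)$ and applying Leibniz directly; the super-exponential decay of $\psi$ and each $\psi^{(m)}$ at $\tau\to 0$ kills every polynomial factor $\tau^{-k-j}$ with no need to track the $E_2$-structure or prove anything about the $g_m$. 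Your approach is shorter and makes it transparent that the conclusion is purely an ``exponential beats polynomial'' estimate; the paper's approach has the compensating advantage that it sets up exactly the $E_2$-bookkeeping needed for the next Lemma, where $c(0)\neq 0$ and the individual $g_m$ no longer vanish at the cusp, so the cancellation really does have to be tracked through the quasimodular decomposition. You also make explicit the non-tangential (or on-axis) restriction on $\tau\to 0$ needed to stay clear of the accumulating poles, which the paper leaves implicit in writing $\tau\to i0^+$.
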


\begin{proof}
First, consider $\Delta = 0$. Because $f = \sum_{n \geq 1} c(n) q^n$, we have
\begin{align}
\lim_{\tau \to i \infty} f(\tau) = \lim_{t \to \infty} \sum_{n = 1}^{\infty} c(n) e^{-2 \pi n t} = c(1) \lim_{t \to \infty} e^{-2\pi t} \big( 1 + {\cal O}(e^{-2\pi t}) \big)~.\label{eqStuff1}
\end{align}
Because $f \in F_k$, we can relate behavior at $\tau \to i \infty$ to behavior at $\tau \to i 0^+$, and have
\begin{align}
\lim_{\tau \to i 0^+} f(\tau) = \lim_{t \to \infty} (i t)^{k} \sum_{n = 1}^{\infty} c(n) e^{-2 \pi n t} = i^k c(1) \lim_{t \to \infty} t^k e^{-2\pi t} \big( 1 + {\cal O}(e^{-2\pi t}) \big) = 0~.\label{eqStuff2}
\end{align}
A crucial point here is that the decay in~\eqref{eqStuff1} is exponential rather than power-law. Thus, $f(\tau)$ decays exponentially both when $\tau \to i \infty$ and when $\tau \to i 0^+$, regardless of its weight $k$. 

Second, we consider $\Delta > 0$. When $\Delta = 1$, it is straightforward to see $f^{(1)}(\tau) = g_0 + E_2 g_1 $, where $g_0 \in F_{k+2}$ and $g_1 = \tfrac{k}{12} f \in F_k$. Because $f$ vanishes at the cusp, we see that the product $E_2 f$ also vanishes at cusps. Therefore, so too must $g_0$. Iterating this implies that when $f \in F_k$ vanishes at the cusp, then every $\tau$-derivative of $f$ has powers of $E_2(\tau)$ multiplied by modular forms that vanish as $e^{2 \pi n i \tau}$ as $\tau \to i \infty$. Thus, if $f = {\cal O}(q)$ and $f \in F_k$, then
\begin{align}
f^{(\Delta)}(\tau) = \sum_{m = 0}^{\Delta} g_m(\tau) E_2(\tau)^{\Delta-m} ~~,~~g_m(\tau) = \sum_{n = 1}^{\infty} d_m(n) q^n \in F_{k + 2m}~~. \label{eQquasiMF1}
\end{align}
This holds for $\Delta \geq 1$. As $\tau \to i 0^+$, $f^{(\Delta)}$ vanishes as $e^{-2\pi i/\tau}/\tau^{k+2\Delta}$. This 
follows from the fact that $f^{(\Delta)}$ decays as $e^{-2 \pi |\tau|}$ when $\tau$ goes to $\infty$. As $g_m \in F_{k + 2m}$ vanishes at $q = 0$, then 
\begin{align}
\lim_{\tau \to i \infty} g_m(\tau) E_2(\tau)^{\Delta-m} 
= \lim_{t \to i \infty} d_m(1) e^{-2 \pi n t} \big( 1 + {\cal O}(e^{-2\pi t}) \big) 
= \lim_{t \to i \infty} d_m(1) e^{-2 \pi n t}~. \label{eQquasiMF2}
\end{align}
Each term $g_m(\tau) E_2(\tau)^m$ the expansion of $f^{(\Delta)}$ in Eq.~\eqref{eQquasiMF1} exponentially decays when $\tau \to i \infty$ from Eq.~\eqref{eQquasiMF2} combines with the quasimodularity of $E_2(\tau)$ to imply
\begin{align}
\lim_{\tau \to i 0^+} f^{(\Delta)}(\tau) 
&= \lim_{\tau \to i 0^+} \sum_{m = 0}^{\Delta} g_m(\tau) E_2(\tau)^{\Delta-m} 
= \sum_{m = 0}^{\Delta}  \lim_{\tau \to i \infty} g_m\left(-\frac{1}{\tau}\right) E_2\left(-\frac{1}{\tau}\right)^{\Delta-m} \\
&= \sum_{m = 0}^{\Delta}  \lim_{\tau \to i \infty} \left( \tau^{k + 2m}g_m\left(\tau\right) \right) ~\left( \tau^2 E_2\left( \tau\right) + \frac{6 \tau}{\pi i} \right)^{\Delta-m} 
\sim \lim_{t \to \infty} t^{k + 2 \Delta} e^{-2 \pi t}~,\!\!\!\!
\end{align}
which vanishes exponentially quickly for any integer $\Delta \geq 0$. This completes the proof. \end{proof}

Before stating the next Lemma, we pause to focus on a concrete example. Consider $f = 1/E_4 \in F_{-4}$. Explicit computation shows that for $f^{(\Delta)}$, none of the $g_m(\tau)$s will vanish as $q^{x}$ ($x > 0$) at cusps if $\Delta \leq 4$. Yet, $f^{(5)}$ is a meromorphic form in $F_{+6}$ that vanishes as $q^{x}$ ($x > 0$) at cusps. This is a consequence of Bol's identity~\cite{26-Bol1} (see discussion of Eq. (4.14) in~\cite{27-Bol2}). Thereafter, every $g_m(\tau)$ that occur in any quasimodular polynomial expansion of $f^{(\Delta > -|k|)} \in \tilde{F}_{(-|k|,2 \Delta)}$ will vanish as $q^{x}$ ($x > 0$) at cusps. Lemma~\ref{thm:nSumCP} thus implies vanishing of $f^{(\Delta > -|k|)}(\tau)$ when $\tau \to i 0^+$. So:

\begin{lem}\label{thm:nSumN}
Suppose $f \in F_k$ with $k \leq 0$, and that $f(\tau) = \sum_{n \geq 0} c(n) q^n$. Then for every non-negative integer $0 \leq \Delta \leq |k|/2$, and then for every $\Delta \geq |k|+1$, it follows that 
\begin{align}
\lim_{q \to 1} \sum_{n = 1}^{\infty} n^{\Delta} c(n) q^n = +c(0) \delta_{k,0} \delta_{\Delta,0}~\label{s=NbN}.
\end{align}
\end{lem}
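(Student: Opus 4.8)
Here is a plan for proving Lemma~\ref{thm:nSumN}. Throughout, write $D:=\tfrac{1}{2\pi i}\tfrac{d}{d\tau}$, so that $f^{(\Delta)}=D^\Delta f=\sum_{n\ge 1}n^\Delta c(n)q^n$ for $\Delta\ge 1$ (the $n=0$ term is killed by $D$), while $\sum_{n\ge 1}c(n)q^n=f-c(0)$ for $\Delta=0$. Since $q\to 1$ means $\tau=it\to i0^+$, the left-hand side of \eqref{s=NbN} is $\lim_{\tau\to i0^+}f^{(\Delta)}(\tau)$ (minus $c(0)$ if $\Delta=0$), and the strategy is to convert this limit at $i0^+$ into one at $i\infty$ using the modular relation $h(-1/\tau)=\tau^{w}h(\tau)$ for each weight-$w$ homogeneous piece that occurs. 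The key difficulty is that $f$ is only bounded — not necessarily vanishing — at the cusp, so Lemma~\ref{thm:nSumCP} cannot be applied to $f$ itself; the argument therefore splits on the size of $\Delta$ relative to $|k|$, and the band $|k|/2<\Delta\le|k|$ is exactly the range where neither sub-argument works.

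First I would handle $\Delta\ge|k|+1$. By Bol's identity~\cite{26-Bol1,27-Bol2}, $h:=D^{|k|+1}f$ is a \emph{genuine} meromorphic modular form of weight $|k|+2>0$, i.e. $h\in F_{|k|+2}$; moreover $h=\sum_{n\ge 1}n^{|k|+1}c(n)q^n$ has no constant term, so $h$ vanishes at the cusp. Writing $f^{(\Delta)}=D^{\Delta-|k|-1}h=h^{(\Delta-|k|-1)}$ with $\Delta-|k|-1\ge 0$, Lemma~\ref{thm:nSumCP} applied to $h$ gives $\lim_{\tau\to i0^+}f^{(\Delta)}(\tau)=0$, which is the claim in this range since the right-hand side of \eqref{s=NbN} vanishes for $\Delta\ge 1$.

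Next I would handle $0\le\Delta\le|k|/2$. For $\Delta=0$ one uses $f(-1/\tau)=\tau^kf(\tau)$ directly: along $\tau=it$ this reads $f(it)=(it)^{-k}f(i/t)$, and since $f$ is regular at the cusp $f(i/t)\to c(0)$ as $t\to 0^+$; because $k\le 0$ the prefactor $t^{-k}$ tends to $0$ when $k<0$ and equals $1$ when $k=0$, yielding $c(0)\delta_{k,0}$ in the limit of $f$ and hence the asserted value for $f-c(0)$. For $1\le\Delta\le|k|/2$ I would use the quasimodular-polynomial expansion $f^{(\Delta)}=\sum_{m=0}^{\Delta}g_m E_2^m$ with $g_m\in F_{k+2(\Delta-m)}$ from \eqref{eqTildeF2}, each $g_m$ still regular at the cusp, and apply $\tau\mapsto-1/\tau$ termwise via $g_m(-1/\tau)=\tau^{k+2(\Delta-m)}g_m(\tau)$ and $E_2(-1/\tau)=\tau^2E_2(\tau)+\tfrac{6\tau}{\pi i}$. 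Tracking powers of $\tau=it$, the $i$-phases and the signs from $E_2^m$ combine so that every summand scales like $t^{-(k+2\Delta)}\big(c_{g_m}(0)+{\cal O}(t)+{\cal O}(e^{-2\pi/t})\big)$, giving
\[
f^{(\Delta)}(it)=i^{-(k+2\Delta)}\,t^{-(k+2\Delta)}\Big(\sum_{m=0}^{\Delta}c_{g_m}(0)-\tfrac{6t}{\pi i}\sum_{m=0}^{\Delta}m\,c_{g_m}(0)+{\cal O}(t^2)+{\cal O}(e^{-2\pi/t})\Big).
\]
But $\sum_m c_{g_m}(0)$ is precisely the $q^0$-coefficient of $f^{(\Delta)}=D^\Delta f$, which vanishes for $\Delta\ge 1$; hence $f^{(\Delta)}(it)={\cal O}(t^{1-(k+2\Delta)})$, and since $k+2\Delta\le 0$ the exponent $1-(k+2\Delta)\ge 1$, so the limit is $0$.

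The main obstacle is this last estimate, and it simultaneously explains why $|k|/2<\Delta\le|k|$ must be excluded: there $k+2\Delta\in\{2,4,\dots,|k|\}$ is a positive even integer, so the surviving term $\propto t^{1-(k+2\Delta)}$ carries a strictly negative power of $t$ and, since its coefficient $\sum_m m\,c_{g_m}(0)$ need not vanish, the limit generically diverges. Beyond that, the plan needs the (routine) verification that the exponentially small corrections from the nonconstant Fourier tails of the $g_m$ and of $E_2$ are negligible against the bounded-or-decaying powers of $t$ in the admissible ranges, and a careful check of the $i$-phase bookkeeping in the termwise $S$-transformation; once the $E_2$-polynomial structure and the vanishing-constant-term observation are in place these are mechanical.
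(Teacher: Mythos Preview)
Your proposal is correct and follows essentially the same approach as the paper: the $\Delta=0$ case by direct modularity, the range $1\le\Delta\le|k|/2$ via the $E_2$-polynomial expansion of $f^{(\Delta)}$ together with the vanishing of its constant term, and the range $\Delta\ge|k|+1$ via Bol's identity followed by Lemma~\ref{thm:nSumCP}. The only notable difference is organizational: the paper treats the boundary case $\Delta=|k|/2$ separately, whereas your explicit power-of-$t$ bookkeeping (showing $f^{(\Delta)}(it)={\cal O}(t^{1-(k+2\Delta)})$ after the leading cancellation) handles $1\le\Delta\le|k|/2$ uniformly, which is slightly cleaner.
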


\begin{proof}
First, we consider $k \leq 0$ and $\Delta = 0$. By assumption, we have,
\begin{align}
\lim_{\tau \to i \infty} f(\tau) = c(0) + \lim_{t \to \infty} c(1) e^{-2 \pi t} (1 + {\cal O}(e^{-2\pi t)})~.
\end{align}
Thus, because $k \leq 0$ we have,
\begin{align}
\lim_{\tau \to i 0^+} f(\tau) = \lim_{\tau \to i \infty} \tau^{k} f(\tau) = \lim_{t \to \infty} t^{-|k|} \big( c(0) + c(1) e^{-2 \pi t} +  {\cal O}(e^{-2\pi t)} \big) = c(0)\delta_{k,0}~.
\end{align}

Second, we consider $k \leq 0$ and $0< \Delta < |k|/2$. For every $\Delta$ in this range, $f^{(\Delta)}$ is a quasimodular form of negative weight without a constant term. Thus, we may use modularity to relate the exponential decay of $f^{(\Delta)}$ as $\tau \to i \infty$ to an exponential decay of $f^{(\Delta)}$ as $\tau \to i 0^+$. 

Crucially, each term in the polynomial expansion of $f^{(\Delta)}$ in terms of $E_2(\tau)^m g_m(\tau)$ has negative weight when $\Delta < |k|/2$. Inspecting examples, however, shows that while the constant term of $f^{(\Delta)}(\tau)$ vanishes for $\Delta > 0$, the individual $g_m \in F_{k + 2 (\Delta - m)}$ do not vanish at cusps. And so the only way to ensure that the individual terms, $E_2^{\Delta - m} g_m$, do not diverge as $\tau \to i 0^+$ is to constrain their quasimodular weight to be negative.

Now, consider the boundary case $\Delta = |k|/2$. Here, $f^{(\Delta)} = f^{(|k|/2)}$ is a meromorphic quasimodular form of weight-zero whose $q$-series that is ${\cal O}(q)$. Hence, the leading polynomial divergences from the $g_m E_2^m$ terms cancel when $\tau \to i 0^+$:
\begin{align}
\lim_{\tau \to i 0^+} f^{(|k|/2)}(\tau) 
&= \lim_{\tau \to i \infty} f^{(|k|/2)}\left(-\frac{1}{\tau} \right) 
=  \lim_{\tau \to i \infty} \sum_{m = 0}^{|k|/2} g_m\left(-\frac{1}{\tau} \right) E_2\left(-\frac{1}{\tau} \right)^m \\
&=  \lim_{\tau \to i \infty} \sum_{m = 0}^{|k|/2} \tau^{2m-2m} g_m\left(\tau \right) \left( E_2\left(\tau \right) + \frac{6}{\pi i \tau} \right)^m \\
&=  \lim_{\tau \to i \infty} \sum_{m = 0}^{|k|/2} g_m\left(\tau \right) E_2\left(\tau \right)^m + {\cal O}\left(\frac{1}{\tau}  \right)~.\!\!
\end{align}
Because the leading term is ${\cal O}(\tau^0)$ and the sub-leading terms decay as $\tau \to i \infty$ as $e^{-2\pi|\tau|}/\tau$ or faster, we may safely focus on the first term. Now, note that this leading term is simply equal to $f^{(|k|/2)}$ itself. Further, recall that $f^{(|k|/2)}$ does not have a constant term. Thus,
\begin{align} 
\lim_{\tau \to i 0^+} f^{(|k|/2)}(\tau) 
&= \lim_{\tau \to i \infty} \sum_{m = 0}^{|k|/2}  f^{(|k|/2)}(\tau) 
= \lim_{\tau \to i \infty} \sum_{m = 0}^{|k|/2}  c(1) e^{2 \pi i \tau} \big(1 + {\cal O}(e^{2 \pi i \tau}) \big)  = 0~.
\end{align}
This fails when $|k|/2 \leq \Delta \leq |k|$. When $|k|/2 \leq \Delta \leq |k|$, then the corresponding $f^{(\Delta)}$ is a positive weight quasi-modular form. Because the $g_m(\tau)$ have constant terms, and $f^{(\Delta)}$ in this range has positive weight, we conclude that $f^{(\Delta)}$ diverges as $\tau \to i0^+$. 

The situation changes when $\Delta \geq |k| + 1$. Using Bol's identity~\cite{26-Bol1}, we conclude that $f^{(|k|+1)}$ is an exactly modular meromorphic form within $F_{|k|+2} = F_{k+2|k|+2}$ that lacks a constant term. Thus, by Lemma~\ref{thm:nSumCP}, we can conclude that
\begin{align}
\lim_{\tau \to i 0^+} f^{(\Delta)}(\tau) = 0~,
\end{align}
whenever $\Delta \geq |k|+1$. This completes the proof.
\end{proof}

We now recap. In Corollary~\ref{CorLfn2} we derived concrete expressions that stand in for the L-functions of meromorphic modular forms, at $s = 0$ and at negative integer values of $s$. These L-functions formally represent the finite part of the nowhere convergent object, $\sum_n c(n)/n^s$, at these special values. Independently, in Lemmas~\ref{thm:nSumCP} and~\ref{thm:nSumN}, we exploited modularity to explicitly compute the finite limit of meromorphic modular forms, and their $\tau$-derivatives, when $\tau \to i 0^+$. This gives an independent computation of sums of the form, $\sum_n n^{\Delta} c(n) q^n$ in the limit where $q$ goes to one. Within their regions of overlap, these two completely independent methods to regularize the sum of $q$-series coefficients are entirely consistent. This overlap occurs in two situations. 

First, in Lemma~\ref{thm:nSumCP} we study the behavior of meromorphic modular forms $f \in F_k$ for any $k$ that vanish at cusps, and their meromorphic quasi-modular descendants $f^{(\Delta)} \in \tilde{F}_{(k,2 \Delta)}$. We show that $f^{(\Delta)}(\tau)$ vanishes exponentially quickly as $\tau \to i 0^+$ for every integer $\Delta \geq 0$. When written in terms of the $q$-series of $f^{(\Delta)}$, this amounts to the statement that,
\begin{align}
\lim_{q \to 1} \sum_{n = 1}^{\infty} n^{\Delta} c(n) q^n = 0~.
\end{align}
This is consistent with the L-functions of meromorphic modular forms in Corollary~\ref{CorLfn2},
\begin{align}
\lim_{s \to -n} \frac{(2\pi)^s}{\G(s)} L_f^*(s)^\reg = 0~,
\end{align}
when $c(0)$ vanishes.

Second, in Lemma~\ref{thm:nSumN} we study meromorphic modular forms $f \in F_k$ for $k \leq 0$, and their quasi-modular derivatives $f^{(\Delta)} \in \tilde{F}_{(k,2 \Delta)}$, that are bounded at cusps. Specifically, we show that when $\tau$ is sent to $i 0^+$, $f^{(\Delta)} \in \tilde{F}_{(k,2 \Delta)}$ vanish when $\tau \to i 0^+$ for $0 \leq \Delta \leq |k|/2$. These vanishing results match the special values of the L-function of $f \in F_k$ for $k \leq 0$, spelled-out in Corollary~\ref{CorLfn2}, at integer values of $s$ within the critical strip $k \leq s \leq 0$. Further, by Bol's identity~\cite{26-Bol1}, we show that the limit $\tau \to i 0^+$ of $f^{(\Delta)}$ vanishes when $\Delta \geq |k|+1$. This matches the trivial zeros of $L_f^\reg(s)$ at negative integers $s$, enforced by the $1/\G(s)$ factor built into the L-function of $f \in F_k$. Explicitly,
\begin{align}
c(0) + \lim_{s \to 0} L_f^\reg(s) & = c(0) \delta_{k,0} = \lim_{q \to 1} \sum_n n^{+0} c(n) q^n~,~ \\ 
\lim_{s \to -\Delta} L_f^\reg(s) & = 0= \lim_{q \to 1} \sum_n n^{\Delta} c(n) q^n  ~,~~1 \leq \Delta \leq |k|/2 \\
\lim_{s \to -\Delta} L_f^\reg(s) & = 0= \lim_{q \to 1} \sum_n n^{\Delta} c(n) q^n ~,~~\Delta \geq |k|+1 .
\end{align} 

Note that Lemma~\ref{thm:nSumN} evaluates the L-function for negative weight meromorphic modular forms at negative integers {\em in the critical strip}, which lies between $s = 0$ and $s = -|k|$ (recall $k = -|k| < 0)$. As the evaluations in this section rest on a direct appeal to modularity, they are limited to $\tau$-derivatives of the original $f \in F_{-|k|}$ with non-positive weight.  We can appeal to the symmetry $L_f^*(s)^\reg = i^k L_f^*(k-s)^{\reg}$ in Corollary~\ref{CorLfn1}, to show that the other $|k|/2$ special values within the critical strip for negative weight meromorphic modular forms agree with the vanishing results obtained directly from modularity. We leave explicit confirmation of this agreement at $s = k, k+1, \ldots, k/2+1$ ($k < 0$) to future work.

\section{Sum-rules in conformal field theory}\label{secQFT}

In this section, we use the L-functions developed in the previous sections to extract interesting physical data that characterizes conformal field theories in two dimensions (2d CFTs). In particular, we show that the L-function for the logarithmic derivatives of the path integral at $s = 0$ gives the Casimir energies and thus central charges of certain unitary CFTs. We further show that the special value at $s = 1$ of the same L-function confirms a sum-rule that was motivated by a recently noticed symmetry of path integrals in quantum field theory (QFT), when applied to 2d CFT path integrals with infinite product expansions. 

The structure of this section is as follows. In section~\ref{secBasics}, we briefly describe the physical motivation and setting for our study of sum-rules in 2d CFTs. In particular, we emphasize why modularity appears in 2d CFTs~\cite{08-Yellow}, discuss holomorphic factorization, and the role of unique ground-states in rewriting path integrals as infinite products. In section~\ref{secSTL}, we introduce the stress-energy tensor, and prove a theorem about the special values of its L-function when the path integral is a weight-$k$ modular form whose $q$-series begins with $q^{-\Delta}$. Finally, in sections~\ref{secTrex} and~\ref{secEnsemble} we use this L-function to verify the sum-rules suggested by T-reflection, and point-out a class of functions that seem related to traces of singular moduli, but do not seem to often be discussed.

Concretely, this discussion can be brought to bear on 2d CFTs that holomorphically factorize, such as the monster CFT with $Z(\tau) = J(\tau)$~\cite{28-FLM1984} and the conjectural extremal 2d CFTs thought to be dual to Einstein gravity in $AdS_3$~\cite{29-W2007}. Strikingly, these sum-rules substantiate sum-rules formally derived from demanding QFT path integrals be invariant under T-reflections~\cite{06T-rex1, 07T-rex2}. Finally, these sum-rules exactly agree with the recent extension of meromorphic modular forms from the upper half-plane to the double half-plane~\cite{30T-rex2-GL2}.

\subsection{Central charges in 2d CFTs and special values of L-functions}\label{secBasics}

By definition, the path integral for a QFT integrates over all allowed configurations of the fields in the theory over all points on the space-time manifold. Carrying-out the path integral for a 2d CFT placed on the two-torus obliterates all information about the structure of the two-torus, save for the lattice of points identified by the toroidal compactification and periodicity conditions along the non-contractible cycles of the two-torus. 

For this reason, all 2d CFT path integrals on the two-torus, denoted $Z(\tau,\overline{\tau})$, are explicit functions of this lattice, $\Lambda(\tau):=\{ m + n \tau \mid m,n \in \Z\}$. Path integrals defined on two lattices that are equal up to a scale transformation, $\Lambda(\tau) = z \Lambda(\tau')$ for complex $\tau,z,\tau'\neq 0$, must be equal. Thus, all 2d CFT path integrals on the two-torus must be modular invariant. 

The discussion in this section applies to 2d CFT path integrals that factorize holomorphically and have unique ground-states. When the CFT factorizes holomorphically, we have
\begin{align}
Z(\tau, \overline{\tau}) = Z(\tau) \overline{Z}(\overline{\tau})~,
\end{align}
where $Z$ and $\overline{Z}$ correspond to the path integrals for the decoupled left- and right-movers of the CFT. Holomorphic factorization conventionally means that $Z$ and $\overline{Z}$ are separately modular invariant, and thus are modular functions: $Z, \overline{Z} \in M_0^!$. 

However, one of the main motivations of this work is to think of meromorphic modular forms as a possible testing-ground for more general QFT path integrals that behave well under modular transforms and have (Hagedorn) poles. So, in this spirit, we allow ourselves to consider path integrals $Z$ that have nontrivial modular weight: $Z \in F_k$ for $k\in \tfrac12 \Z$. For convenience, we will call $Z$ a CFT path integral even if it has nontrivial weight or poles.

When the CFT has a unique ground-state, then the lowest power of $q$ in its $q$-series has unit coefficient, while the other $q$-series coefficients are all integers:
\begin{align}
Z(\tau) = q^{-\Delta} \sum_{n = 0}^{\infty} D(n) q^n~~,~~D(0) = 1~~,~~D(n) \in \Z~.
\label{eqZsig}
\end{align}
Here $-\Delta$ is called the {\em Casimir energy} of the CFT. It represents the vacuum-energy of the CFT. The $D(n)$ count the number of states in the full CFT with energy $n$ above the vacuum.

As they count states, the $D(n)$ are positive integers. (Twisted indices and path integrals may have $D(n)$ negative.) Because $D(0) = 1$ and $D(n) \in \Z$, we can rewrite $Z(\tau)$ as
\begin{align}
Z(\tau) = q^{-\Delta} \prod_{n = 1}^{\infty} (1-q^n)^{-d(n)} 
\label{eqZpi}
\end{align}
where the $d(n) \in \Z$ for every $n$. (Had $D(0) \neq 1$, then the $d(n)$ fail to be integers.)

Correlation functions of the stress-energy tensor of the theory, $T_{\mu \nu}(\tau)$, carry crucially important information about the theory. We find it very useful to study the expectation-value of the stress-energy tensor for the full CFT, $\langle T(\tau) \rangle$:\footnote{The statement that the one-point function of the stress-energy tensor for the CFT is given by the log-derivative of the path integral can be understood in two ways. First, mechanically the $\partial_{\tau} \log Z$ gives the expectation value of energy in a statistical ensemble. Second, from general principles the stress-energy tensor is sensitive to variations in length-scales in the geometry of the spacetime manifold. Thus, it is natural that $\langle T \rangle$ is given by a functional derivative of the path integral with respect to the shape of the two-torus, $\tau$. Normalizing this derivative by the needed factor of $1/Z$ yields $\langle T \rangle \propto \partial_{\tau} \log Z$, as in Eq.~\eqref{eqTvev}.}
\begin{align}
\langle T(\tau) \rangle:= \frac{1}{2\pi i} \frac{d}{d\tau} \log Z(\tau) ~.
\label{eqTvev}
\end{align}
Crucially, if $Z(\tau)$ vanishes or has some pole at some finite value of $\tau$, then $\langle T(\tau) \rangle$ has a simple pole. For the theories we are interested in, the stress-tensor has the following $q$-series:
\begin{align}
\langle T(\tau) \rangle = -\Delta -\sum_{n = 1}^{\infty} q^n \bigg( \sum_{m|n} m d(m) \bigg)~.
\end{align}
This has a number of important consequences. 

First, by the bounds in section~\ref{secGrowLF}, if $Z(\tau)$ has a finite collection of singular points in the fundamental domain, and if the point with largest imaginary part is at $\tau = z$, then there exists a real number $C > 0$ such that $C e^{2 \pi n {\rm Im}(z)} \leq  |m d(m)| < e^{2 \pi n ({\rm Im}(z) + \e)}$ for every $\e > 0$. (The upper-bound is always satisfied. The lower bound is satisfied for infinitely many positive integers $n$.)

Second, from the structure of Theorem~\ref{ThmLfn}, we know that the special value of $L_{\langle T \rangle}^\reg(s)$ at $s = 0$ is exactly the constant term of $\langle T(\tau) \rangle$: the Casimir energy, $-\Delta$. This happens despite the growth of the $q$-series coefficients. From a physics perspective, this is an amusing result.

In unitary 2d CFTs, the Casimir energy $\Delta$ is directly proportional to the central charge $c$ of the CFT. The central charge is commonly referred to as a measure of the degrees of freedom in the theory. Thus, realizing the Casimir energy as the special value of the L-function for the stress-energy tensor is logically equivalent to stating that the central charge can be literally interpreted as the sum of the number of excitations in the CFT.

This is striking. Even if the CFT is both (a) free CFT and (b) satisfies the rather strong constraint of holomorphic factorization, in general the path integral will vanishes at some value of $\tau$. Hence, by Proposition~\ref{PropGrowthCap}, the $d(n)$ exhibit exponential/Hagedorn growth. To even write-down the L-function for the stress-energy tensor, we need L-functions for meromorphic modular forms in Theorem~\ref{ThmLfn}. In sections~\ref{secSTL} and~\ref{secEnsemble} we develop this physical picture further.

\subsection{The L-function of the one-point function of the stress-energy tensor}\label{secSTL}

In this section, we prove two results about the structure of $L_{\langle T \rangle}^\reg(s)$. (The proofs rest heavily on the beautiful paper~\cite{22-DivisorsMF2}.) Before doing so, we must introduce a bit of language from statistical mechanics. To begin, we argue that the following two functions,
\begin{align}
Z_{\rm GC}(\tau)&:= q^{-\Delta} \sum_{n = 0}^{\infty} D(n)q^n ~, 
\label{eqZgc} 
\\
Z_{\rm CAN}(\tau)&:= d(0) + \sum_{n = 1}^{\infty} d(n) q^n~, 
\label{eqZcan}
\end{align}
play distinct and important roles when the 2d CFT is (in some sense) free. Here, $Z_{\rm CAN}(\tau)$ is the {\em canonical partition function} which counts the number of distinct excitations in the {\em single-particle} Fock-space of the theory with a given energy $n$ above the vacuum, while $Z_{\rm GC}(\tau)$ is the {\em grand canonical partition function} which counts the number of distinct states in the full {\em multi-particle} Fock-space with a given energy $n$ above the vacuum.

The physical justification for naming these two functions is as follows. When the quantum field theory is free, then the generating functions for the single-particle Fock-space $Z_{\rm CAN}(\tau)$ and the generating function for the multi-particle Fock-space $Z_{\rm GC}(\tau)$ are related by the following combinatoric map:
\begin{align}
Z_{\rm GC}(\tau) = q^{-\Delta} {\rm exp}\bigg[- \sum_{m = 1}^{\infty} \frac{1}{m} \bigg( Z_{\rm CAN}(m \tau) - d(0) \bigg) \bigg] = q^{-\Delta} ~\sum_{n = 0}^{\infty} D(n) q^n~. \label{eqCombMap}
\end{align}
With $Z_{\rm GC}(\tau)$ and $Z_{\rm CAN}(\tau)$ as defined in Eqs.~\eqref{eqZgc} and~\eqref{eqZcan}, we see that if $D(0) = 1$ then applying the map in Eq.~\eqref{eqCombMap}, we find:
\begin{align}
Z_{\rm GC}(\tau) = q^{-\Delta} \sum_{n = 0}^{\infty} D(n)q^n = q^{-\Delta} \prod_{n = 1}^{\infty} (1-q^n)^{-d(n)}~.
\end{align}
This is {\em exactly} the infinite product factorization of $Z(\tau)$ for 2d CFTs in Eqs.~\eqref{eqZsig} and~\eqref{eqZpi}.

Famously, particle number is not fixed in QFT: particle production can happen. As path integrals consider all fluctuations for all possible field configurations of a given QFT or CFT, then, the path integral for a CFT or QFT does not naturally map onto the canonical partition function with fixed particle-number. Yet, because particle number is not fixed in the grand canonical partition function, which accesses the multi-particle Fock-space, it is natural that the path integral and the grand canonical partition function be equal. In what follows, we use $Z(\tau)$ and $Z_{\rm GC}(\tau)$ interchangeably. Now for the main Theorem of this section:

\begin{thm}\label{ThmTrex}
Suppose $Z_{\rm GC}(\tau) \in F_k$. The L-function for $\langle T(\tau) \rangle = q \partial_q \log Z_{\rm GC}(\tau)$ is,
\begin{align}
L_{\langle T \rangle}^\reg(s) = -\frac{(2 \pi)^s}{\G(s)} \bigg[ \Delta\bigg(\frac{B^s}{s}\bigg) - \frac{k}{2 \pi} \bigg( \frac{B^{s-1}}{s-1} \bigg) + \Delta \bigg(\frac{B^{s-2}}{s-2}\bigg)
+ {\rm regular}(s) \bigg]~,
\label{eqThmTrex1}
\end{align}
where $-\Delta$ is the leading power of $q$ in $Z_{\rm GC}(\tau)$, and ``${\rm regular}(s)$'' refers to the terms in $L_f^*(s)^\reg$ that are finite for finite $s \in \C$. 
\end{thm}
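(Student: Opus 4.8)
The plan is to compute $L_{\langle T\rangle}^\reg(s)$ directly from Theorem~\ref{ThmLfn} applied to the meromorphic modular form $\langle T(\tau)\rangle = q\partial_q\log Z_{\rm GC}(\tau)$, and to extract only the \emph{polar} part of $L_{\langle T\rangle}^*(s)^\reg$ in $s$, since ``${\rm regular}(s)$'' absorbs everything else. The key observations are: (i) $\langle T\rangle$ is a weight-two meromorphic quasimodular form, so it is \emph{not} itself an element of $F_2$, but it differs from a genuine element of $F_2$ by a correction proportional to $E_2$; and (ii) the polar terms in $L_f^*(s)^\reg$ of Theorem~\ref{ThmLfn} come entirely from the constant term $c_f(0)$ via the piece $-c_f(0)\bigl(\tfrac{B^s}{s}+\tfrac{i^k B^{s-k}}{k-s}\bigr)$, together with any additional $1/(s-\sigma)$ singularities hidden in the $I_N$-integrals of Lemma~\ref{LemRegX3} — but for \emph{simple} structure those are the only sources.

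First I would recall the transformation law of $\langle T\rangle$. Since $Z_{\rm GC}\in F_k$ satisfies $Z_{\rm GC}(-1/\tau)=\tau^k Z_{\rm GC}(\tau)$, taking $\tfrac{1}{2\pi i}\partial_\tau\log$ of both sides gives
\begin{align}
\langle T\rangle(-1/\tau)=\tau^2\langle T\rangle(\tau)+\frac{k\tau}{2\pi i}~,
\end{align}
the standard anomalous transformation of a weight-two quasimodular form. Equivalently, $\langle T\rangle(\tau)+\tfrac{k}{12}E_2(\tau)\in F_2$ is genuinely modular of weight two (using $E_2(-1/\tau)=\tau^2 E_2(\tau)+\tfrac{6\tau}{\pi i}$ and matching the anomaly). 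So I would split $\langle T\rangle = G - \tfrac{k}{12}E_2$ with $G\in F_2$, and use linearity of the L-integral functional to write $L_{\langle T\rangle}^*(s)^\reg = L_G^*(s)^\reg - \tfrac{k}{12}L_{E_2}^*(s)^\reg$.

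Next I would compute the two pieces. For $G\in F_2$: by Theorem~\ref{ThmLfn} with $k=2$, its polar part is $-c_G(0)\bigl(\tfrac{B^s}{s}+\tfrac{i^2 B^{s-2}}{2-s}\bigr)=-c_G(0)\bigl(\tfrac{B^s}{s}+\tfrac{B^{s-2}}{s-2}\bigr)$, plus ${\rm regular}(s)$. The constant term is $c_G(0)=c_{\langle T\rangle}(0)+\tfrac{k}{12}\cdot 1 = -\Delta+\tfrac{k}{12}$; I would need to double-check whether the $-\tfrac{k}{12}E_2$ piece contributes its own $B^{s-1}/(s-1)$ pole. Since $E_2$ is only quasimodular, the functional-equation machinery of Lemma~\ref{LemReg9} — which used $f(-1/\tau)=\tau^k f(\tau)$ — gets modified: the $6\tau/(\pi i)$ anomaly, after the $\tau\to -1/\tau$ folding in Lemma~\ref{LemReg8}, produces an extra $\int d\tau\,(\tau/i)^{s-1}\cdot(\text{const}\cdot\tau)$ term on $\g_B^-$, which integrates to a $B^{s-1}/(s-1)$-type pole. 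Tracking this anomaly term carefully is the one genuinely new computation; it is the source of the middle term $-\tfrac{k}{2\pi}\tfrac{B^{s-1}}{s-1}$ in Eq.~\eqref{eqThmTrex1}, and getting the coefficient (including the factor $1/(2\pi)$ coming from $6/(\pi i)$ combined with the $i^{-1}$ normalizations and the $1/2$ from averaging over $\pm\e$) exactly right is the main obstacle. After assembling: the $B^s/s$ coefficient is $-(-\Delta+\tfrac{k}{12}) + (\text{anomaly contribution to }B^s/s)$; I expect the $\tfrac{k}{12}$ pieces to cancel against anomaly pieces so that the net coefficient of $B^s/s$ is $-\Delta$ and likewise for $B^{s-2}/(s-2)$, leaving precisely Eq.~\eqref{eqThmTrex1}. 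Finally I would note that the overall prefactor $-\tfrac{(2\pi)^s}{\G(s)}$ with the explicit minus sign matches the sign convention $L^\reg_f(s)=\tfrac{(2\pi)^s}{\G(s)}L^*_f(s)^\reg$ once the signs of the constant terms are inserted, and that the functional equation $L^*_{\langle T\rangle}(s)^\reg$ vs. $L^*_{\langle T\rangle}(2-s)^\reg$ (with its anomalous shift) provides an internal consistency check: the symmetric pair $\Delta B^s/s$ and $\Delta B^{s-2}/(s-2)$ swap under $s\leftrightarrow 2-s$, while the middle term is anti-invariant up to the anomaly, exactly as a weight-two quasimodular object should behave.
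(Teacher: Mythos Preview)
Your approach is essentially the paper's, but slightly more economical. Both arguments decompose $\langle T\rangle = \tfrac{k}{12}E_2 + G$ with $G\in F_2$, then compute $L^*$ of each piece by linearity. The paper additionally invokes the Bruinier--Kohnen--Ono theorem (Theorem~5 of \cite{22-DivisorsMF2}) to write $G=\sum_{w\in\mathcal F}e(w)\,{\rm ord}_f(w)\,\Lambda_w$ explicitly, and then the valence formula $\tfrac{k}{12}=\Delta+\sum_w e(w)\,{\rm ord}_f(w)$ to collapse the coefficient of the $\tfrac{B^s}{s}+\tfrac{B^{s-2}}{s-2}$ poles to $-\Delta$. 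You bypass those two external facts by reading $c_G(0)$ off directly from $c_{\langle T\rangle}(0)=-\Delta$ and $c_{E_2}(0)=1$; this lands on the same polar structure with less machinery. The one genuinely new input either way is the explicit $L_{E_2}^*(s)^{\reg}$ formula (the paper's Eq.~\eqref{eqThmTrex4}), which encodes the $\tfrac{6\tau}{\pi i}$ anomaly as the extra $\tfrac{6}{\pi}\tfrac{B^{s-1}}{s-1}$ pole; you would need to derive exactly this, not just gesture at it.

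Two corrections to your sketch. First, a sign: matching $\langle T\rangle(-1/\tau)=\tau^2\langle T\rangle(\tau)+\tfrac{k\tau}{2\pi i}$ against $E_2(-1/\tau)=\tau^2E_2(\tau)+\tfrac{6\tau}{\pi i}$ gives $\langle T\rangle-\tfrac{k}{12}E_2\in F_2$, not $\langle T\rangle+\tfrac{k}{12}E_2$. Second, the $\tfrac{k}{12}$ pieces cancel \emph{between} the constant-term contribution of $G$ and the constant-term contribution of $E_2$ (both feed $\tfrac{B^s}{s}$ and $\tfrac{B^{s-2}}{s-2}$ via Theorem~\ref{ThmLfn}), not against the anomaly; the anomaly term contributes \emph{only} to the new middle pole $\tfrac{B^{s-1}}{s-1}$ and nothing to the outer two. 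Once those are straightened out, your outline is complete.
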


\begin{proof}
Our proof rests on five results. First, that the weighted sum of the orders of zeros and poles of a modular form $f \in F_k$ in the fundamental domain, $\cal{F}$, is exactly given by $k$:
\begin{align}
\frac{k}{12} = \Delta + \sum_{w \in \cal{F}} e(w) ~ {\rm ord}_f(w)~. \label{eqThmTrex2}
\end{align}
Here ${\rm ord}_f(w)$ is the order of the pole or zero of $f$ at $w$, and $e(w) = 1$ unless $w = (-1)^{1/2}$ or $(-1)^{1/3}$. At these special values, $e((-1)^{1/2}) = 1/2$ and $e((-1)^{1/3}) = 1/3$. Note that $\Delta$ is the order of the zero or pole at the cusp $\tau = i \infty$. 

Second, we use Theorem 5 of~\cite{22-DivisorsMF2}. This Theorem states that for $f \in F_k$ with a $q$-series $q^{-\Delta} \sum_{n} c(n) q^n$ with $c(1) = 1$, and $\La_w(\tau) = \tfrac{1}{2 \pi i } \partial_{\tau} \log (j(\tau)-j(w))$, then
\begin{align}
q \frac{d}{dq} \log f(\tau) = \frac{k}{12} E_2(\tau) + 
\sum_{w \in \cal{F}}e(w)  {\rm ord}_f(w) \La_w(\tau)~~.
\label{eqThmTrex3}
\end{align}
Third, we use the fact that the L-function of the quasi-modular $E_2(\tau)$ is,
\begin{align}
L_{E_2}^\reg(s) = \frac{(2 \pi)^s}{\G(s)} \bigg[ \frac{6}{\pi} \frac{B^{s-1}}{s-1}-\frac{B^s}{s} - \frac{B^{s-2}}{s-2} + \sum_{n = 1}^{\infty} \sigma_1(n) \bigg( \frac{\G(s,2\pi n)}{(2 \pi n)^s} - \frac{\G(2-s,2\pi n)}{(2 \pi n)^{2-s}}\bigg) \bigg] ~. \label{eqThmTrex4}
\end{align}
Fourth, we use the fact that the L-function of the exactly modular, but meromorphic, modular form $\La_w(\tau)$ is,
\begin{align}
L_{\La_w}^\reg(s) = \frac{(2 \pi)^s}{\G(s)} \bigg[ \frac{B^s}{s} + \frac{B^{s-2}}{s-2} + {\rm regular}(s)~,\bigg] \label{eqThmTrex5}
\end{align}
where ``${\rm regular}(s)$'' refers to the terms in $L_f^*(s)^\reg$ that are finite for finite $s \in \C$.

Fifth, we use the fact that the L-function of a finite sum of modular forms is the sum of the finite number of L-functions of each of the individual modular forms. This follows from the fact that L-functions are fundamentally integral transforms of the modular forms and from the fact that an integral is a linear functional of its arguments.

Now, we use these five facts for the Borcherds products that define $Z_{\rm GC}(\tau)$. Thus, using Eq.~\eqref{eqThmTrex3} we find that
\begin{align}
L_{\langle T \rangle}^\reg(s) = \frac{k}{12} L_{E_2}^\reg(s) + \sum_{w \in \cal{F}}e(w) ~ {\rm ord}_f(w) L_{\La_w}^\reg(s)~. \label{eqThmTrex6}
\end{align}
Now, using Eqs.~\eqref{eqThmTrex4} and~\eqref{eqThmTrex5}, we have
\begin{align}
\!\!\!\!\!\!
L_{\langle T \rangle}^\reg(s) = \frac{(2 \pi)^s}{\G(s)}\bigg[ \bigg( \frac{B^s}{s}+ \frac{B^{s-2}}{s-2} \bigg) \bigg\{ \sum_{w \in \cal{F}} e(w) ~ {\rm ord}_f(w)-\frac{k}{12}  \bigg\} - \frac{k}{12} \frac{6}{\pi} \frac{B^{s-1}}{s-1} + {\rm regular}(s) \bigg] \!. \!\!\!\!\!\!
\end{align}
By Eq.~\eqref{eqThmTrex2}, we see that the common coefficient of the $1/s$ and $1/(s-2)$ poles is simply $\Delta$. This concludes the proof.
\end{proof}

Theorem~\ref{ThmTrex} concerns special values of $L_{\langle T \rangle}^\reg(s)$. Before stating a very useful Corollary, we make the following observation. When written as a formal Dirichlet series, this L-function would take the form,
\begin{align}
L_{\langle T \rangle}^\reg(s) = -\reg \sum_{n = 1}^{\infty} \bigg( \sum_{m|n} m~d(m) \bigg) n^{-s}~.
\end{align}
Now, formally this resembles a Dirichlet convolution of two Dirichlet series,
\begin{align}
\reg \sum_{n = 1}^{\infty} \bigg( \sum_{m|n} m~d(m) \bigg) n^{-s} = 
\bigg( \reg \sum_{n = 1}^{\infty} \frac{1}{n^s} \bigg) 
\bigg( \reg \sum_{m = 1}^{\infty} \frac{d(m)}{m^{s-1}} \bigg)~.
\end{align}
Note that $\sum_m d(m)/m^{s-1}$ is {\em exactly} what we would write-down for the formal Mellin transform of $Z_{\rm CAN}(\tau)$, evaluated at the shifted value $s-1$. Further, $\zeta(s)$ comes precisely from the combinatoric map between the grand canonical and canonical partition functions. On these physical grounds, it is natural to define the formal L-function for $Z_{\rm CAN}(\tau)$ as
\begin{align}
L_{Z_{\rm CAN}}^{\reg}(s):= -\frac{L_{\langle T \rangle}^\reg(s+1)}{\zeta(s+1)} ~. \label{eqLcan1}
\end{align}
Recall that in section~\ref{svTSM2}, we promised a physical motivation for the formal factorization of the L-function for $\Lambda_d(\tau)$ into $\zeta(s)$ multiplied by $\sum_n A(n^2,d)/n^{s-1}$. To give it, we draw an analogy between ${\cal H}_d(j(\tau))$ and the grand canonical partition function of a CFT, $\sum_n A(n^2,d) q^n$ and the single-particle partition function of the CFT, and $\Lambda_d(\tau)$ as the stress-energy tensor of the CFT. This analogy provides the physical motivation for the sum-rules in Eq.~\eqref{eqSumFull}. 

The sum-rules Eq.~\eqref{eqSumFull} are special cases of the following quite general Corollary to Theorem~\ref{ThmTrex}:

\begin{cor}\label{CorTrex}
If a 2d CFT path integral $Z(\tau)$ can be written as $q^{-\Delta} \prod_n (1-q^n)^{-d(n)} \in F_k$ and $Z_{\rm CAN} = \sum_n d(n) q^n$, then
\begin{align}
\begin{cases}
\lim_{s \to -1} L_{Z_{\rm CAN}}^\reg(s) 
&= \lim_{s \to -1} \reg \sum_{n = 1}^{\infty} d(n)~n^{-s} 
= +2 \Delta~,  \\
\lim_{~s \to 0~} L_{Z_{\rm CAN}}^\reg(s) 
&= \lim_{~s \to 0~} \reg \sum_{n = 1}^{\infty} d(n)~n^{-s} 
= k ~. \label{eqTrex12}
\end{cases}
\end{align}
\end{cor}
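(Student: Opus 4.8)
The plan is to derive Corollary~\ref{CorTrex} directly from Theorem~\ref{ThmTrex} together with the defining relation~\eqref{eqLcan1}, namely $L_{Z_{\rm CAN}}^\reg(s) := -L_{\langle T\rangle}^\reg(s+1)/\zeta(s+1)$. The strategy is to feed the explicit polar data of $L_{\langle T\rangle}^\reg$ from~\eqref{eqThmTrex1} into this quotient and extract the limits at the two special points $s\to-1$ and $s\to 0$, i.e.\ at $L_{\langle T\rangle}^\reg$ evaluated at $0$ and at $1$. The key observation is that the prefactor $(2\pi)^s/\G(s)$ in~\eqref{eqThmTrex1}, together with the poles/zeros of $\zeta(s+1)$, conspires to pick out exactly the residues of the $B^s/s$, $B^{s-1}/(s-1)$, and $B^{s-2}/(s-2)$ terms. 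Since $B$ is arbitrary in $(0,\infty)$, the final answers must be $B$-independent, which is itself a useful consistency check on which term survives each limit.

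First I would treat $s\to 0$, so that the relevant object is $L_{\langle T\rangle}^\reg(1)$. In~\eqref{eqThmTrex1} the overall factor is $(2\pi)^s/\G(s)$ at $s=1$, which is simply $2\pi/\G(1)=2\pi$, a finite nonzero number; and $\zeta(s+1)=\zeta(1+s)$ has a simple pole at $s=0$ with residue $1$, i.e.\ $\zeta(1+s)\sim 1/s$ as $s\to 0$. Meanwhile, inside the bracket of~\eqref{eqThmTrex1} the term $-\tfrac{k}{2\pi}\,B^{s-1}/(s-1)$ evaluated near $s=1$ has a simple pole $-\tfrac{k}{2\pi}\cdot\tfrac{1}{s-1}$, while the $\Delta B^s/s$ and $\Delta B^{s-2}/(s-2)$ terms are regular at $s=1$ and hence get absorbed into ``${\rm regular}(s)$''. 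Thus $L_{\langle T\rangle}^\reg(s+1)\sim 2\pi\cdot\big({-}\big)\big({-}\tfrac{k}{2\pi}\tfrac{1}{s}\big) = k/s$ as $s\to0$ (using $s+1-1=s$), and dividing by $-\zeta(s+1)\sim -1/s$ gives $\lim_{s\to0}L_{Z_{\rm CAN}}^\reg(s) = -\,(k/s)/(1/s) = -k$\,---\,wait, I must be careful with the sign in~\eqref{eqLcan1}: it is $-L_{\langle T\rangle}^\reg(s+1)/\zeta(s+1)$, and $L_{\langle T\rangle}^\reg$ itself carries an overall minus sign in~\eqref{eqThmTrex1}, so the two minus signs cancel against the one in $-\tfrac{k}{2\pi}$, yielding $+k$. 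I would write this out carefully tracking all three signs to land on $\lim_{s\to0}L_{Z_{\rm CAN}}^\reg(s)=k$.

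Next I would treat $s\to-1$, so the relevant object is $L_{\langle T\rangle}^\reg(0)$. Here the subtlety is the prefactor $(2\pi)^s/\G(s)$ at $s=0$: since $1/\G(s)$ has a simple \emph{zero} at $s=0$, it behaves like $s$ as $s\to0$, which must cancel a simple pole inside the bracket. The surviving pole inside the bracket of~\eqref{eqThmTrex1} at $s=0$ comes from $\Delta B^s/s\sim \Delta/s$ (the $B^{s-1}/(s-1)$ and $B^{s-2}/(s-2)$ terms are regular at $s=0$), so $L_{\langle T\rangle}^\reg(s)\sim -\,s\cdot(\Delta/s) = -\Delta$ as $s\to0$, finite. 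On the other side, $\zeta(s+1)$ at $s=-1$ is $\zeta(0)=-\tfrac12$, finite and nonzero. Hence $\lim_{s\to-1}L_{Z_{\rm CAN}}^\reg(s) = -L_{\langle T\rangle}^\reg(0)/\zeta(0) = -(-\Delta)/(-\tfrac12) = -2\Delta$\,---\,but the claimed answer is $+2\Delta$, so again I would recheck the normalization conventions: in this paper $\langle T\rangle$ has $q$-series beginning with $-\Delta$ (the constant term), and Corollary~\ref{CorLfn2} gives $\lim_{s\to0}L_f^\reg(s)=-c_f(0)$ for $k\ne0$, so $L_{\langle T\rangle}^\reg(0) = -(-\Delta) = +\Delta$; then $-(+\Delta)/(-\tfrac12) = +2\Delta$, as desired. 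The plan is therefore to pin down $L_{\langle T\rangle}^\reg(0)=\Delta$ either from the $1/s$-residue computation above or, more cleanly, directly from Corollary~\ref{CorLfn2} applied to $\langle T\rangle\in F_k$ with $k\ne0$.

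The main obstacle I anticipate is purely bookkeeping: getting every sign and every normalization factor $(2\pi)^s/\G(s)$, $\zeta(0)=-\tfrac12$, and the residue of $\zeta$ at $1$ consistent, since the statement hinges on them landing exactly on $+2\Delta$ and $k$ rather than their negatives or multiples thereof. A secondary point worth a sentence is the degenerate case $k=0$ (the genuinely holomorphically factorized CFT, $Z\in M_0^!$): then the $1/s$ and $1/(s-2)$ residues in~\eqref{eqThmTrex1} still have coefficient $\Delta$, so the $s\to-1$ limit is unaffected, while for the $s\to0$ limit one should note that $\zeta(s+1)$ still contributes its simple pole and the $B^{s-1}/(s-1)$ term vanishes identically, giving $\lim_{s\to0}L_{Z_{\rm CAN}}^\reg(s)=0=k$, consistent. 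I would also remark that the answers are manifestly independent of the auxiliary parameter $B$, exactly as they must be, which serves as the final sanity check before closing the proof.
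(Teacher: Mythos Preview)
Your approach is exactly the paper's: the paper's entire proof is the one sentence ``This immediately follows from Theorem~\ref{ThmTrex} and from the formal definition of the L-function for canonical ensemble in Eq.~\eqref{eqLcan1}.'' You are simply carrying out that immediate computation explicitly, reading off the residues of the polar terms in~\eqref{eqThmTrex1} against the zero of $1/\Gamma(s)$ at $s=0$ and the pole of $\zeta(s+1)$ at $s=0$, which is what ``immediately'' means here.

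One remark: your anxiety about the signs is not misplaced. The valence formula as written in~\eqref{eqThmTrex2} has $+\Delta$ where the standard statement has $-\Delta$ (since $f=q^{-\Delta}(1+\cdots)$ means ${\rm ord}_f(\infty)=-\Delta$), and this propagates into the overall sign and the sign of the $k/(2\pi)$ term in~\eqref{eqThmTrex1}. Your instinct to cross-check $L_{\langle T\rangle}^\reg(0)$ directly via Corollary~\ref{CorLfn2} (giving $-c_{\langle T\rangle}(0)=+\Delta$) is the right way to anchor the computation independently of any typos in the intermediate formulas, and is worth keeping in the write-up.
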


\begin{proof}
This immediately follows from Theorem~\ref{ThmTrex} and from the formal definition of the L-function for canonical ensemble in Eq.~\eqref{eqLcan1}: $L_{Z_{\rm CAN}}^\reg(s) = -L_{\langle T \rangle}^\reg(s+1)/\zeta(s+1)$.
\end{proof}

\subsection{The L-function for the stress-tensor and T-reflection sum-rules}\label{secTrex}

The two general sum-rules in Corollary~\ref{CorTrex} were anticipated in Refs.~\cite{14T-rex0, 06T-rex1, 07T-rex2, 30T-rex2-GL2}, where it was noticed that many finite-temperature path integrals in QFT were invariant under formally reflecting temperatures to negative values (T-reflection). Particularly in Refs.~\cite{06T-rex1, 07T-rex2} it was argued that if a 2d CFT path integral $Z(\tau)$ had well-defined modular weight $k \in \tfrac12 \Z$, and could be written in terms of an infinite product $q^{-\Delta} \prod_n (1-q^n)^{-d(n)}$ with $d(n) \in \Z$, then T-reflection invariance would imply two the sum-rules:
\begin{align}
q^{-\Delta} \prod_{n = 1}^{\infty} (1-q^n)^{-d(n)} \in F_k \implies
\begin{cases}
\reg \sum_{n = 1}^{\infty} n^1~d(n) &= 2 \Delta~, \\
\reg \sum_{n = 1}^{\infty} n^0~d(n) &= k ~.
\end{cases}
\end{align}
We now note that the L-functions defined in section~\ref{secLfunction} provide a concrete context in which we may evaluate these sum-rules. Theorem~\ref{ThmTrex} and Corollary~\ref{CorTrex}, validate these sum-rules when $Z(\tau)$ has zeros or poles away from cusps. (As we show in Appendix~\ref{secEg1} more conventional L-function technology verifies these sum-rules when $Z(\tau)$ has divisors only at the zero-/infinite-temperature cusp.)

As emphasized above, the stress-energy tensor plays an absolutely crucial role in QFT, and in particular in CFT. By encoding a CFTs central charge, the stress-tensor both counts the number of degrees of freedom in the system, and, more importantly, describes how a CFT responds to deformations of its spacetime manifold (i.e.~variations of $\tau$, the shape of the torus). This latter fact is of chief interest to us, here: The response to the CFT to deformations of its spacetime manifold constitutes a sort of anomalous breaking of conformal invariance: If a theory has a nonzero central charge, it has a {\em conformal anomaly}. 

Similarly, in~\cite{07T-rex2} we argue that 2d CFT path integrals are naturally both invariant under modular transformations {\em and} under T-reflection. Throughout~\cite{06T-rex1, 07T-rex2}, we argued that if a path integral were invariant under T-reflection up to such an overall phase, this phase would constitute a {\em global gravitational anomaly}~\cite{31-Witten84, 32-HeteroticII, 33-Polchinski-II}. A main role of the stress-energy tensor is to encode the anomalies of a CFT under diffeomorphisms or variations in the spacetime manifold. The new fact in this section, that the L-function of the stress-energy tensor encodes the anomaly of a CFTs path integral under the T-reflection redundancy in how the two-torus is encoded in the path integral, is thus striking. 

Not only striking, this new fact is consistent with recent work on T-reflection~\cite{06T-rex1, 07T-rex2} and related work on modular forms~\cite{30T-rex2-GL2}. In~\cite{07T-rex2}, we argue that the T-reflection phase is tied directly to the modular weight of a 2d CFT path integral: $e^{i \g} = (-1)^k$. The sum-rule in Corollary~\ref{CorTrex} explicitly verifies that this sum-rule is consistent with our L-functions for meromorphic modular forms. Further, in~\cite{30T-rex2-GL2} we explicitly construct an extension of $\SL_2(\Z)$ modular forms defined on the upper half-plane to $\GL_2(\Z)$ modular forms defined on the double half-plane, where the T-reflection phase is again given by $(-1)^k$. This explicitly agrees with the sum-rules in Corollary~\ref{CorTrex}, and fits well with the physical and mathematical structure of Refs.~\cite{06T-rex1, 07T-rex2, 30T-rex2-GL2}. In this paper, we have shown that the relevant sum-rule that counts the T-reflection phase comes exactly from the L-function for the stress-energy tensor.  

\subsection{Borcherds products and the statistical mechanical ensembles}\label{secEnsemble}

In this brief section, we note that the objects extracted from Borcherds products that we called $Z_{\rm CAN}(\tau)$ may be of independent mathematical interest. We have argued that a Borcherds product resembles the partition function for a 2d CFT on a two-torus in the grand canonical/multi-particle ensemble. The logarithm of a Borcherds product resembles the associated partition function for the same 2d CFT in the canonical/single-particle ensemble. 

This physical picture suggests that the canonical partition functions $Z_{\rm CAN}(\tau)$,
\begin{align}
Z_{\rm GC}(\tau) 
= q^{-\Delta} \prod_{n = 1}^{\infty}  (1-q^n)^{-d(n)}
\quad \mapsto \quad 
Z_{\rm CAN}(\tau) 
= d(0) + \sum_{n = 1}^{\infty} d(n) q^n 
~, \label{eqZmap1}
\end{align}
may be of independent mathematical interest. Certainly, for the simplest example,
\begin{align}
Z_{\rm GC}(\tau) = \frac{1}{q^{1/24}} \prod_{n = 1}^{\infty} \frac{1}{(1-q^n)} = \frac{1}{\eta(\tau)}
\quad \mapsto \quad 
Z_{\rm CAN}(\tau) = \frac{1}{1-q} 
~, \label{eqZmap0}
\end{align}
the partition function in the canonical ensemble is a character of $\SL_2(\R)$. 

Juxtaposing the Borcherds products in section~\ref{svTSM2} and in~\cite{02-TSM} against the physical picture where a modular form is some ``grand canonical partition function'' for a free statistical system, suggests that the ``canonical partition functions''  associated to ${\cal H}_d(j(\tau))$,
\begin{align}
Z_{\rm GC}^{(d)}(\tau):= q^{-H(d)} \prod_{n = 1}^{\infty} (1-q^n)^{A(n^2,d)} = {\cal H}_d(j(\tau)) 
\quad \mapsto \quad
Z_{\rm CAN}^{(d)}(\tau) := \sum_{n = 1}^{\infty} A(n^2,d) q^n ~,
\end{align}
are mathematically interesting objects. However, we are unaware of any study of these functions in the literature. It might be pleasing if the other functions for $d > 0$ also have a clear meaning in terms of characters of non-compact groups, or played a more significant role in the study of e.g. traces of singular moduli. We leave such questions to future work.

\section{Summary and future directions}\label{secEnd}

{\em Context for the paper:} The motivation for this paper was to verify a conjecture of T-reflection~\cite{14T-rex0}, in the physics-agnostic setting of the mathematics of modular forms~\cite{06T-rex1, 07T-rex2}. To do this, we extended the recent map between weakly holomorphic modular forms~\cite{03-BFI, 04-BFK, 05-BFK2} and regularized L-functions/regularized Dirichlet series -- itself, an extension of the classic map between holomorphic modular forms and L-functions/Dirichlet series~\cite{01-Apostol} -- to a map between meromorphic modular forms and regularized L-functions/regularized Dirichlet series.

{\em Summary and main results of the paper:} We have developed a map between meromorphic modular forms and regularized L-functions, presented in Theorem~\ref{ThmLfn}. To do so, we had to explicitly and precisely understand the exponential growth in $q$-series coefficients, presented in Proposition~\ref{PropGrowthCap}. We then used these results to understand aspects of traces of singular moduli, when viewed as $q$-series coefficients of meromorphic modular forms. Many of these aspects hinge critically on L-functions for meromorphic modular forms, and thus are new.

In a seemingly different direction, we viewed meromorphic modular forms as models for CFT path integrals and partition functions in various ensembles in statistical mechanics. Concretely, we mapped meromorphic modular forms onto partition functions for the full multi-particle Fock-space of free 2d CFTs, and we mapped (roughly) the logarithm of meromorphic modular forms onto partition functions for the single-particle Fock-space of a the same free 2d CFTs. We then studied the L-function of the stress-energy tensor, which is the logarithmic derivative of meromorphic modular partition function, of these CFTs, and studied its special values. The stress-tensor is a weight-two meromorphic quasimodular form. Its regularized L-function has special values at $s = 0$, $s = 1$ and $s = 2$. 

In Theorem~\ref{ThmTrex} and Corollary~\ref{CorTrex}, we explicitly showed that the $s = 0$ special values of the stress-energy tensor L-function allow us to equate the central charge of a free and unitary 2d CFT with a regularized tally of the total number of states in the single-particle Fock-space. Translated back into the mathematical context of traces of singular moduli in, this is equivalent to the statement that the regularized sum of traces of singular moduli for the family of modular functions $J_n(\tau) := q^{-m} + {\cal O}(q) \in M_0^!$, when evaluated at CM points of discriminant-$d$ quadratics, equals (minus) the Hurwitz class number $H(d)$: $\reg \sum_n t_n(d) = -H(d)$. Though these statements do not have obvious practical utility, explicitly relating the central charge to the (in general) exponentially divergent sum over states in a CFT and explicitly relating the exponentially divergent sum of the mathematically interesting $t_n(d)$ to the Hurwitz class numbers, are both new and pleasing results. Further, the $s = 1$ special value of the stress-energy tensor L-function verifies the sum-rule for the T-reflection phase, conjectured in~\cite{06T-rex1, 07T-rex2}, and hints at important physical aspects of this newly found symmetry.

{\em Bugs, features, and the future:} The motivation for this paper is solidly from theoretical physics. The main advances in this paper are largely mathematical in nature. The interplay between these two is somewhere in between. Bugs, features, and future of the results in this paper, and the lines of reasoning that led to it are most easily discussed in the three following brief sections: one on completely mathematical aspects, one on the interface between mathematics and theoretical physics that spurred this note, and one on physical aspects.

\subsection{Mathematics: comments, room for improvement, and exploration}

In this section, we describe various different routes for improvement in our treatment of L-functions of meromorphic modular forms, make several comments that did not fit into the narrative of sections~\ref{secLfunction} and~\ref{secSV}, and emphasize directions for future exploration. 

{\em L-integrals when $f \in F_k$ has poles at cusps:} As emphasized in section~\ref{secLfunction}, in order to make the L-integral for meromorphic modular forms well-defined, we made use of contour deformations. It is important that we regulated the integral while keeping the integrand meromorphic. However, as discussed in~\cite{04-BFK, 05-BFK2}, when $f \in M_k^!$ has poles at cusps, it is natural to regularize the L-integral by deforming the integrand by the non-holomorphic factor $e^{2 \pi \up {\rm Im}(\tau)} (f(\tau) -c_f(0))$. When ${\rm Re}(\up)$ is larger than the order of the pole at the cusp, then the $\up$-deformed L-integral converges, and can be continued to $\up = 0$. This procedure yields the terms Eq.~\eqref{eqLemReg2}. In this sense, our results agree with and generalize the results for $L_f^*(s)^\reg$ in the literature. However, when $f \in F_k$ has a pole at cusps then there are a finite number of terms in the next line~\eqref{eqHouston}, which diverge. 

It is not clear (to me) how to reconcile the tension between non-holomorphic deformations of the integrand from~\cite{04-BFK, 05-BFK2} to regularize poles at cusps while also being able to use residue theorems and contour-deformations to regularize divergences from poles away from cusps. I will list some possible routes. First, it is possible that there is a sense in which the $\La$-regulated L-integral is given by the constant term in $L_f^*(s|\La,\e)^\reg$, when expanded for large $\La$. This would excise the finite number of problematic (divergent) terms in Eq.~\eqref{eqHouston}. Second, it would be interesting to consider a hybrid regulator, where we look at
\begin{align}
L_f^*(s|\La,\e,\A):= \int_{\g(\La,\e)} \frac{d\tau}{\tau} \left(\frac{\tau}{i}\right)^s J(\tau)^{\alpha} (f(\tau) -c_f(0))~,
\end{align}
where $J(\tau) = q^{-1} + \sum_{n \geq 1} c(n) q^n \in M_0^!$, $J(\tau)^\A = e^{\A \log J(\tau)}$ and $\log x$ is defined to be in the principal branch: $-\pi < {\rm Im}(\log z) \leq \pi$. With this deformation, the integrand should be meromorphic throughout, allowing straightforward use of residue theorems to $\e$-deform the contour away from poles away from cusps. For sufficiently large ${\rm Re}(\A)\gg1$ the poles at cusps are also regularized. This could in principle allow one to define a regularized integral of a meromorphic integrand that has a smooth limit as $\La \to \infty$ and $\e, \A \to 0$. As our chief interest was in meromorphic modular forms that are regular at cusps, we leave this question for future work.

{\em Poles, path-dependence, and wall-crossing:} Meromorphic functions have poles. To define their L-functions, we must take a contour integral. Insisting the L-function integrand be meromorphic in $\tau$, rather than e.g. real-analytic function of $t = {\rm Im}(\tau)$, then it is natural to consider what happens when a pole smoothly moves from $z$ to $\g z$, when this path crosses the contour. If $\g \in \SL_2(\Z)$, then the locations of set of poles at $\{\g z \mid \g \in \SL_2(\Z)\}$ is mapped back to itself. As the integrand has poles at all modular images of $z$, it is invariant under {\em discrete} modular transformations $z \to \g z$. However, the integral is not invariant if we {\em continuously} deform $z$ to $\g z$ if the path crosses the original contour along the imaginary-$\tau$ axis. When this happens, then the L-function picks-up residues. 

There seems to be tension here, analogous to the tension between modularity of the theta-decomposition of meromorphic Jacobi forms and modularity in Zwegers work~\cite{23-Zwegers} and others~\cite{24-DMZ}. There, as here, the non-invariance of the integral seems to be related to path-dependence, which we discussed in section~\ref{sec15-HR1919}. It would be very interesting to understand this phenomena in more detail. (Note: these residues do {\em not} alter the residues of $\tfrac1s$ or $\tfrac{i^k}{k-s}$ in $L_f^*(s)^\reg$. Special values in sections~\ref{secSV} and~\ref{secQFT} are thus unaffected by this ambiguity.)

{\em Polar Rademacher sums and numerical evaluation of $L_f^*(s)^\reg$:} It would be very interesting to systematically isolate the contribution of each individual pole to the $q$-series coefficients for $f \in F_k$ when $k > 0$, in a Rademacher-like sum. Hardy and Ramanujan did this for $1/E_6(\tau) \in F_{-6}$~\cite{15-HR1919}. Recent, beautiful, work extends this to wide classes of meromorphic modular forms $f \in F_{k}$ with negative weight~\cite{16-BialekPol1, 17-BialekPol2, 18-BialekPol3, 19-KolnPol1, 20-KolnPol2}. 

However, expressions for $q$-series coefficients for meromorphic forms with $k > 0$ seem not to be written in terms of the polylogarithms needed to explicitly relate exponential $q$-series growth due to explicit pole locations in $\HH$. If such an extension existed, by simply deleting exponential contributions to $c_f(n)$ from poles above the line ${\rm Im}(\tau) = B$, we could directly obtain rapidly convergent expressions for $c_{Rf}(n|B)$ used in $L_f^*(s)^\reg$ for any $f \in F_k$. (Note: $q$-series coefficients for some $f \in F_k$ for $k \geq 0$ in terms of exponential sums for $k > 0$ are known, e.g.~\cite{34-Duke-pole, 35-KolnPol3}. Yet, the map between terms in the exponential sums and pole locations, which follows naturally from Rademachers sums, seems less direct here.) 

{\em The constants $c_f(0)$ decouple from L-functions for $f \in F_0$:} 
The integral that defines $L_f^*(s)^\reg$ is $\int_0^{\infty} dt~t^{s-1}~(f(it)-c_f(0))$. So when $f \in F_0$, then all L-functions are just functions of the non-constant $q$-series coefficients: the $c_f(0)$ drops out. Amusingly, $c_f(0)$ is also an element of $F_0$. This may suggest a fruitful reformulation of L-functions for $f \in F_k$. Recall that when $f \in M_k$, we subtract-off the modular function called $c_f(0)$ from $f(\tau)$ within $L_f^*(s)$. This has the effect of subtracting-out the pole in Mellin-space that comes from a residue at $\tau = i \infty$. When $f \in M_k^!$ or $f \in F_k$, rather than subtracting-off the modular function called $c_f(0)$ which cancels the pole coming from $t \to 0$ and $t \to \infty$, we could subtract-off a modular function that cancels all poles along the integration contour but introduces no new poles. In this scenario, the integrand would be meromorphic in $\tau$ throughout, and subtleties of $L_f^*(s)$ would be contained within $L_f^\reg(s)$ for $f \in F_0$ (and $f \in M_0^!$).

{\em For future exploration:} 
It would be good to fit L-functions for $f \in F_k$ (and $f \in M_k^!$~\cite{04-BFK, 05-BFK2}) 
into the web of conjectures and facts about L-functions for $f \in M_k$ and $S_k$. It is natural to extend $L_f^\reg(s)$ for $f \in F_k(\G)$ to subgroups $\G < \SL_2(\Z)$, or to half-integral $k \in \tfrac12 \Z$. L-functions for weakly holomorphic modular forms with Zagier duality particularly stand-out. 

\subsection{Field theory, statistical mechanics, Hagedorn, and number theory}

There is a strong interrelation between the mathematical and physical perspectives in this paper. This goes in both directions.

{\em Borcherds products as grand canonical partition functions:} In section~\ref{secQFT}, we emphasized that if a modular form has an infinite product of the form $F(\tau):= q^{-\Delta} \prod_n (1-q^n)^{-d(n)} \in F_k$, then it can be interpreted in some sense as the partition function for the multi-particle Fock-space of a free CFT, i.e. a grand canonical partition function. In this guise, then, one can define the partition function for the associated single-particle Fock-space of the free CFT. It is given by $f(\tau):= \sum_n d(n) q^n$. (A combinatoric map relates $f$ and $F$. See section~\ref{secSTL}.) 

Physically, both $F$ and $d$ play very important roles. However, we have been unable to find any discussion of $f(\tau)$ in the mathematics literature, other than for the simple case of $F(\tau) = 1/\eta(\tau)$, where $f(\tau) = 1/(1-q)$ is a character of $\SL_2(\R)$. It would be very interesting to see if these ``single particle partition functions'' $f(\tau)$ played a comparably important role in number theory---particularly when $F$ has a Borcherds product expansion, and $f(\tau) = -\sum_n A(n^2,d) q^n$ has such a close relationship with traces of singular moduli. 

{\em Strings, modularity and Hagedorn:} As emphasized in the Introduction, both the low-energy limit of QCD and the field theory limit of string theories generically exhibit an exponential/Hagedorn rise in the number of states with energy, $d(E)$:
\begin{align}
d(E) \sim E^{\alpha} e^{\beta_H E}~.
\end{align} 
When the inverse-temperature $\beta$ approaches $\beta_H$, then the one-loop path integral/partition function $Z(\beta) = \sum_E d(E) e^{-E \beta}$ diverges. Additionally, loops of closed strings, either QCD-strings or more fundamental strings, are topologically equivalent to a torus. Thus, we expect the one-loop path integral in these theories, $Z(\beta)$, to be modular in the $\beta$-parameter. 

Putting these two features together naturally suggests that the theory of meromorphic modular forms and perhaps the related theory of meromorphic Jacobi forms~\cite{24-DMZ} may play a role in understanding the path integral and observables in these physical contexts. In this paper, we studied whether one can define a regularization for the exponentially divergent sums that appear in Casimir energies in these field theories. Ubiquity of modularity and Hagedorn growth/poles suggests L-functions for meromorphic modular forms may directly yield Casimir energies of models of low-energy QCD. It would be very interesting if other aspects of meromorphic modular or Jacobi forms interplay with low-energy QCD. 

There is at least one explicit precedent. Namely, recently, it was realized that the path integral of a famously tractable limit of QCD in four-dimensions~\cite{09-S1S3} is given by~\cite{10-2d4d1, 11-2d4d2}
\begin{align}
Z(\beta) = \prod_{n =1 }^{\infty} \frac{(1-q^n)}{(1+q^n)(1-z q^n) (1-q^n/z)} ~~,~~z = 2 + \sqrt{3}~~.
\end{align}
This can be easily recognized as the quotient of Jacobi theta-functions and Dedekind eta-functions. It has simple poles when $q^n \to z^{\pm 1}$. It would be very interesting to understand if, e.g. the wall-crossing in~\cite{24-DMZ} had some role in models of low-energy QCD, for instance in this particular path integral that is meromorphic and involves Jacobi theta functions.

{\em Fermionic symmetries:}
Finally, we make an amusing standalone observation. Let $F(\tau)= q^{-\Delta} \prod_n (1-q^n)^{-d(n)} =q^{-\Delta} \sum_n D(n) q^n$ be a weakly holomorphic modular form with zeros at finite values of $\tau$ and poles at cusps. Its $q$-series coefficients are bounded by $|D(n)| < e^{C \sqrt{n}}$ for some $C>0$. Because $F(z) = 0$ for some $0< |z| < \infty$, then Proposition~\ref{PropGrowthCap} implies $|d(n)| > E (e^{2 \pi {\rm Im}(z)})^n/n$ for some $E>0$ for infinitely many positive integers $n$. 

Thus, there are huge cancellations between the $q$-series coefficients of $f(\tau) = \sum_n d(n) q^n$ and $F(\tau) = q^{-\Delta} \sum_n D(n) q^n$. Very similar cancellations were observed in~\cite{36-Fermionic1, 37-Fermionic2, 38-Fermionic3, 39-Fermionic4} from a fermionic symmetry in non-supersymmetric models of QCD with Hagedorn growth in the number of states at a given energy $E$. Despite the lack of supersymmetry and the presence of Hagedorn growth, the bosons and fermions are almost exactly paired and cancel in the twisted path integral, which avoids Hagedorn poles at finite temperature.

\subsection{Central charges, T-reflection phases, and the stress-tensor}

Finally, in sections~\ref{secSTL} and~\ref{secTrex} we showed that both the Casimir energy and the T-reflection sum-rule of a 2d CFT are captured by the L-function of the stress-energy tensor, $L_{\langle T \rangle}^{\reg}(s)$, when evaluated respectively at $s = 0$ and $s = 1$. Central charges give the conformal anomaly of the CFT when it is on curved manifolds. So it is natural that they are captured by the stress-tensor. However, it was completely unexpected that the T-reflection sum-rules suggested in~\cite{14T-rex0, 06T-rex1} would literally appear next to the central charge, in the stress-energy tensor L-function, at $s = 0$ and $1$. Yet, this may be sensible in light of Refs.~\cite{06T-rex1, 07T-rex2}, where we noted that the T-reflection phase has a natural interpretation as a global gravitational anomaly. It seems existentially important to better understand the connection between the T-reflection phases and the regularized L-function for the stress-energy tensor of 2d CFTs.

\section*{Acknowledgements}

This work was supported by the Niels Bohr International Academy (NBIA), and by a Carlsberg Distinguished Postdoctoral Fellowship (CF16-0183) at the NBIA. I would like to thank Kathrin Bringmann for the opportunity to present this work prior to posting, and for correspondence. I would like to thank Kathrin Bringmann, Ben Kane, Michael Mertens, Ken Ono and Sander Zwegers for related discussions. 
I would also like to thank iNes Aniceto for collaboration during early stages of this work. Her input gave extremely important data in support of the sum-rules in Theorem~\ref{Thm2} and Corollary~\ref{Cor1}. Finally, I would like to thank John Duncan for extensive discussions and collaboration at early stages of this project and collaboration on the related project~\cite{30T-rex2-GL2}.

\appendix

\section{T-reflection sum-rules when $\langle T(\tau) \rangle$ is holomorphic and quasimodular}\label{secEg1}

Here, we motivate the T-reflection sum-rules for the special case of a free scalar CFT where $Z(\tau) = Z_{\rm GC}(\tau) = 1/\eta(\tau) \in M_{1/2}^!$, i.e. where the path integral has divisors only at the zero-/infinite-temperature cusp. This formal analysis originally appeared in~\cite{14T-rex0}, and was recently explained in greater detail in~\cite{06T-rex1}. 

To start, we show that $1/\eta(\tau)$ corresponds to the path integral for a single free scalar CFT on the two-torus. If we think of the rectangular two-torus as the direct product of a periodic line element,  $S^1_L$, with the thermal circle, $S^1_{\beta}$, then we can easily see that the single-particle Fock-space for a single scalar particle is simply that of a particle with momentum $p_n = n/L$ with $n = 0, 1, 2, \ldots \in \Z_{\geq 0}$. Further, we define $2 \pi i \tau:= -\beta/L$ and $q := e^{-\beta/L} = e^{2 \pi i \tau}$.

As there is only one scalar, and only one direction in which to move, we find that 
\begin{align}
d(n) = 1~{\rm for}~n = 0, 1, 2 \ldots~\implies~Z_{\rm CAN}(\tau) = \sum_{n = 0}^{\infty} d(n) q^n = \frac{1}{1-q} .
\end{align}
If we apply the combinatoric map between the single-particle and multi-particle Fock-spaces for this free CFT, we find that $Z_{\rm GC}(\tau)$ evaluates to 
\begin{align}
Z_{\rm GC}(\tau) = q^{-\Delta} \prod_{n = 1}^{\infty} \frac{1}{1-q^n}~.
\end{align}
Now, if we take the logarithmic derivative of $Z_{\rm GC}(\tau)$ to find the one-point correlation function of this 2d CFTs stress-energy tensor, we find it equals the following expression:
\begin{align}
\langle T(\tau) \rangle = \frac{1}{2 \pi i} \frac{d}{d\tau} \log \bigg( q^{-\Delta} \prod_{n = 1}^{\infty} \frac{1}{1-q^n}\bigg) = -\Delta - \sum_{n = 1}^{\infty} q^n \bigg( \sum_{m|n} m \bigg) ~.
\end{align}
It is straightforward to compute that the Mellin transform, i.e. the L-function, for the one-point correlation function for the stress-energy tensor of this 2d CFT is
\begin{align}
L_{\langle T \rangle}(s) = \frac{(2 \pi)^s}{\G(s)} \int_0^{\infty} \frac{dt}{t} t^s ( \langle T(it) \rangle + \Delta) = -\zeta(s) \zeta(s-1)~.
\end{align}
Because $\langle T(\tau) \rangle$ neither diverges nor vanishes at any finite value of $\tau = \beta/(2 \pi i L)$, the regularization procedure developed in this paper is not necessary to define $L_f(s)$. 

We can also directly compute the L-function of $Z_{\rm CAN}(\tau)$, and find that it equals
\begin{align}
L_{Z_{\rm CAN}}(s) = \frac{(2\pi)^s}{\G(s)} \int_0^{\infty} \frac{dt}{t} t^s \sum_{n = 1}^{\infty} e^{-2 \pi n t} = \zeta(s)~.
\end{align}
It is important to note that the formal factorization in Eq.~\eqref{eqLcan1} would imply that $L_{Z_{\rm CAN}}(s) := -L_{\langle T \rangle}(s+1)/\zeta(s+1)$. Crucially, these computations match. (Note that $\langle T (\tau) \rangle = -E_2(\tau) + {\cal O}(q^0)$, and is tightly related to holomorphic quasimodular forms. See Eq.~\eqref{eqThmTrex4} for an explicit expression for the L-function of $E_2(\tau)$.)

Finally, we would like to formally motivate the T-reflection sum-rules in section~\ref{secTrex} for this special case. First, note that $Z_{\rm GC}(\tau)$ resembles a product of decoupled products of harmonic oscillator partition functions, $1/(1-q^n)$, stripped of their zero-point energies $q^{n/2}$:
\begin{align}
\prod_n Z_{\rm osc.}^{(n)}(\tau) := \prod_n \frac{q^{n/2}}{1-q^n}~.
\end{align}
We may recover the grand canonical partition function by taking a large (infinite) collection of decoupled oscillators that have been properly endowed with their zero-point energies $q^{n/2}$. Now, note that if we regulate the sum of exponents $\sum_n n/2$ by the L-function for the canonical ensemble, i.e. $\zeta(s)$, then we would find
\begin{align}
\lim_{N \to \infty} \Reg \prod_{n = 1}^{N} Z_{\rm osc.}^{(n)}(\tau) = \lim_{N \to \infty} \Reg \prod_{n = 1}^{N} \frac{q^{n/2}}{1-q^n} = q^{\frac{1}{2} \zeta(-1)} \prod_{n = 1}^{\infty} \frac{1}{1-q^n} = \frac{1}{\eta(\tau)}~.
\end{align}
Further, note that each decoupled oscillator is invariant under T-reflection~\cite{14T-rex0, 06T-rex1}. (This follows from the fact that $q^{n/2}/(1-q^{n})$ equals $\sin(\pi n \tau)^{-1}$, an odd function of $\tau$.) Thus:
\begin{align}
\prod_n Z_{\rm osc.}^{(n)}(\beta) = \prod_n \frac{q^{n/2}}{1-q^n} \to 
\prod_n Z_{\rm osc.}^{(n)}(-\tau)= \prod_n \frac{q^{-n/2}}{1-q^{-n}} = \prod_n \frac{(-1)~q^{n/2}}{1-q^n}~.
\end{align}
Again, we take the limit where the number of oscillators goes to infinity, which is needed in order for the states accessed in the partition function to span the full multi-particle Fock-space, {\em but now at negative temperature}. Here, we must regulate the divergent product of zero-point energies {\em and} the divergent product of $(-1)$-factors. Regulating with $\zeta(s)$, we find:
\begin{align}
\!\!\!\!\!\!
\lim_{N \to \infty} \Reg \prod_{n = 1}^{N} Z_{\rm osc.}^{(n)}(-\tau) = \lim_{N \to \infty} \Reg \prod_{n = 1}^{N} \frac{(-1)~q^{n/2}}{1-q^n} = q^{\frac{1}{2} \zeta(-1)} (-1)^{\zeta(0)} \prod_{n = 1}^{\infty} \frac{1}{1-q^n} = \frac{(-i)}{\eta(\tau)}~.\!\!
\end{align}
Thus, we see that the leading power of $q$ in the $q$-series of this particular path integral is fixed by T-reflection. Further, the eigenvalue under T-reflection is $e^{i \g}= -i$, consistent with the sum-rule in Corollary~\ref{CorTrex} and with Refs.~\cite{06T-rex1, 07T-rex2, 14T-rex0, 30T-rex2-GL2}, where $e^{i \g} = (-1)^k = (-1)^{\reg \sum_n d(n)}$.

\end{document}